\newcommand{\Mod}[1]{\ (\mathrm{mod}\ #1)}
\newcommand{\tpmod}[1]{{\@displayfalse\pmod{#1}}}
\newlength\OneImW
\newlength\BigOneImW
\newlength\twofigwidth
\newlength\ThreeImW
\newlength\FourImW
\newlength\DoubleThreeImW
\newlength\sfigwidth
\newlength\vfigskip
\newlength\figsep
\newtheorem{Proposition}{Proposition}
\newtheorem{Property}{Property}
\newtheorem{Corollary}{Corollary}
\newtheorem{theorem}{Theorem}
\newtheorem{lemma}{Lemma}
\DeclareMathOperator*{\lcm}{lcm}
\newcommand{\mtx}[1]{\bm{#1}}
\newcommand{\Fee}{\mtx{\Phi}}
\begin{document}

\title{The Graph Structure of the Generalized Discrete Arnold's Cat Map}

\author{Chengqing Li, Kai Tan, Bingbing Feng, Jinhu L\"u
\thanks{This work was supported by the National Natural Science Foundation of China (no.~61772447, 61532020).}

\thanks{C. Li is with College of Computer Science and Electronic Engineering, Hunan University, Changsha 410082, Hunan, China (DrChengqingLi@gmail.com).}

\thanks{K. Tan and B. Feng are with College of Information Engineering, Xiangtan University, Xiangtan 411105, Hunan, China.}

\thanks{J. L\"u is with School of Automation Science and Electrical Engineering, Beihang University, Beijing 100083, China}
}

\markboth{IEEE Transactions}{Li \MakeLowercase{\textit{et al.}}}

\IEEEpubid{\begin{minipage}{\textwidth}\ \\[12pt] \centering
1549-8328 \copyright 2019  IEEE. Personal use is permitted, but republication/redistribution requires IEEE permission.\\
  See http://www.ieee.org/publications\_standards/publications/rights/index.html for more information.
   \\ \scriptsize \today{} (\currenttime)
\end{minipage}}

\maketitle

\begin{abstract}
Chaotic dynamics is an important source for generating pseudorandom binary sequences (PRNS).
Much efforts have been devoted to obtaining period distribution of the generalized discrete Arnold's Cat map
in various domains using all kinds of theoretical methods, including
Hensel's lifting approach. Diagonalizing the transform matrix of the map,
this paper gives the explicit formulation of any iteration of the generalized Cat map. Then, its real graph (cycle) structure in any binary arithmetic domain is disclosed. The subtle rules on how the cycles (itself and its distribution) change with the arithmetic precision $e$ are elaborately investigated and proved. The regular and beautiful patterns of Cat map demonstrated in a computer adopting fixed-point arithmetics are rigorously proved and experimentally verified. The results will facilitate research on dynamics of variants of the Cap map in any domain and its effective application in cryptography.
In addition, the used methodology can be used to evaluate randomness of PRNS generated by
iterating any other maps.
\end{abstract}
\begin{IEEEkeywords}
cycle structure, chaotic cryptography, fixed-point arithmetic, generalized Cat map,
period distribution, PRNS, pseudorandom number sequence.
\end{IEEEkeywords}

\section{Introduction}

\IEEEPARstart{P}{eriod} and cycle distribution of chaotic systems are fundamental
characteristics measuring their dynamics and function, and supporting their practical values \cite{Percival:CAT:PD87,shi:homo:NSR19}. As the most popular application form, various digitized chaotic systems were constructed or enhanced as a source of producing random number sequences: Tent map \cite{PAPADOPOULOS:Tent:TIT1995},
Logistic map \cite{Chen:Logistic:TCASII10,garcia2018chaos:TIM18}, Cat map \cite{Hua:cat:TC2018}, Chebyshev map of even degree \cite{Kohda:ITIT:1997}, piecewise linear map \cite{Umeno:ITIT:2013},
and Chua's attractor \cite{cqli:Diode:TCASI19}. Among them,
Arnold's Cat map
\begin{equation}
f(x, y)=(x+y, x+2y) \bmod 1
\label{eq:oriArnold}
\end{equation}
is one of the most famous chaotic maps, named after Vladimir Arnold, who
heuristically demonstrated its stretching (mixing) effects using an image of a Cat in \cite[Fig. 1.17]{arnol1968mathematical}. Attracted by the simple form but complex dynamics of Arnold's Cat map, it is adopted as a hot research object in various domains: quadratic field
\cite{Percival:CAT:PD87}, two-dimensional torus \cite{Barash:CAT:PRE2006,Isaeva:cat:PRE2006,Ermann:Cat:PDP2012}, \cite{Okayasu:PAMS:2010}, finite-precision digital computer \cite{chen2012periodpe,Catchen2013period2e}, quantum computer \cite{Kurlberg:cat:AM2005,Horvat:cat:JPAT2007,Moudgalya:PR:2019}. In \cite{Penrose:entropy:PTRSAPES13}, Cat map is used as an example to define a microscopic entropy of chaotic systems. The nice properties of Cat map demonstrated in the infinite-precision domains seemingly support that it is widely used in many cryptographic applications, e.g.
chaotic cryptography \cite{Farajallah:IJBC:2016}, image encryption \cite{Chenlei:CBM:2015}, image privacy protection \cite{cqli:IEAIE:IE18,cqli:meet:JISA19}, hashing scheme \cite{Kanso:hash:ND2015}, PRNG \cite{Falcioni:PRNS:PRE2005,Barash:CAT:PRE2006}, random perturbation \cite{Yarmola:cat:ETDS2011}, designing unpredictable path flying robot \cite{Curiac:path:DSJ2015}.

Recently, the dynamics and randomness of digital chaos are investigated from the perspective of functional graphs (state-mapping networks) \cite{cqli:autoblock:IEEEM18}. In \cite{cqli:network:TCASI2019}, how the structures of Logistic map and Tent map change with the implementation precision $e$ is theoretically proved. Some properties on period of a variant of Logistic map over Galois ring $\mathbb{Z}_{3^{e}}$ are presented \cite{Yang:Logsitic:SP2018,Li:IJPRAI:2019}. In \cite{Frahm:PR:2018},
the phase space of cat map is divided into some uniform Ulam cells, and the associated directed complex network is built with respect to mapping relationship between every pair of cells. Then, the average path length of the network is used to measure the underlying dynamics of Cat map. In \cite{Victor:detect:RCD18}, the elements measuring phase space structures of Cat map, fixed points, periodic orbits and manifolds (stable or unstable), are detected with Lagrangian descriptors. In \cite{Daniel:graph:DCC2019}, the functional graph of general linear maps over finite fields is studied with various network parameters, e.g.
the number of cycles and the average of the pre-period (transient) length. To quickly calculate the maximal transient length, fixed points
and periodic limit cycles of the functional graph of digital chaotic maps, a fast period search algorithm using a tree structure is designed in \cite{Ding:period:ND19}.

\IEEEpubidadjcol 

The original Cat map~(\ref{eq:oriArnold}) can be attributed to the general matrix form
\begin{equation}
f(\textbf{x})=(\Fee \cdot \textbf{x})\bmod N,
\label{eq:ArnoldMatrix}
\end{equation}
where $N$ is a positive integer, $\textbf{x}$ is a vector of size $n\times 1$, and $\Fee$ is a matrix of size $n\times n$.
The determinant of the transform matrix $\Fee$ in Eq.~(\ref{eq:ArnoldMatrix}) is one, so the original Cat map is area-preserving.
Keeping such fundamental characteristic of Arnold's Cat map unchanged, it can be generalized or extended via various strategies:
changing the scope (domain) of the elements in $\Fee$ \cite{Chen:CSF:2004}; extending the transform matrix to 2-D, 3-D and even any higher dimension \cite{kwok2007generalCatMap,Hua:cat:TC2018}; modifying the modulo $N$ \cite{Wu:catmapITC:2016}; altering the domain of some parameters or variables \cite{Sano:sawtooth:PRE2002}.

Among all kinds of generalizations of Cat map, the one in 2-D integer domain
received most intensive attentions due to its direct application on permuting position of elements of image data, which can be represented as
\begin{equation}
f
\begin{bmatrix}
		x_{n} \\
		y_{n}
	\end{bmatrix}=
	\begin{bmatrix}
		x_{n+1} \\
		y_{n+1}
	\end{bmatrix}
	=
	\textbf{C}\cdot
	\begin{bmatrix}
	x_{n} \\
	y_{n}
	\end{bmatrix}\bmod N,
	\label{eq:ArnoldInteger}
\end{equation}
where
\begin{equation}
\textbf{C}=
\begin{bmatrix}
1 & p    \\
q & 1+p\cdot q
\end{bmatrix},
\label{eq:MatMatrix}
\end{equation}
$x_n$, $y_n\in \mathbb{Z}_N$, and $p, q, N\in \mathbb{Z}^+$.

In this paper, we refer to the generalized Cat map~(\ref{eq:ArnoldInteger})
as \textit{Cat map} for simplicity. In \cite{dyson1992periodCat}, the upper and lower bounds of the period of Cat map~(\ref{eq:ArnoldInteger}) with $(p, q)=(1, 1)$
are theoretically derived. In \cite{Bao:ND:2012}, the corresponding properties of Cat map (\ref{eq:ArnoldInteger}) with $(p, q, N)$ satisfying some constrains
are further disclosed. In \cite{chen2012periodpe,Chen:linearCNSNS:2012,Catchen2013period2e,chen2014period:TCS14}, F. Chen systematically analyzed the precise period distribution of Cat map (\ref{eq:ArnoldInteger}) with any parameters. The whole analyses are divided into three parts according to influences on algebraic properties of $(\mathbb{Z}_{N}, +, \cdot)$ imposed by $N$: a Galois field when $N$ is a prime \cite{Chen:linearCNSNS:2012}; a Galois ring when $N$ is a power of a prime \cite{chen2012periodpe,Catchen2013period2e}; a commutative ring when $N$ is a common composite \cite{chen2014period:TCS14}.
According to the analysis methods adopted, the second case is further divided into two sub-cases $N=p^e$ and $N=2^e$, where $p$ is a prime larger than or equal to 3 and $e$ is an integer. From the viewpoint of real applications in digital devices, the case of Galois ring $\mathbb{Z}_{2^e}$ is of most importance since it is isomorphic to the set of numbers represented by $e$-bit fixed-point arithmetic format with
operations defined in the standard for arithmetic of computer.

The period of a map over a given domain is the least common multiple
of the periods of all points in the domain. Confusing periods of two different objects causes some misunderstanding on impact of the knowledge about period distribution of Cat map in some references like \cite{Bao:ND:2012}. What's worse, the local
properties of Arnold's Cat map are omitted.
Diagonalizing the transform matrix of Cat map with its eigenmatrix, this paper derives the explicit representation of any iteration of Cat map. Then, the evolution properties of the internal structure
of Cat map (\ref{eq:ArnoldInteger}) with incremental increase of $e$
are rigorously proved, accompanying by some convincing experimental results.

The rest of this paper is organized as follows. Section
~\ref{sec:Previous} gives previous works on deriving the period distribution of Cat map. Section~\ref{sec:main} presents some properties on structure of Cat map. Application of the obtained results are discussed in
Sec.~\ref{sec:apply}. The last section concludes the paper.

\newcommand{\minitab}[2][l]{\begin{tabular}{#1}#2\end{tabular}}

\setlength\tabcolsep{2pt} 
\addtolength{\abovecaptionskip}{0pt}
\renewcommand{\arraystretch}{1.2}
\begin{table*}[!htb]
	\caption{The conditions of $(p, q)$ and the number of their possible cases, $N'_T$, corresponding to a given $T$.}
	\centering 
	\begin{tabular}{*{3}{c|}c} 
		\hline 
		$T$  & $p$ & $q$ & $N'_T$ \\ \hline
		1    & 0   &  0  &  1     \\ \hline
		\multirow{2}{*}{2}   & $p\bmod 2^{e}=2^{e-1}$  & $q\bmod 2^{e-1}=0$    &   2   \\ \cline{2-4}
		& $q\bmod 2^{e}=0$      & $p\bmod 2^{e}=2^{e-1}$ &    1      \\\hline
		3  &  $p\equiv 1 \bmod 2$    &  $q\equiv p^{-1}(2^e-3)\bmod 2^e$   &  $2^{e-1}$   \\ \hline
		\multirow{6}{*}{4}   & $p\equiv 1 \bmod 2$        &  $q\equiv p^{-1}(2^e-2)\bmod 2^e$              &   $2^{e-1}$            \\ \cline{2-4}   	
		& $p\equiv 0 \bmod 2$,  $p\not\equiv 0 \bmod 4$   &  $q\equiv (p/2)^{-1}(2^{e-1}-1)\bmod 2^{e-1}$   &  $2^{e-1}$  \\  \cline{2-4}   	
		&  $p\equiv 1 \bmod 2$    & $q\equiv p^{-1}(2^{e-1}-2)\bmod 2^e$  &    $2^{e-1}$   \\ \cline{2-4}   	
		&  $p\equiv 0 \bmod 2$    &  $q\equiv (p/2)^{-1}(2^{e-1}-2)\bmod 2^e$  &    $2^{e-1}$  \\ \cline{2-4}
		& $p\bmod 2^{e-1}=2^{e-2}$  & $q\bmod 2^{e-2}=0$ &  8     \\ \cline{2-4}
		& $q\bmod 2^{e-1}=0$  & $p\bmod 2^{e-1}=2^{e-2}$ &  4      \\ \hline
		\multirow{3}{*}{6}    &  \multirow{3}{*}{$p\equiv 1 \bmod 2$}        &  $q\equiv p^{-1}(2^{e-1}-3)\bmod 2^e$ & $2^{e-1}$   \\  	 \cline{3-4}
		&  & $q\equiv p^{-1}(2^{e-1}-1)\bmod 2^e$       & $2^{e-1}$   \\\cline{3-4}   	
		& & $q\equiv p^{-1}(2^{e}-1) \bmod 2^e$    & $2^{e-1}$   \\\hline
		\multirow{4}{*}{\minitab[c]{$2^k$,\\ $k\in\{3, 4, \ldots, e-1\}$}}      	
		& $p\equiv 1 \bmod 2$           &  $q\equiv p^{-1}(2^{e-k+1}l-2)\bmod 2^e$, $l\equiv 1 \bmod 2$, $l\in [1, 2^{k-1}-1]$    & $2^{e+k-3}$    \\   \cline{2-4}
		& $p\equiv 0 \bmod 2$        &  $q\equiv (p/2)^{-1}(2^{e-k+1}l-2)\bmod 2^e$, $l\equiv 1 \bmod 2$, $l\in [1, 2^{k-1}-1]$   & $2^{e+k-3}$  \\  \cline{2-4}
		& $p\bmod 2^{e-k+1}=2^{e-k}$ & $q\bmod 2^{e-k}=0$ & $2^{2k-1}$  \\\cline{2-4}
		& $p\bmod 2^{e-k+1}=0$ & $q\bmod 2^{e-k+1}=2^{e-k}$ & $2^{2k-2}$ \\  \hline
		\multirow{2}{*}{$2^e$} & $p\equiv 1 \bmod 2$       &   $q\equiv 0 \bmod 4$    &  $2^{2e-3}$  \\  \cline{2-4}
		& $p\equiv 0 \bmod 4$       &   $q\equiv 1 \bmod 2$    &  $2^{2e-3}$    \\  \hline
		\multirow{2}{*}{\minitab[c]{$3\cdot 2^k$,\\ $k\in\{2, 3, \ldots, e-2\}$}}
		& \multirow{2}{*}{$p\equiv 1 \bmod 2$}     &  $q\equiv p^{-1}(2^{e-k}l-3)\bmod 2^e$, $l\equiv 1 \bmod 2$, $l\in [1, 2^{k}]$    &  $2^{e+k-2}$   \\  \cline{3-4}
		&    & $q\equiv p^{-1}(2^{e-k+1}l-1)\bmod 2^e$, $l\equiv 1 \bmod 2$, $l\in [1, 2^{k-1}-1]$   & $2^{e+k-2}$     \\ \hline 	
	\end{tabular}
	\label{table:num2} 
\end{table*}

\section{The previous works on the period of Cat map}
\label{sec:Previous}

To make the analysis on the (overall and local) structure of Cat map complete, the previous related elegant results are briefly reviewed in this section.

When $(p, q)=(1, 1)$,
$
\textbf{C}=
\begin{bsmallmatrix}
1   & 1   \\
1   & 2
\end{bsmallmatrix}=
\begin{bsmallmatrix}
0   & 1   \\
1   & 1
\end{bsmallmatrix}^2
$, the period problem of Cat map~(\ref{eq:ArnoldInteger})
can be transformed as the divisibility properties of Fibonacci numbers \cite{dyson1992periodCat}. Then, the known theorems about Fibonacci numbers are used to
obtain the upper and lower bounds of the period of Cat map~(\ref{eq:ArnoldInteger}), i.e.
\begin{equation}
\log_{\lambda_+}(N)<T\le 3N,
\label{eq:p1q1Bound}
\end{equation}
where $\lambda_+=(1+\sqrt{5})/2$. Under specific conditions on prime decomposition forms of $N$ or the parity of $T$, the two bounds in Eq.~(\ref{eq:p1q1Bound}) are further optimized
in \cite{dyson1992periodCat}.

When $N$ is a power of two, as for any $(p, q)$, \cite{Catchen2013period2e} gives possible representation form of the period of Cat map~(\ref{eq:ArnoldInteger}) over Galois ring $\mathbb{Z}_{2^e}$, shown in Property~\ref{prop:period}. Furthermore, the relationship between $T$ and the number of
different Cat maps possessing the period, $N_T$, is precisely derived:
\begin{IEEEeqnarray}{rCl}
	\IEEEeqnarraymulticol{3}{l}{N_T=}   \nonumber \\
	\begin{cases}
		1                         & \mbox{if } T=1;  \\
		3                         & \mbox{if } T=2;  \\	
		2^{e+1}+12                & \mbox{if } T=4;  \\
		2^{e-1}+2^{e}             & \mbox{if } T=6;  \\
		2^{e+k-2}+3\cdot 2^{2k-2} & \mbox{if } T=2^k,\ k\in \{3, 4, \cdots, e-1\};\\
		2^{2e-2}                  & \mbox{if } T=2^e; \\
		2^{e+k-1}                 & \mbox{if } T=3\cdot 2^k, k\in \{0, 2, 3, \cdots, e-2\},
	\end{cases}
	\label{eq:numberMaps}
\end{IEEEeqnarray}
where $e\ge 4$.

\begin{Property}
	The representation form of $T$ is determined by parity of $p$ and $q$:
	\begin{equation}
	T=
	\left.\begin{cases}
	2^k,           & \mbox{if }  2\mid p \mbox{ or } 2\mid q;\\
	3\cdot 2^{k'}, & \mbox{if }  2\nmid p \mbox{ and } 2\nmid q.
	\end{cases}\right.
	\end{equation}
	where $k\in \{0, 1, \cdots, e\}$, $k'\in \{0, 1, \cdots, e-2\}$.
	\label{prop:period}
\end{Property}

When $e=3$,
\begin{IEEEeqnarray*}{rCl}
N_T=\begin{cases}
	1              & \mbox{if } T=1;  \\
	3              & \mbox{if } T=2;  \\	
    2^{e-1}        & \mbox{if } T=3;  \\	
	2^{e+1}+12     & \mbox{if } T=4;  \\
	2^{e-1}+2^{e}  & \mbox{if } T=6;  \\		
	2^{2e-2}       & \mbox{if } T=8,
\end{cases}
\end{IEEEeqnarray*}
which cannot be presented as the general form (\ref{eq:numberMaps}) as \cite[Table III]{Catchen2013period2e}, e.g. $2^{2e-2}\neq 2^{e+k-2}+3\cdot 2^{2k-2}$ when $e=k=3$. From Eq.~(\ref{eq:numberMaps}), one can see that there are $(1+3+2^{e-1}+2^{e+1}+12+2^{e-1}+2^{e})=2^{e+2}+16$ generalized Arnold's maps
whose periods are not larger than 6, which is a huge number for ordinary digital computer, where $e\ge 32$.

The generating function of the sequence generated by iterating Cat map~(\ref{eq:ArnoldInteger}) over $\mathbb{Z}_{2^e}$ from initial point $(x_0, y_0)$ can be represented as
\begin{equation*}
X(t)=\frac{g_x(t)}{f(t)}
\end{equation*}
and
\begin{equation*}
Y(t)=\frac{g_y(t)}{f(t)},
\end{equation*}
where
\begin{equation*}
\begin{bmatrix}
g_x(t) \\
g_y(t)
\end{bmatrix}=
\begin{bmatrix}
-1-p\cdot q  & p    \\
q            & -1
\end{bmatrix}\cdot
\begin{bmatrix}
x_0 \\
y_0
\end{bmatrix}\cdot t+
\begin{bmatrix}
x_0 \\
y_0
\end{bmatrix},
\end{equation*}
and
\begin{equation*}
f(t)=t^2-((pq+2) \bmod 2^e)\cdot t+1.
\end{equation*}
Referring to Property~\ref{prop:multiplecycle}, when $p$ and $q$ are not both even, the period of Cat map is equal to the period of $f(t)$. So the period problem of Arnold's Cat map becomes that of a decomposition part of its generation function. First, the number of distinct Cat maps possessing a specific period over
$\mathbb{Z}_{2}[t]$ is counted. Then, the analysis is incrementally extended to $\mathbb{Z}_{2^e}[t]$ using the Hensel's lifting approach.
As for any given value of the period of Arnold's Cat map, all possible values of the corresponding $(p, q)$ are listed in Table~\ref{table:num2} \footnote{
To facilitate reference of readers, we re-summarized the results in \cite{Catchen2013period2e} in a concise and straightforward form.}.

\begin{Property}
As for Cat map~\eqref{eq:ArnoldInteger} implemented over $(\mathbb{Z}_{2^{e}}, +, \cdot)$,
there is one point in the domain, whose period is a multiple of the period of any other points.
\label{prop:multiplecycle}	
\end{Property}

\section{The structure of Cat map over $(\mathbb{Z}_{2^e}, +,\ \cdot\ )$}
\label{sec:main}

First, some intuitive properties of Cat map over $(\mathbb{Z}_{2^e}, +,\ \cdot\ )$ are presented. Then, some general properties of Cat map over
$(\mathbb{Z}_{N}, +,\ \cdot\ )$ and $(\mathbb{Z}_{2^e}, +,\ \cdot\ )$ are given,
respectively. Finally, the regular graph structures of Cat map over $(\mathbb{Z}_{2^e}, +,\ \cdot\ )$ are disclosed
with the properties of two parameters of Cat map's explicit presentation matrix.

\subsection{Properties of funtional graph of Cat map over $(\mathbb{Z}_{2^e}, +,\ \cdot\ )$}
\label{ssec:smn}

\setlength{\textfloatsep}{5pt}
\setlength{\abovecaptionskip}{3pt}

\begin{figure*}[!htb]
	\centering
	\begin{minipage}[t]{0.4\twofigwidth}
		\centering\hfill
		\raisebox{0.25\twofigwidth}{
			\includegraphics[width=0.3\twofigwidth]{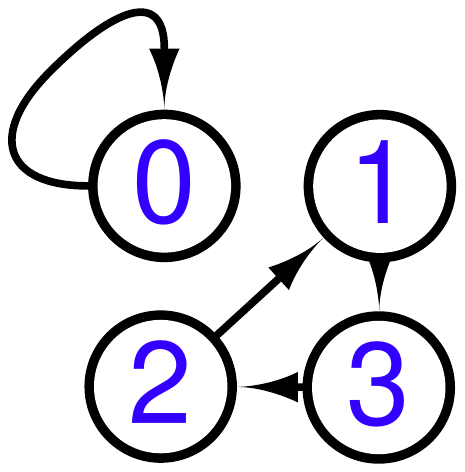}}
		a)
	\end{minipage}\hspace{6em}
	\begin{minipage}[t]{0.65\twofigwidth}
		\centering
		\raisebox{0.15\twofigwidth}{\hfill
			\includegraphics[width=0.65\twofigwidth]{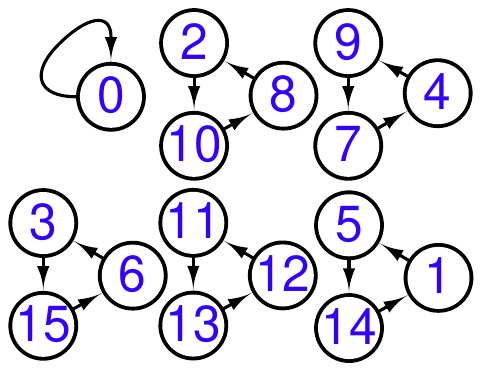}}
		b)
	\end{minipage} \hspace{6em}
	\begin{minipage}[t]{\twofigwidth}
		\centering
		\includegraphics[width=\twofigwidth]{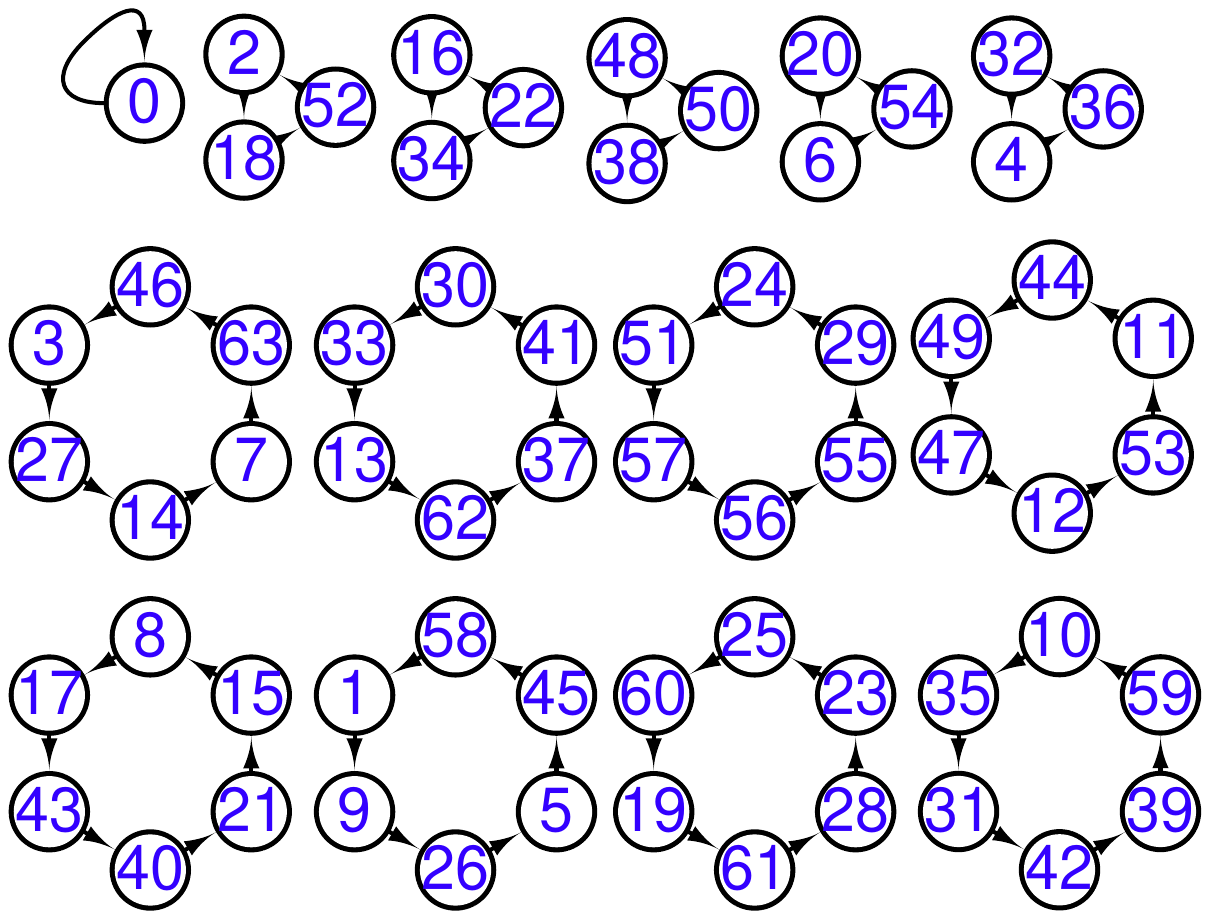}
		c)
	\end{minipage}\vspace{0.2em}
	\begin{minipage}{1.8\BigOneImW}
		\centering
		\includegraphics[width=1.8\BigOneImW]{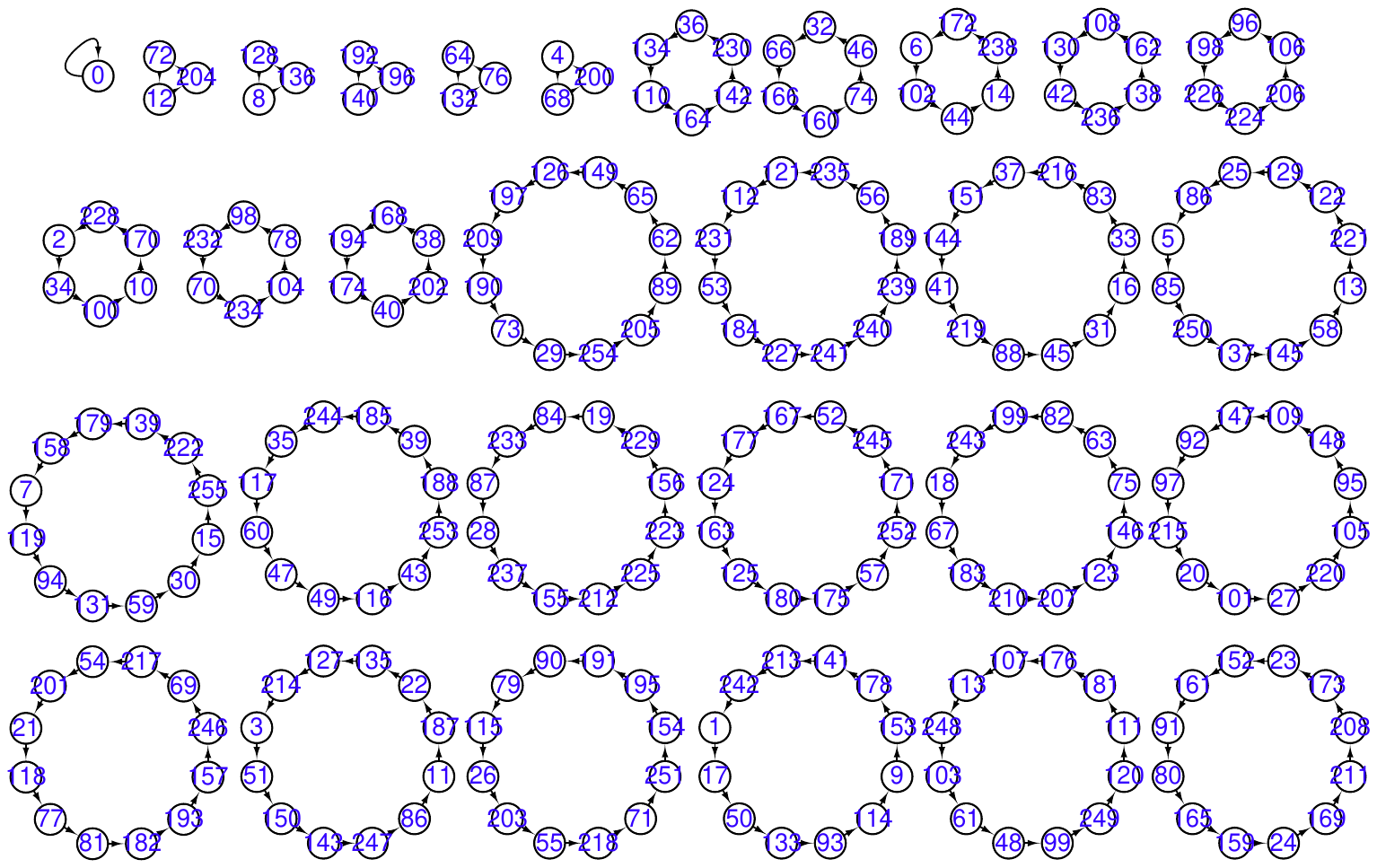}
		d)
	\end{minipage}
	\caption{Functional graphs of generalized Arnold's Cat maps in $\mathbb{Z}_{2^e}$ where $(p, q)=(1, 1)$:
		a) $e=1$; b) $e=2$; c) $e=3$; d) $e=4$.}
\label{fig:SMNcat}
\end{figure*}

Functional graph of Cat map (\ref{eq:ArnoldInteger}) can provide direct perspective on its structure.
The associate \emph{functional graph} $F_e$ can be built as follows: the $N^2$ possible states are viewed as $N^2$ nodes;
the node corresponding to $\textbf{x}_1=(x_1, y_1)$ is directly linked to the other one corresponding to $\textbf{x}_2=(x_2, y_2)$
if and only if $\textbf{x}_2=f(\textbf{x}_1)$ \cite{cqli:network:TCASI2019}. To facilitate visualization as a 1-D network data, every 2-D vector in Cat map~(\ref{eq:ArnoldInteger})
is transformed by a bijective function $z_n=x_{n} + (y_{n} \cdot N)$. To describe how the functional graph of Cat map~(\ref{eq:ArnoldInteger}) change with the arithmetic precision $e$, let
\begin{equation}
z_{n, e} = x_{n, e} + (y_{n, e} \cdot 2^e),
\label{eq:quantization}
\end{equation}
where $x_{n, e}$ and $y_{n, e}$ denote $x_{n}$ and $y_{n}$ of Cat map~(\ref{eq:ArnoldInteger}) with $N=2^e$, respectively.

As a typical example, we depicted the functional graphs of Cat map~(\ref{eq:ArnoldInteger}) with $(p, q)=(1, 1)$ in four domains $\{\mathbb{Z}_{2^e}\}_{e=1}^4$ in Fig.~\ref{fig:SMNcat}, where the number inside each circle (node) is $z_{n, e}$ in $F_e$. From Fig.~\ref{fig:SMNcat}, one can observe some general properties of functional graphs of Cat map~(\ref{eq:ArnoldInteger}). Especially, there are only cycles, no any transient. The properties on permutation are concluded in Properties~\ref{prop:bijective}, \ref{prop:onecycle}.

\begin{Property}
Cat map~(\ref{eq:ArnoldInteger}) defines a bijective mapping on the set $(0, 1, 2, \cdots, N^{2}-1)$.
\label{prop:bijective}
\end{Property}
\begin{proof}
As Cat map \eqref{eq:ArnoldInteger} is area-preserving on its domain, it defines a bijective mapping on $\mathbb{Z}^2_{N}$,
which is further transformed into a bijective mapping on $\mathbb{Z}_{N^2}$ by conversion function~(\ref{eq:quantization}).
\end{proof}

\begin{Property}
As for a given $N$, any node of functional graph of Cat map~(\ref{eq:ArnoldInteger}) belongs one and only one cycle, a set of nodes such that
Cat map~(\ref{eq:ArnoldInteger}) iteratively map them one to the other in turn.
\label{prop:onecycle}
\end{Property}
\begin{proof}
Referring to \cite[Theorem 5.1.1]{hall1959marshall}, the set $(0, 1, 2, \cdots, N^{2}-1)$ is divided into some disjoint subsets such that Cat map~(\ref{eq:ArnoldInteger}) is a cycle on each subset.
\end{proof}

\setlength\FourImW{0.13\columnwidth}
\begin{figure}[!htb]
	\centering
	\begin{minipage}{\FourImW}
		\centering
		\includegraphics[width=\FourImW]{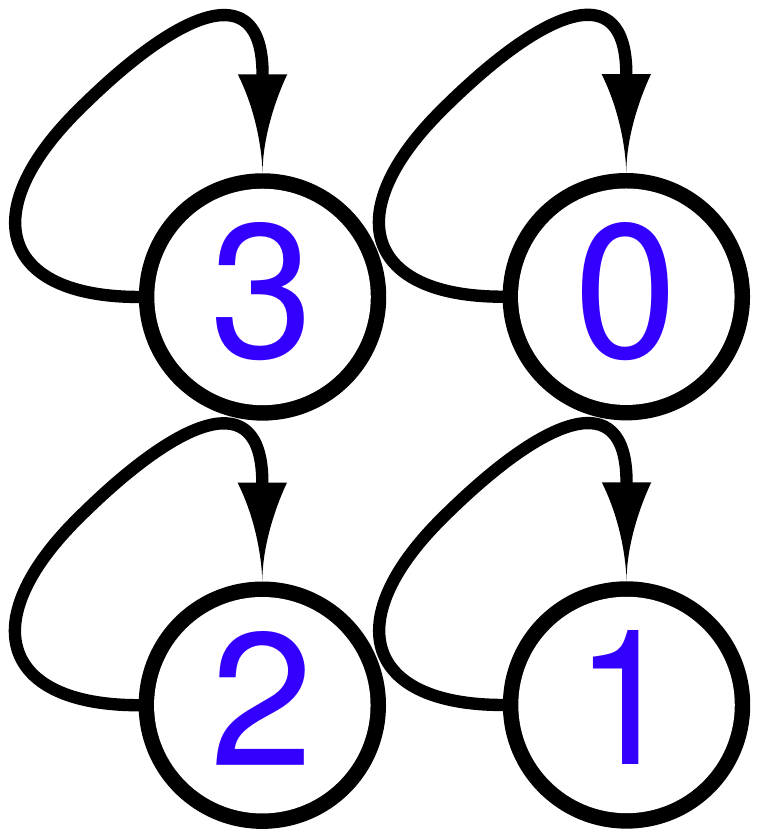}
		a)
	\end{minipage}\hspace*{4pt}
	\begin{minipage}{\FourImW}
		\centering
		\includegraphics[width=\FourImW]{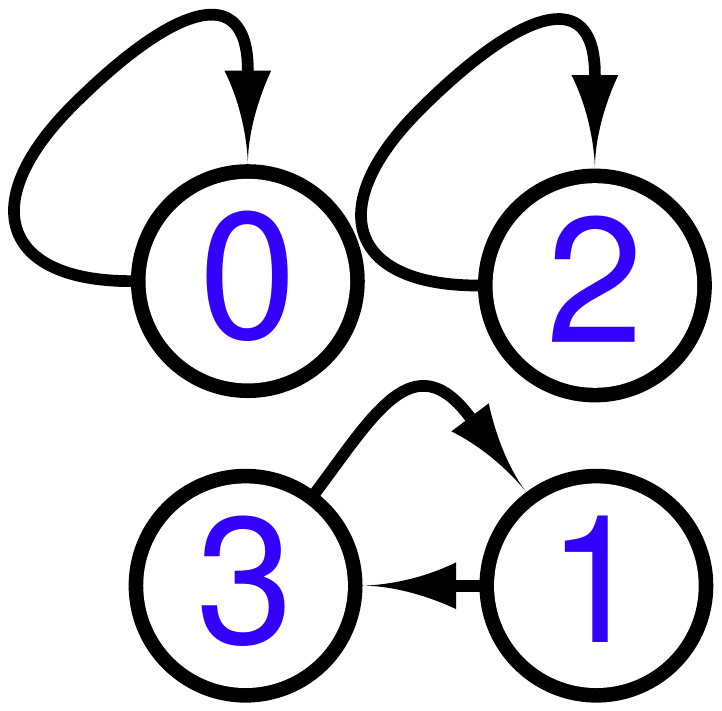}
		b)
	\end{minipage}\hspace*{4pt}
	\begin{minipage}{\FourImW}
		\centering
		\includegraphics[width=\FourImW]{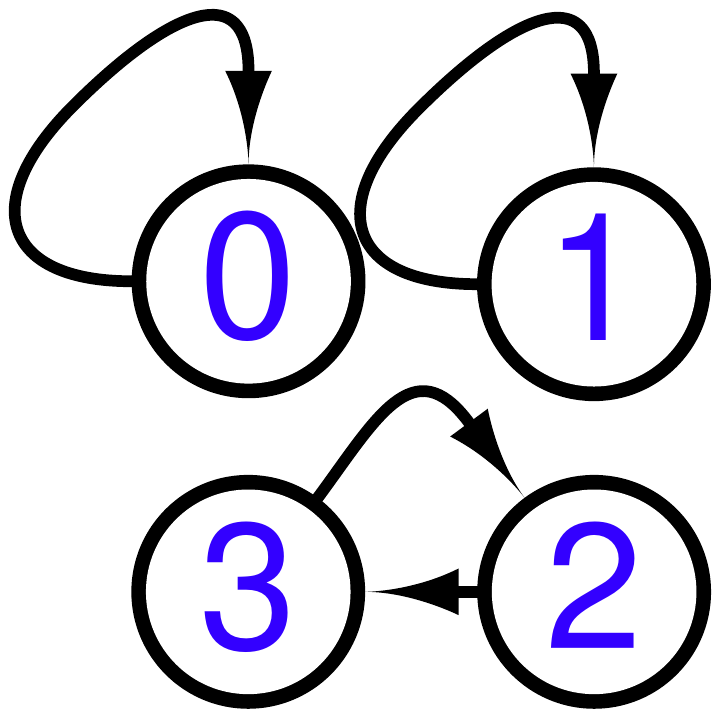}
		c)
	\end{minipage} \hspace*{4pt}
	\begin{minipage}{\FourImW}
		\centering
		\includegraphics[width=\FourImW]{a1_b1_e1}
		d)
	\end{minipage}
	\caption{Four possible functional graphs of Cat map~(\ref{eq:ArnoldInteger}) with $N=2$:
		a) $p$ and $q$ are both even; b) $p$ is even, and $q$ is odd;
		c) $p$ is odd, $q$ is even; d) $p$ and $q$ are both odd.}
	\label{fig:perioddistributione1}
\end{figure}

\setlength\FourImW{0.2\columnwidth}
\begin{figure}[!htb]
\centering
\begin{minipage}[b]{\FourImW}
	\centering
\includegraphics[keepaspectratio,width=\FourImW,height=\FourImW]{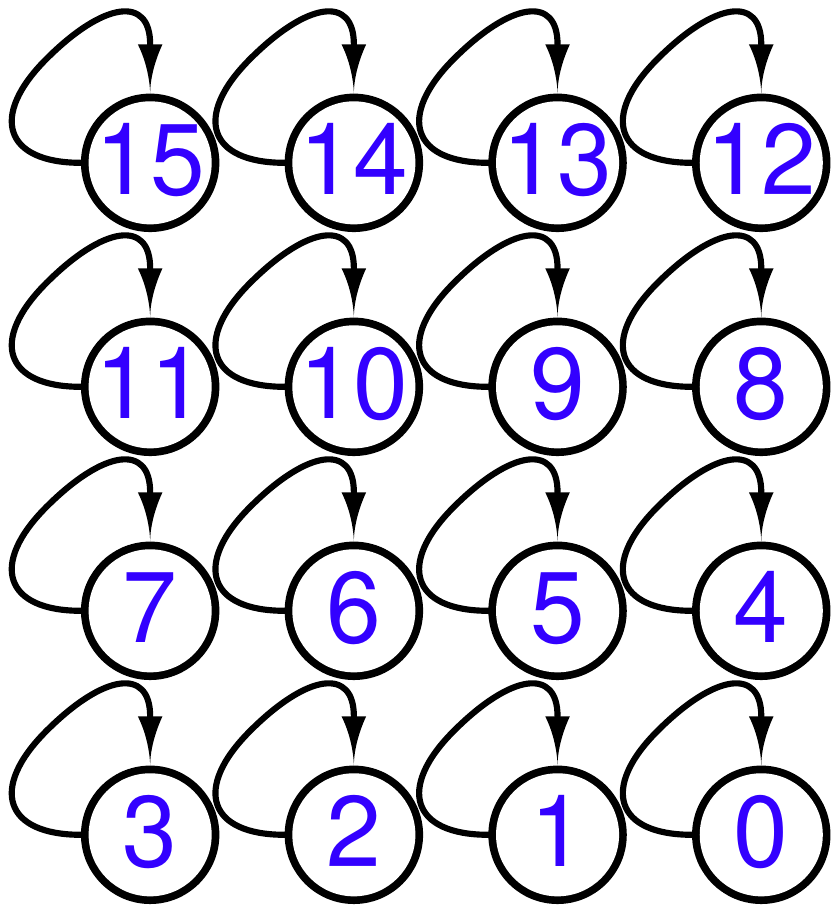}
\subcaption*{0)}
\end{minipage}
\begin{minipage}[b]{\FourImW}
	\centering
	\includegraphics[width=\FourImW]{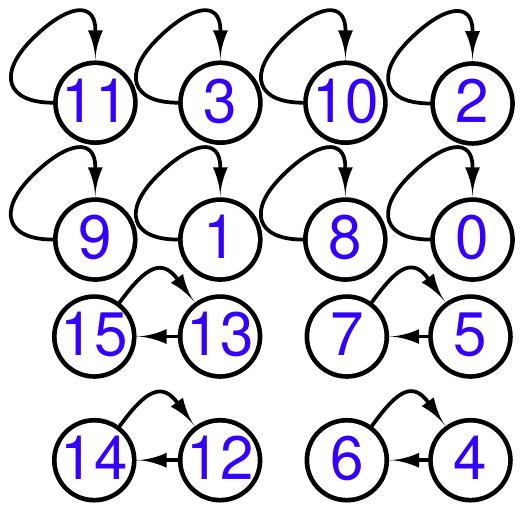}
\subcaption*{2)}
\end{minipage}
\begin{minipage}[b]{\FourImW}
	\centering
	\includegraphics[width=\FourImW]{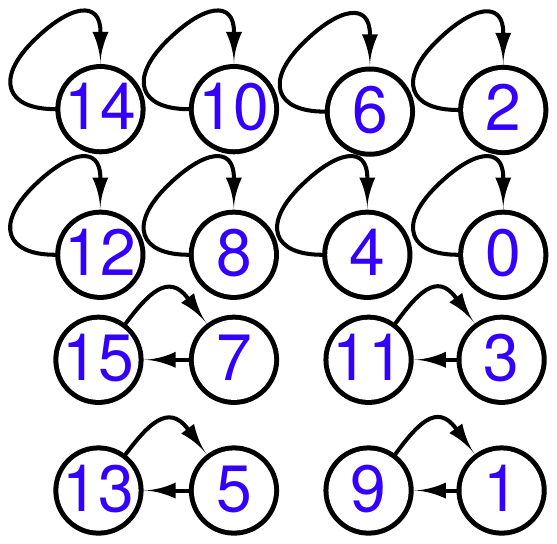}
\subcaption*{8)}
\end{minipage}
\begin{minipage}[b]{\FourImW}
	\centering
	\includegraphics[width=\FourImW]{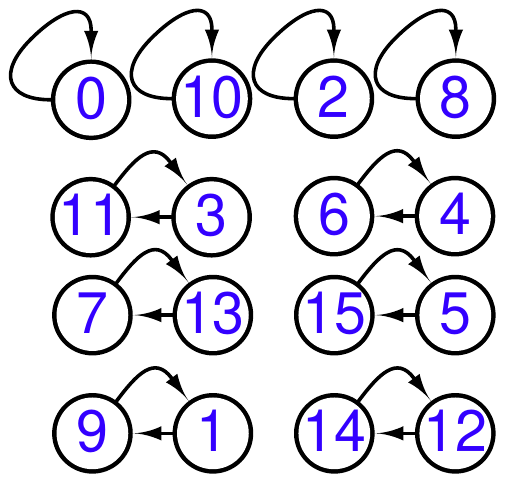}
\subcaption*{10)}
\end{minipage}\\
\begin{minipage}[b]{\FourImW}
	\centering
	\includegraphics[width=\FourImW]{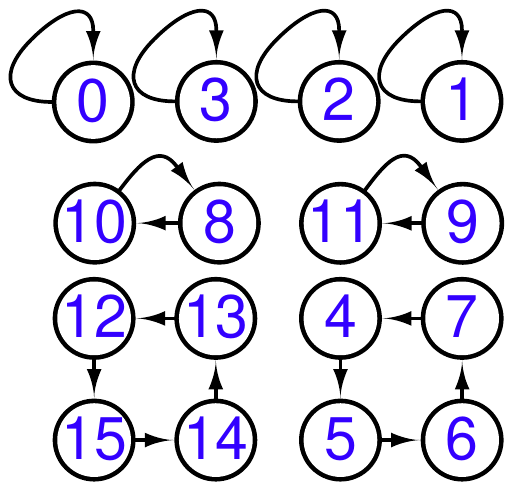}
\subcaption*{1)}
\end{minipage}
\begin{minipage}[b]{\FourImW}
	\centering
	\includegraphics[width=\FourImW]{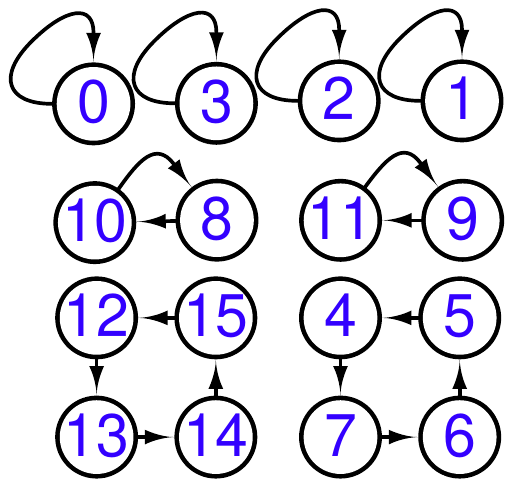}
\subcaption*{3)}
\end{minipage}
\begin{minipage}[b]{\FourImW}
	\centering
	\includegraphics[width=\FourImW]{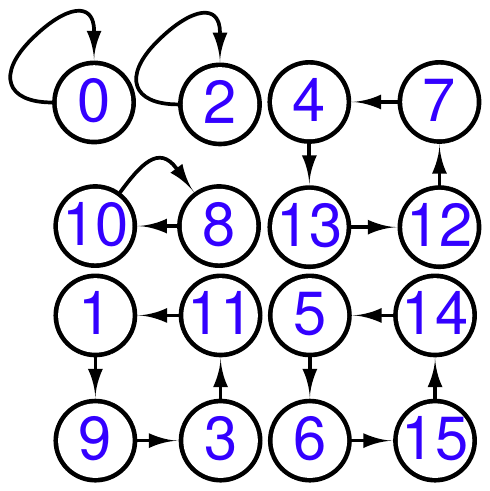}
\subcaption*{9)}
\end{minipage}
\begin{minipage}[b]{\FourImW}
	\centering
	\includegraphics[width=\FourImW]{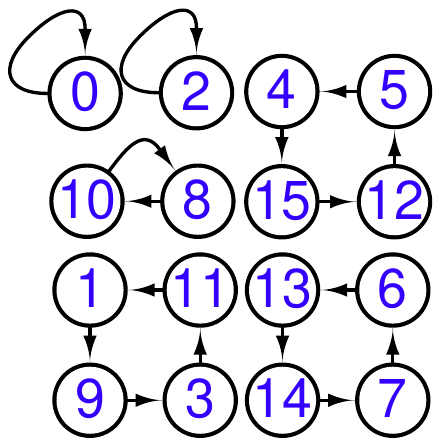}
\subcaption*{11)}
\end{minipage}  \\
\begin{minipage}[b]{\FourImW}
	\centering
	\includegraphics[width=\FourImW]{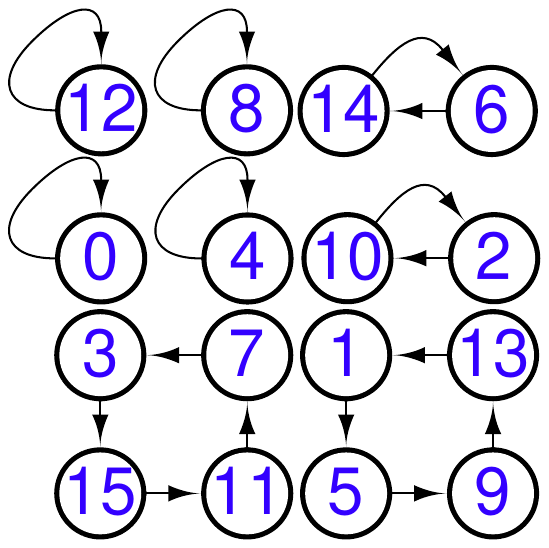}
\subcaption*{4)}
\end{minipage}
\begin{minipage}[b]{\FourImW}
	\centering
	\includegraphics[width=\FourImW]{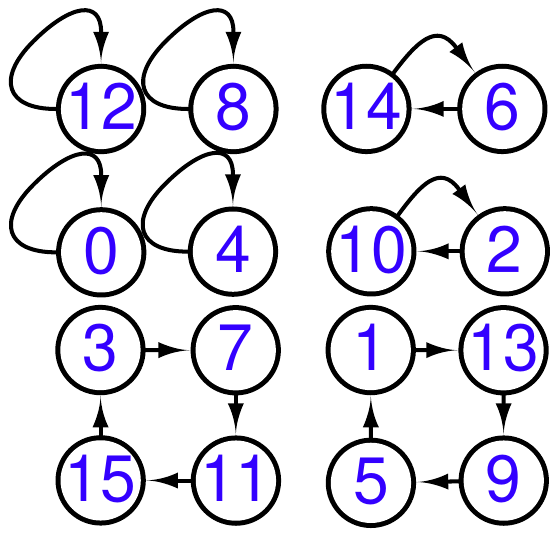}
\subcaption*{12)}
\end{minipage}
\begin{minipage}[b]{\FourImW}
	\centering
	\includegraphics[width=\FourImW]{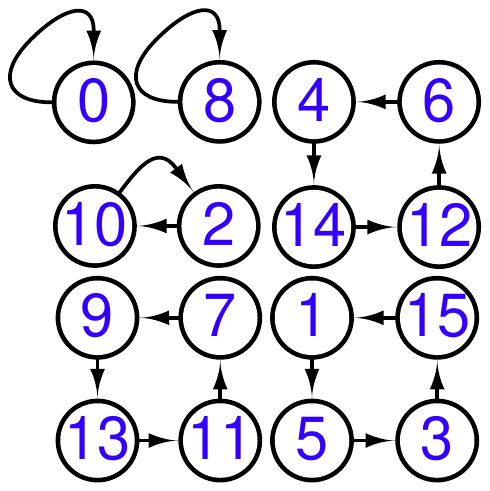}
\subcaption*{6)}
\end{minipage}
\begin{minipage}[b]{\FourImW}
	\centering
	\includegraphics[width=\FourImW]{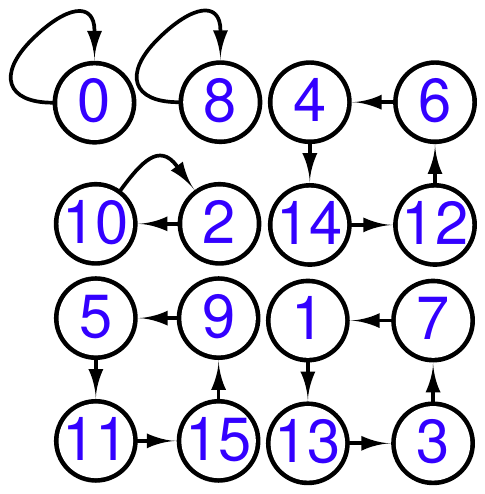}
\subcaption*{14)}
\end{minipage}  \\
\begin{minipage}[b]{\FourImW}
\centering
\includegraphics[width=\FourImW]{a1_b1_e2}
\subcaption*{5)}
\end{minipage}
\begin{minipage}[b]{\FourImW}
	\centering
	\includegraphics[width=\FourImW]{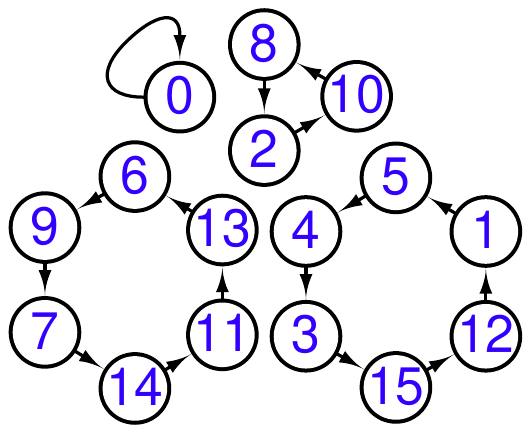}
\subcaption*{7)}
\end{minipage}
\begin{minipage}[b]{\FourImW}
\centering
\includegraphics[width=\FourImW]{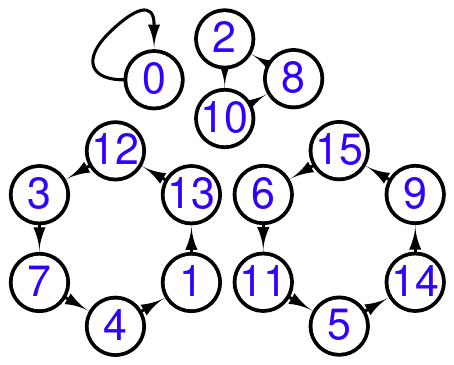}
\subcaption*{13)}
\end{minipage}
\begin{minipage}[b]{\FourImW}
\centering
\includegraphics[width=\FourImW]{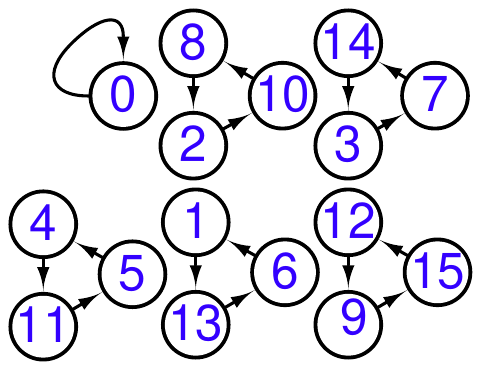}
\subcaption*{15)}
\end{minipage}
\caption{All possible functional graphs of Cat map~(\ref{eq:ArnoldInteger}) with $N=2^2$, where
the subfigure with caption ``$i)$" is corresponding to $(p, q)$ satisfying $i=p\bmod 4+(q\bmod 4)\cdot 4$.}
\label{Functionalgraphse1}
\end{figure}

As the period of a Cat map in a domain is the least common multiple of the periods of its cycles, the functional graph of a Cat map possessing a large period may be composed of a great number of cycles of very small periods. The whole graph shown in Fig.~\ref{fig:SMNcat}d) is composed of 16 cycles of period 12, 10 cycles of period 6, 1 cycle of period 3, and 1 self-connected cycle.

We found that there exists strong evolution relationship between $F_e$ and $F_{e+1}$. A node $z_{n, e}=x_{n, e}+y_{n, e}2^e$ in $F_e$ is evoluted to
\begin{IEEEeqnarray}{rCl}
	z_{n, e+1} & = & (x_{n, e}+a_{n}2^e) + (y_{n, e}+b_{n}2^e) 2^{e+1} \nonumber\\
	& = & z_{n, e}+(a_{n}2^e+y_{n, e}2^e+b_{n}2^{2e+1} ),
	\label{eq:evoluteNumber}	
\end{IEEEeqnarray}
where $a_{n}, b_{n}\in \{0, 1\}$. The relationship between iterated node of $z_{n, e}$ in $F_e$ and the corresponding evoluted one in $F_{e+1}$ is described in Property~\ref{Prop:evolution}. Furthermore, the associated cycle is expanded to up to four cycles as presented in Property~\ref{Prop:cycleExpan}. Assign $(a_{n_0}, b_{n_0})$ with one element in set (\ref{setFour}), one can obtain the corresponding cycle in $F_{e+1}$ with the steps given in Property~\ref{Prop:cycleExpan}. Then, the other element in set (\ref{setFour}) can be assigned to $(a_{n_0}, b_{n_0})$ if it does not ever exist in the set in Eq.~(\ref{numberElements}) corresponding to every assigned value of $(a_{n_0}, b_{n_0})$.
Every cycle corresponding to different $(a_{n_0}, b_{n_0})$ can be generated in the same way.

\begin{Property}
If the differences between inputs of Cat map~(\ref{eq:ArnoldInteger}) with $N=2^e$ and that of
Cat map~(\ref{eq:ArnoldInteger}) with $N=2^{e+1}$ satisfy
	\begin{equation}
	\begin{bmatrix}
	x_{n, e+1}-x_{n, e} \\
	y_{n, e+1}-y_{n, e}
	\end{bmatrix}
	=
	\begin{bmatrix}
	a_{n}\\
	b_{n}
	\end{bmatrix} \cdot 2^e,
	\label{eq:condition}
	\end{equation}
	one has
	\begin{equation}
	\begin{bmatrix}
	a_{n+1}\\
	b_{n+1}
	\end{bmatrix}
	=
	\left[\begin{bmatrix}
	1 & p    \\
	q & 1+p\cdot q
	\end{bmatrix}\cdot
	\begin{bmatrix}
	a_{n}  \\
	b_{n}
	\end{bmatrix}+
	\begin{bmatrix}
	k_x \\
	k_y
	\end{bmatrix}\right] \bmod 2,
	\label{eq:kxky}
	\end{equation}
	$a_{n}, b_{n}\in \{0, 1\}$,
	$k_x=\lfloor k'_x/2^e\rfloor$, $k_y=\lfloor k'_y/2^e\rfloor$, and	
	\begin{equation}
	\begin{bmatrix}
	k'_x \\
	k'_y
	\end{bmatrix}
	=
	\begin{bmatrix}
	1 & p    \\
	q & 1+p\cdot q
	\end{bmatrix}\cdot
	\begin{bmatrix}
	x_{n, e} \\
	y_{n, e}
	\end{bmatrix}.
	\label{eq:k2xk2y}
	\end{equation}
	\label{Prop:evolution}
\end{Property}
\setlength{\arraycolsep}{2pt}   
\begin{proof}
	According to the linearity of Cat map~(\ref{eq:ArnoldInteger}), one can get
	\begin{IEEEeqnarray}{rCl}
		\IEEEeqnarraymulticol{3}{l}{
			\begin{bmatrix}
				x_{n+1, e+1}-x_{n+1, e} \\
				y_{n+1, e+1}-x_{n+1, e}
			\end{bmatrix}
		}\nonumber 	 \\
		& = &
		\left[
		\begin{bmatrix}
			1 & p    \\
			q & 1+p\cdot q
		\end{bmatrix}
		\begin{bmatrix}
			x_{n, e+1}-x_{n, e} \\
			y_{n, e+1}-y_{n, e}
		\end{bmatrix}
		+2^e
		\begin{bmatrix}
			k_x \\
			k_y
		\end{bmatrix}\right]
		\bmod{2^{e+1}}.
		\IEEEeqnarraynumspace\label{eq:ArnoldIntegerdiff}
	\end{IEEEeqnarray}
	As $k\cdot a\equiv k\cdot a'\Mod{m}$ if and only if $a\equiv a'\Mod{\frac{m}{\gcd(m, k)}}$ and $a_{n+1, e}, b_{n+1, e}\in \{0, 1\}$,
	the property can be proved by putting condition~(\ref{eq:condition}) into   equation (\ref{eq:ArnoldIntegerdiff}) and dividing its both sides and the modulo by $2^e$.
\end{proof}

\begin{Property}
	Given a cycle $\textbf{Z}_e=\{z_{n, e}\}_{n=0}^{T_c-1}=\{(x_{n, e}$, $y_{n, e})\}_{n=0}^{T_c-1}$ in $F_e$ and its any point $z_{n_0, e}$, one has that the cycle to which $z_{n_0, e+1}$ belongs in $F_{e+1}$ is
	\begin{equation*}
	\textbf{Z}_{e+1}=\left\{ z_{n, e+1}) \right\}_{n=n_0}^{n_0+kT_c-1},
	\end{equation*}
	where	
	\begin{multline}
	k =  \#\{(a_{n_0}, b_{n_0}), (a_{n_0+T_c}, b_{n_0+T_c}), (a_{n_0+2T_c}, b_{n_0+2T_c}),\\
	(a_{n_0+3T_c}, b_{n_0+3T_c})\},
	\label{numberElements}	
	\end{multline}
	$z_{n, e}=z_{n', e}$ for $n\ge T_c$, $n'=n\bmod T_c$, $\{(a_{n}, b_{n})\}_{n=n_0}^{n_0+3T_c}$ are generated by iterating Eq.~\eqref{eq:kxky} for $n=n_0\sim n_0+3T_c$, and $\#(\cdot)$ returns the ardinality of a set.
	\label{Prop:cycleExpan}
\end{Property}
\begin{proof}
	Given a node in a cycle, $(k_x, k_y)$ in Eq.~(\ref{eq:kxky}) is fixed, so
	Eq.~(\ref{eq:kxky}) defines a bijective mapping on set
	\begin{equation}
	\{(0, 0), (0, 1), (1, 0), (1, 1) \}
	\label{setFour}
	\end{equation}
	as shown in	Fig.~\ref{fig:rootmap}. So, $z_{n, e+1}$ may fall in set $\{z_{j, e+1}\}_{j=n_0}^n$
	when and only when $(n-n_0)\bmod T_c=0$ and $n>n_0$, i.e. the given cycle is went through one more times.
\end{proof}

\setlength\FourImW{0.24\columnwidth}
\begin{figure}[!htb]
\centering
\begin{minipage}{\FourImW}
\centering
\includegraphics[width=\FourImW]{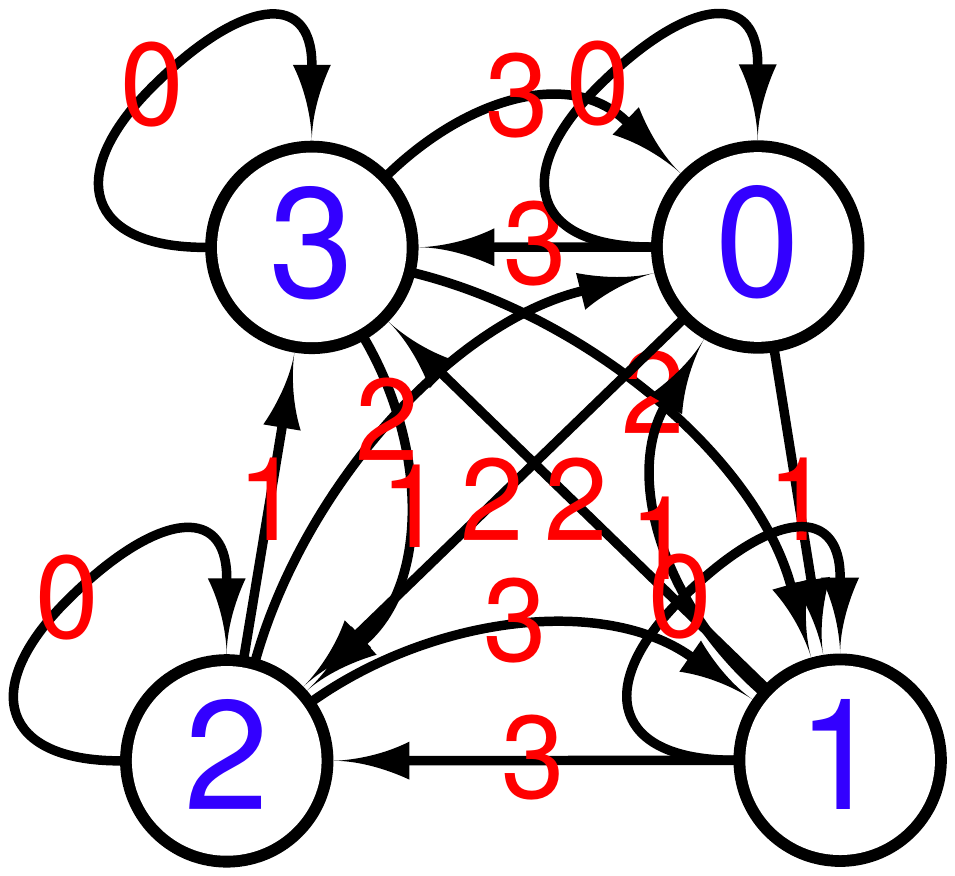}
a)
\end{minipage}
\begin{minipage}{\FourImW}
\centering
\includegraphics[width=\FourImW]{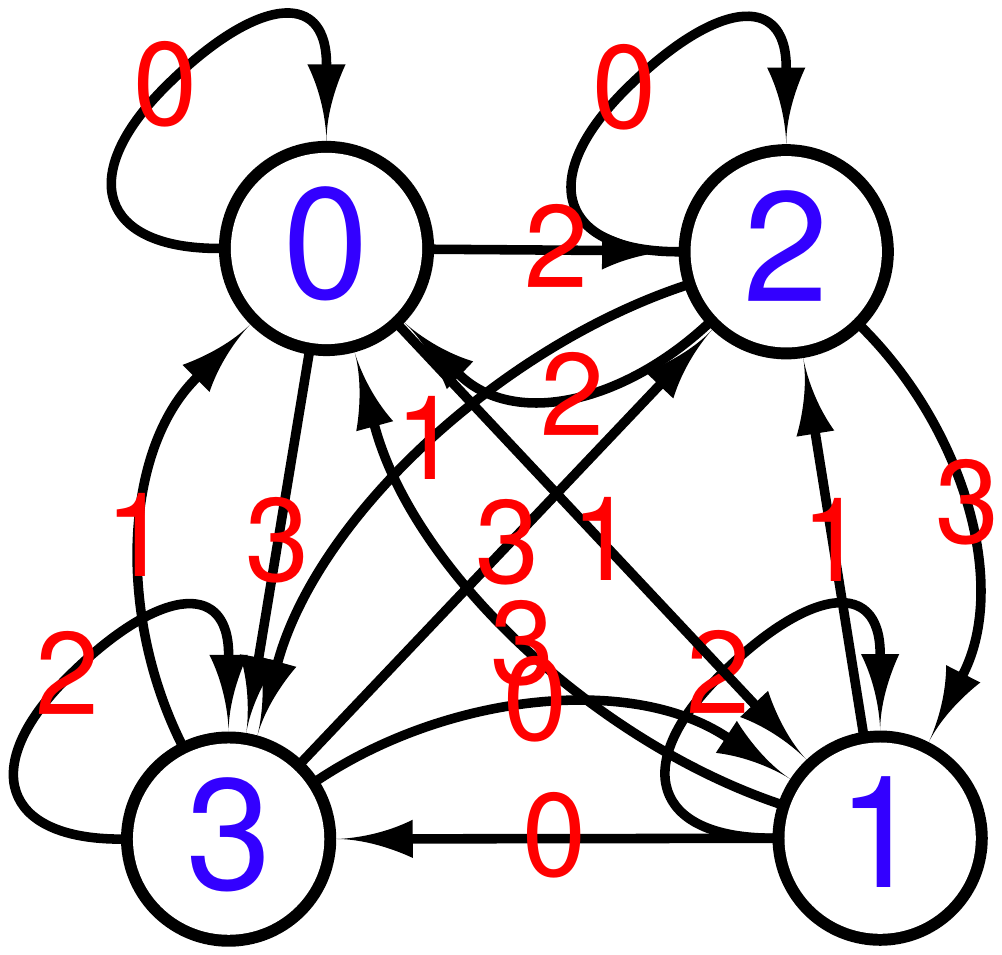}
b)
\end{minipage}
\begin{minipage}{\FourImW}
\centering
\includegraphics[width=\FourImW]{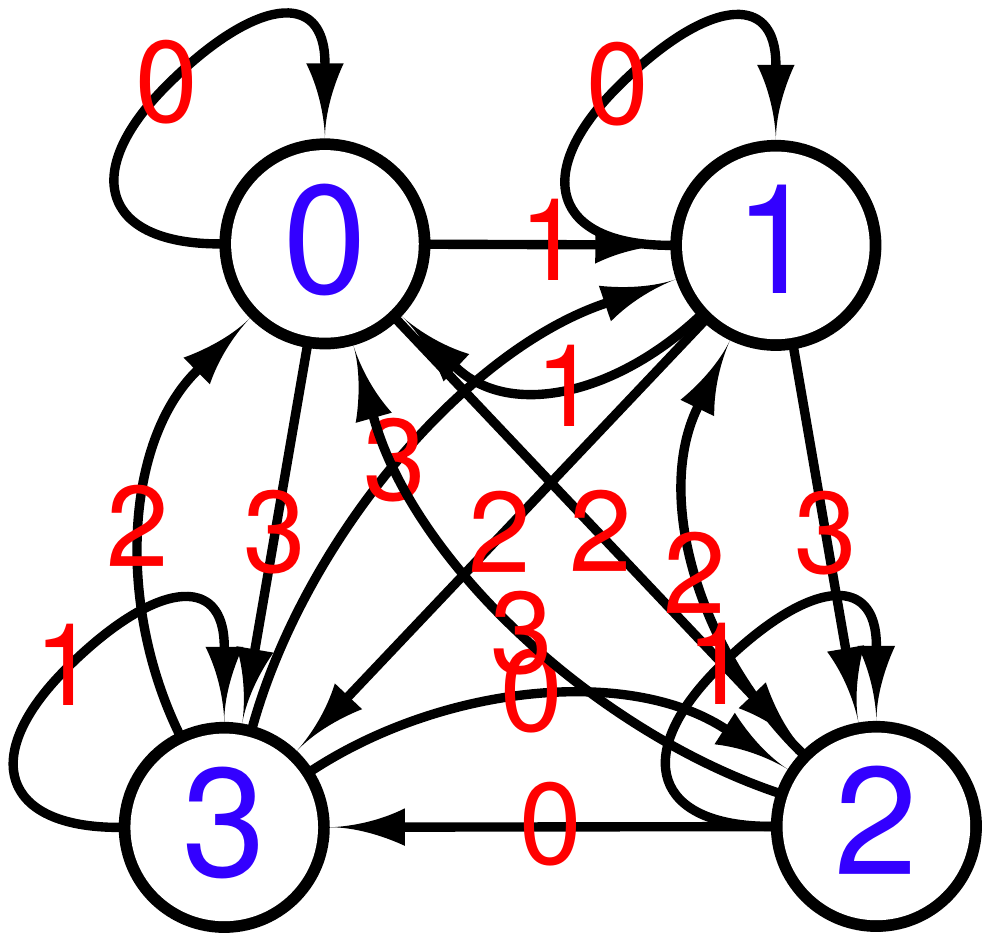}
c)
\end{minipage}
\begin{minipage}{\FourImW}
\centering
\includegraphics[width=\FourImW]{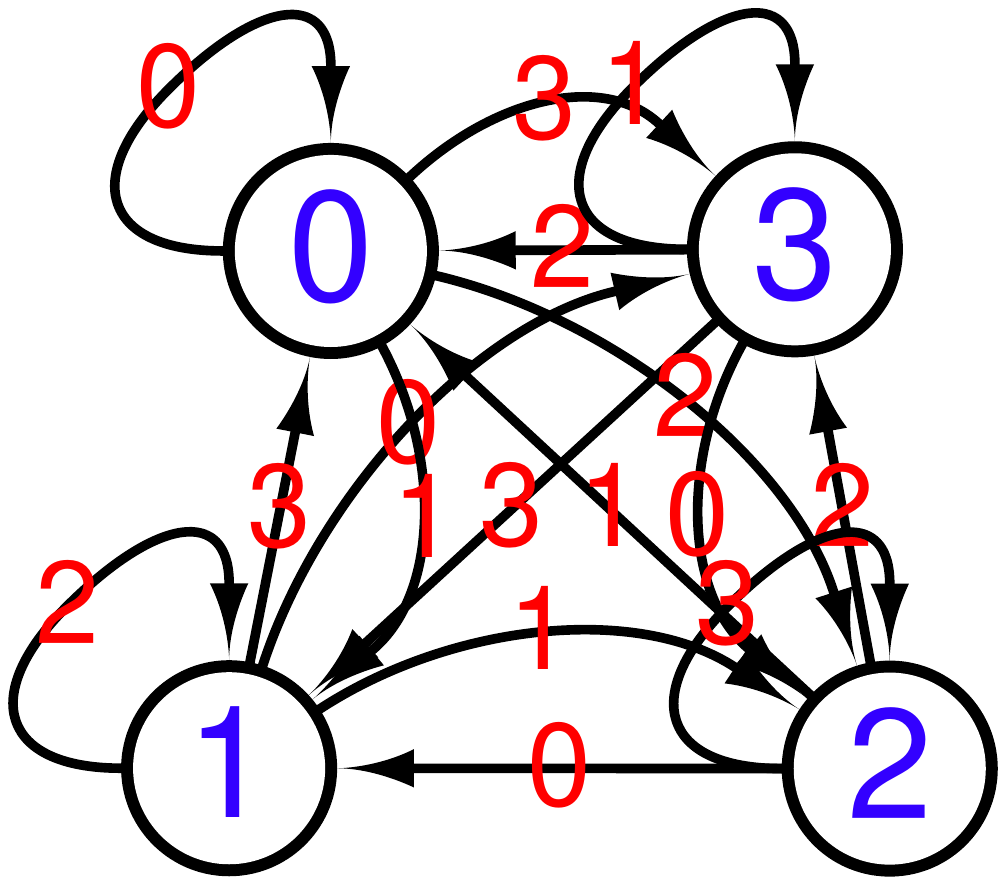}
d)
\end{minipage}
\caption{Mapping relationship between $(a_n+2b_n)$ and $(a_{n+1}+2b_{n+1})$ in Eq.~(\ref{eq:kxky}) with $(k_x+2k_y)$ shown beside the arrow:
	a) $p$ and $q$ are both even; b) $p$ is even, and $q$ is odd;
	c) $p$ is odd, $q$ is even; d) $p$ and $q$ are both odd.}
\label{fig:rootmap}
\end{figure}

Depending on the number of candidates for $(a_{n_0}, b_{n_0})$ and the corresponding cardinality in Eq.~(\ref{numberElements}), a cycle of length $T_c$ in $F_e$ is expanded to five possible cases in $F_{e+1}$:
1) one cycles of length $T_c$ and one cycle of length $3T_c$,
e.g. the self-connected cycle in Fig.~\ref{fig:SMNcat}a), $``0\rightarrow 0"$, is evoluted to two cycles in Fig.~\ref{fig:SMNcat}b),
$``0\rightarrow 0"$ and ``$(0+2^1)=2\rightarrow (0+2^3)=8\rightarrow (0+2^1+2^3)=10\rightarrow 2$";
2) twos cycles of length $T_c$ and one cycle of length $2T_c$, e.g. the cycle $``0\rightarrow 0"$ in Fig.~\ref{fig:perioddistributione1}c) is expanded to three cycles in the same SMN: $``0\rightarrow 0"$; $``2\rightarrow 2"$; $``8\rightarrow 10\rightarrow 8"$;
3) four cycles of length $T_c$, e.g.
the cycle $``1\rightarrow 13 \rightarrow 12 \rightarrow 3 \rightarrow 7 \rightarrow 4 \rightarrow 1"$ in Fig.~\ref{fig:SMNcat}b) is expanded to four cycles of the same length shown in the lower left side of Fig.~\ref{fig:SMNcat}c).
4) two cycles of length $2T_c$, e.g.
the cycle in Fig.~\ref{fig:SMNcat}a), $``1\rightarrow 3\rightarrow 2\rightarrow 1"$ is evoluted to the other two cycles in Fig.~\ref{fig:SMNcat}b),
``$1 \rightarrow (3+2^1+2^3)=13 \rightarrow (2+2^3+2)=12 \rightarrow (1+2^1)=3 \rightarrow (3+2^1+2)=7
\rightarrow (2+2^1)=4 \rightarrow 1$"; ``$(1+2^3)=9\rightarrow (3+2^1+2^3+2)=15 \rightarrow (2+2^1+2)=6 \rightarrow (1+2^1+2^3)=11 \rightarrow (3+2^1)=5
\rightarrow (2+2^1+2^3+2)=14 \rightarrow 9$";
5) one cycle of length $4T_c$, e.g.
the cycle $``1\rightarrow 1"$ in Fig.~\ref{fig:perioddistributione1}c) is expanded to $``1\rightarrow 9 \rightarrow 3 \rightarrow 11 \rightarrow 1"$ in the subfigure with caption ``9)" in Fig.~\ref{Functionalgraphse1}.
In all, all the fives possible cases can be found in Fig.~\ref{fig:SMNcat}, \ref{fig:perioddistributione1}, \ref{Functionalgraphse1}.

As shown in Property~\ref{Prop:cycleExpan}, any cycle of $F_{e+1}$ is incrementally expanded from a cycle of $F_1$. So, the number of cycles of a given length in $F_{e+1}$ has some relationship with that of the corresponding length in $F_{e}$, which is determined by the control parameters $p$, $q$. Moreover, as shown in Property~\ref{prop:isomorphism}, $F_e$ is isomorphic to a part of $F_{e+1}$, which can be verified in Fig.~\ref{fig:SMNcat}.

\begin{Property}
Any cycle $\{(\textbf{C}^i\cdot\textbf{X})\bmod 2^{e}\}_{i=1}^{T_c}$ in $F_e$ and
	the corresponding cycle $\{(\textbf{C}^i\cdot(2\textbf{X}))\bmod 2^{e+1}\}_{i=1}^{T_c}$ in $F_{e+1}$ compose two isomorphic groups with respect to their respective operators.
	\label{prop:isomorphism}
\end{Property}
\begin{proof}
	As for any point $\textbf{X}$ in $F_e$, define a multiplication operation $\circ$ for any two elements of set $G=\{(\textbf{C}^i\cdot\textbf{X})\bmod 2^{e}\}_{i=1}^{T_c}$,
	$g_1 \circ g_2=(\textbf{C}^{i_1+i_2}\cdot\textbf{X})\bmod 2^e$, where $g_1=(\textbf{C}^{i_1}\cdot\textbf{X})\bmod 2^e$,
	$g_2=(\textbf{C}^{i_2}\cdot\textbf{X})\bmod 2^e$.
	The set $G$ is closed with respect to the operator $\circ$.
	Point $(\textbf{C}^{T_c}\cdot\textbf{X})\bmod 2^e=(\textbf{C}^{0}\cdot\textbf{X})\bmod 2^e=\textbf{X}$ is the identity element.
	Multiplication of any three matrices satisfy the associative law. As for any element $g_1$, there is an inverse element
	$(\textbf{C}^{T_c-i_1}\cdot\textbf{X})\bmod 2^e$. So, the non-empty
	set $G$ composes a group with respect to the operator.
	Referring to the elementary properties of congruences summarized in \cite[P.61]{hardy1979numbers}, equation
	\begin{equation*}	
	\textbf{C}^{i}\cdot
	\begin{bmatrix}
	x_{0} \\
	y_{0}
	\end{bmatrix}\bmod 2^e
	=
	\begin{bmatrix}
	x_{0} \\
	y_{0}
	\end{bmatrix}
	\end{equation*}
	holds if and only if
	\begin{equation*}	
	\textbf{C}^{i}\cdot
	\begin{bmatrix}
	2x_{0} \\
	2y_{0}
	\end{bmatrix}\bmod 2^{e+1}
	=
	\begin{bmatrix}
	2x_{0} \\
	2y_{0}
	\end{bmatrix}.
	\end{equation*}
	So $G'=\{\textbf{C}^i\cdot(2\textbf{X})\bmod 2^{e+1}\}_{i=1}^{T_c}$ also composes
	a group with respect to operator $\hat{\circ}$,
	where $g'_1 \hat{\circ} g'_2=(\textbf{C}^{i_1+i_2}\cdot(2\textbf{X}))\bmod 2^{e+1}$, $g'_1=(\textbf{C}^{i_1}\cdot(2\textbf{X}))\bmod 2^{e+1}$,
	$g'_2=(\textbf{C}^{i_2}\cdot(2\textbf{X}))\bmod 2^{e+1}$.
	Therefor, the two groups are isomorphic with respect to bijective map $y=(2\textbf{X})\bmod 2^{e+1}$.
\end{proof}

The period distribution of cycles in $F_e$ follows a power-law distribution of fixed exponent one when $e$ is sufficiently large. The number of cycles of any length is monotonously increased to a constant with respect to $e$, which is shown in Table~\ref{table:pqeTc}, where the dashline marked the case corresponding to the threshold value.

In \cite{Catchen2013period2e}, it is assumed that $e\ge 3$ ``because the cases when $e=1$ and $e=2$ are trivial". On the contrary, the structure of functional graph of Cat map~(\ref{eq:ArnoldInteger}) with $e=1$, shown in Fig.~\ref{fig:perioddistributione1}, plays a fundamental role for that with $e\ge 3$, e.g. the cycles of length triple of 3 in $F_e$ (if there exist) are generated by the cycle of length 3 in $F_1$.

\setlength{\arraycolsep}{4pt}   
\setlength\tabcolsep{4pt} 
\addtolength{\abovecaptionskip}{-2pt}
\begin{table}[!htb]
\caption{The number of cycles of period $T_c$ in $F_e$ with $(p, q)=(9, 14)$.} 
\centering 
\begin{tabular}{*{8}{c|}c} 
\hline 
\diagbox[width=6em]{$e$}{$N_{T_c, e}$}{$T_c$}
&$2^0$ & $2^1$ &     $2^2$&     $2^3$&    $2^4$ &    $2^5$ &     $2^6$&    $2^7$ \\ \hline
1 &2 &1&     0&     0&     0&     0&      0&      0  \\
2 &2 &1&     3&     0&     0&     0&      0&      0  \\
3 &2 &1&    15&     0&     0&     0&      0&      0  \\
4 &2 &1&    63&     0&     0&     0&      0&      0  \\
5 &2 &1&   255&     0&     0&     0&      0&      0  \\
6 &2 &1&  1023&     0&     0&     0&      0&      0  \\
7 &2 &1&  4095&     0&     0&     0&      0&      0  \\
8 &2 &1& 16383&     0&     0&     0&      0&      0  \\ \hdashline[2pt/1pt] 
9 &2 &1& 16383& 24576&     0&     0&      0&      0  \\ \hdashline[5pt/2pt]
10&2 &1& 16383& 24576& 49152&     0&      0&      0  \\
11&2 &1& 16383& 24576& 49152& 98304&      0&      0  \\
12&2 &1& 16383& 24576& 49152& 98304& 196608&      0  \\
13&2 &1& 16383& 24576& 49152& 98304& 196608& 393216  \\\hline 
\end{tabular}
\label{table:pqeTc} 
\end{table}

\subsection{Properties on iterating Cat map over $(\mathbb{Z}_{N}, +,\ \cdot\ )$}
\label{ssec:iterateCat}

Diagonalizing the transform matrix of Cat map~(\ref{eq:ArnoldInteger}) with its eigenmatrix,  the explicit representation of $n$-th iteration of the map can be obtained as Theorem~\ref{theorem:Cat}, which serves as basis of the analysis of this paper.

The necessary and sufficient condition for the least period of Cat map~\eqref{eq:ArnoldInteger} over $(\mathbb{Z}_{N}, +,\ \cdot\ )$ is given in
Proposition~\ref{prop:leastPeriod}. Considering the even parity of $G_n$, the condition
can be simplified as Corollary~\ref{co:pqhGn2}. Based on the property of $H_n$ in Lemma~\ref{le:HG}, the inverse of the $n$-th iteration of Cat map is obtained as shown in
Proposition~\ref{prop:inverse}.

\begin{theorem}
The $n$-th iteration of Cat map matrix \eqref{eq:MatMatrix} satisfies
	\begin{equation}
	\textbf{C}^n=
	\begin{bmatrix}
	\frac{1}{2}G_n-\frac{A-2}{2}H_n & p\cdot H_n    \\
	q\cdot H_n                      & \frac{1}{2}G_n+\frac{A-2}{2}H_n
	\end{bmatrix},
	\label{eq:iterateCat}
	\end{equation}
	where
	\begin{equation}
	\begin{cases}
	G_n=(\frac{A+B}{2})^n+(\frac{A-B}{2})^n, \\
	H_n=\frac{1}{B}( (\frac{A+B}{2})^n-(\frac{A-B}{2})^n ),
	\end{cases}
	\label{eq:GH}
	\end{equation}
	$B=\sqrt{A^2-4}$ and $A=p\cdot q+2$.
\label{theorem:Cat}
\end{theorem}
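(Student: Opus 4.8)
The plan is to diagonalize $\textbf{C}$ over the quadratic extension generated by $B=\sqrt{A^2-4}$, where $A=pq+2=\operatorname{tr}(\textbf{C})$, compute the $n$-th power in the eigenbasis, and then show the resulting closed form matches the claimed matrix entrywise. First I would record that the characteristic polynomial of $\textbf{C}$ is $\lambda^2-A\lambda+1$ (the trace is $1+(1+pq)=A$ and the determinant is $(1+pq)-pq=1$), so the eigenvalues are $\lambda_{\pm}=\tfrac{A\pm B}{2}$, which satisfy $\lambda_+\lambda_-=1$ and $\lambda_++\lambda_-=A$. Note $G_n=\lambda_+^n+\lambda_-^n$ and $H_n=\tfrac{1}{B}(\lambda_+^n-\lambda_-^n)$ are exactly the two natural symmetric functions; they satisfy the same recurrence $u_{n+1}=Au_n-u_{n-1}$ with $G_0=2,G_1=A$ and $H_0=0,H_1=1$.

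Next I would exhibit eigenvectors. For $\lambda_{\pm}$, solving $(\textbf{C}-\lambda_{\pm}I)\vct v=0$ from the first row $(1-\lambda_{\pm})v_1+p v_2=0$ gives eigenvector $\bigl(p,\ \lambda_{\pm}-1\bigr)^{\!\top}$, and since $\lambda_{\pm}-1=\tfrac{A-2\pm B}{2}=\tfrac{pq\pm B}{2}$, we may write the eigenmatrix
\begin{equation*}
\textbf{P}=
\begin{bmatrix}
p & p\\[2pt]
\frac{pq+B}{2} & \frac{pq-B}{2}
\end{bmatrix},
\qquad
\textbf{C}=\textbf{P}\begin{bmatrix}\lambda_+ & 0\\ 0 & \lambda_-\end{bmatrix}\textbf{P}^{-1}.
\end{equation*}
Then $\textbf{C}^n=\textbf{P}\,\mathrm{diag}(\lambda_+^n,\lambda_-^n)\,\textbf{P}^{-1}$. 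Computing $\textbf{P}^{-1}$ (its determinant is $p\cdot\tfrac{pq-B}{2}-p\cdot\tfrac{pq+B}{2}=-pB$) and multiplying the three matrices, each entry of $\textbf{C}^n$ becomes a linear combination of $\lambda_+^n$ and $\lambda_-^n$ with coefficients rational in $p,q,B$; collecting the symmetric part into $G_n$ and the antisymmetric part (the coefficient of $\tfrac{1}{B}$) into $H_n$ yields the four entries. For instance the $(1,2)$ entry comes out as $p\cdot\tfrac{1}{B}(\lambda_+^n-\lambda_-^n)=p H_n$, the off-diagonal $(2,1)$ entry symmetrically as $q H_n$, and the diagonal entries as $\tfrac12 G_n\mp\tfrac{A-2}{2}H_n$ after using $A-2=pq$; the signs are pinned down by checking $n=1$, where the right-hand side of \eqref{eq:iterateCat} must reduce to $\textbf{C}$ itself since $G_1=A$, $H_1=1$.

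Rather than pushing the $3\times3$ matrix product by brute force, the cleanest finish is induction on $n$: verify \eqref{eq:iterateCat} for $n=0$ and $n=1$ directly from $G_0=2,H_0=0,G_1=A,H_1=1$, then assume it for $n$ and multiply $\textbf{C}^{n+1}=\textbf{C}\cdot\textbf{C}^n$, using only the scalar recurrences $G_{n+1}=A G_n-G_{n-1}$ and $H_{n+1}=A H_n-H_{n-1}$ — equivalently the pair $G_{n+1}=\tfrac{A}{2}G_n+\tfrac{B^2}{2}H_n$ and $H_{n+1}=\tfrac12 G_n+\tfrac{A}{2}H_n$, which follow from the definitions in \eqref{eq:GH} — to simplify each of the four entries back into the claimed form. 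The only mild obstacle is bookkeeping: keeping the $A-2=pq$ substitutions and the $B^2=A^2-4$ identity straight so the cross terms cancel; there is no conceptual difficulty, and working over the field $\mathbb{Q}(B)$ (or formally treating $B$ as an indeterminate with $B^2=A^2-4$) makes the division by $B$ in $H_n$ legitimate, after which the final formula, being polynomial in $A,B^2,p,q$, specializes to every ring $\mathbb{Z}_N$ of interest.
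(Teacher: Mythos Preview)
Your proposal is correct and follows essentially the same diagonalization route as the paper: compute the characteristic polynomial $\lambda^2-A\lambda+1$, take eigenvalues $\lambda_\pm=(A\pm B)/2$, form the eigenmatrix, and read off $\textbf{C}^n=\textbf{P}\,\mathrm{diag}(\lambda_+^n,\lambda_-^n)\,\textbf{P}^{-1}$ in terms of $G_n,H_n$. The only cosmetic differences are your eigenvector scaling ($(p,\lambda_\pm-1)^\top$ versus the paper's $(1,(A-2\pm B)/(2p))^\top$) and your optional induction-based finish using the recurrences for $G_n,H_n$, whereas the paper simply carries out the matrix product $\textbf{P}\Lambda^n\textbf{P}^{-1}$ directly.
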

\begin{proof}
	First, one can calculate the characteristic polynomial of
	Cat map matrix~\eqref{eq:MatMatrix} as
	\begin{IEEEeqnarray*}{rCl}
		|\textbf{C}-\lambda \textbf{I}| & = &
		\det\begin{bmatrix}
			1-\lambda & p \\
			q         & p\cdot q+1-\lambda
		\end{bmatrix}    \\
		& = & \lambda^2-(p\cdot q+2)\lambda+1 \\	
		& = & 0.
	\end{IEEEeqnarray*}
	Solving the above equation, one can obtain two
	characteristic roots of Cat map matrix:
	\begin{equation*}
	\left\{
	\begin{split}
	\lambda_1 & =\frac{A+B}{2}, \\
	\lambda_2 & =\frac{A-B}{2}.
	\end{split}
	\right.
	\end{equation*}
	Setting $\lambda$ in $(\textbf{C}-\lambda \textbf{I})\cdot\textbf{X}=0$ as $\lambda_1$ and $\lambda_2$ separately, the corresponding eigenvector
	$\xi_{\lambda_1}=[1, \frac{ A-2+B}{2p}]^\intercal$ and $\xi_{\lambda_2}=[1, \frac{ A-2-B}{2p}]^\intercal$ can be obtained, which means
	\begin{equation}
	\label{eq:commu}
	\textbf{C}\cdot \textbf{P}=\textbf{P}\cdot \Lambda,
	\end{equation}
	where
	$\textbf{P}=(\xi_{\lambda_1}, \xi_{\lambda_2})$ and
	\begin{equation*}	
	\Lambda=\begin{bmatrix}
	\lambda_1 & 0 \\
	0         & \lambda_2
	\end{bmatrix}.
	\end{equation*}
	From Eq.~(\ref{eq:commu}), one has
	\begin{equation}
	\textbf{C} =\textbf{P}\cdot \Lambda\cdot \textbf{P}^{-1},
	\end{equation}
	where
	\begin{equation*}
	\textbf{P}^{-1}=\begin{bmatrix}
	-\frac{A-2-B}{2B} & \frac{p}{B} \\
	\frac{A-2+B}{2B}  & -\frac{p}{B}
	\end{bmatrix}.
	\end{equation*}
	Finally, one can get
	\begin{equation*}
	\begin{split}
	\textbf{C}^n &=(\textbf{P}\cdot \Lambda\cdot \textbf{P}^{-1})^n\\
	&=\textbf{P}\cdot \Lambda^n\cdot \textbf{P}^{-1}\\
	&=\begin{bmatrix}
	\frac{1}{2}G_n- \frac{A-2}{2}H_n & p H_n \\
	q H_n& \frac{1}{2}G_n+ \frac{A-2}{2}H_n
	\end{bmatrix},
	\end{split}
	\end{equation*}
	where $G_n$ and $H_n$ are defined as Eq.~\eqref{eq:GH}.
\end{proof}

\begin{Proposition}
The least period of Cat map~\eqref{eq:ArnoldInteger} over $(\mathbb{Z}_{N}, +,\ \cdot\ )$ is $T$ if and only if $T$ is the minimum possible value of $n$ satisfying	
\begin{equation}
\left\{ \,
\begin{IEEEeqnarraybox}[
	\IEEEeqnarraystrutmode
	\IEEEeqnarraystrutsizeadd{6pt}{6pt}][c]{rCl}
	p\cdot H_n         & \equiv & 0 \bmod N,    \\
    q\cdot H_n         & \equiv &0 \bmod N,     \\
\frac{1}{2}G_n-\frac{1}{2}p\cdot q\cdot H_n  & \equiv & 1\bmod N,  \\
\frac{1}{2}G_n+\frac{1}{2}p\cdot q\cdot H_n  & \equiv & 1\bmod N,
	\end{IEEEeqnarraybox}
	\right.
	\label{eq:pqhGn}
\end{equation}
\label{prop:leastPeriod}
\end{Proposition}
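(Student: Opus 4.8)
The plan is to reduce the assertion about the least period of the map to the single matrix condition $\textbf{C}^{n}\equiv\textbf{I}\Mod{N}$, and then simply read off the four scalar congruences from the explicit form of $\textbf{C}^{n}$ provided by Theorem~\ref{theorem:Cat}. First I would recall that, by definition, the least period $T$ of Cat map~\eqref{eq:ArnoldInteger} over $(\mathbb{Z}_{N},+,\cdot)$ is the least common multiple of the periods of all points of $\mathbb{Z}_{N}^{2}$, hence the least positive integer $n$ for which $\textbf{C}^{n}\cdot\textbf{x}\equiv\textbf{x}\Mod{N}$ holds simultaneously for every $\textbf{x}\in\mathbb{Z}_{N}^{2}$; such an $n$ exists because $\mathbb{Z}_{N}^{2}$ is finite and the map is a bijection (Property~\ref{prop:bijective}, Property~\ref{prop:onecycle}). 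Taking in particular $\textbf{x}=(1,0)^{\intercal}$ and $\textbf{x}=(0,1)^{\intercal}$ shows that this condition is equivalent to the matrix congruence $\textbf{C}^{n}\equiv\textbf{I}\Mod{N}$ (these two vectors already force every entry of $\textbf{C}^{n}$ into the required residue class), while conversely $\textbf{C}^{n}\equiv\textbf{I}$ trivially fixes all $\textbf{x}$. Since $\det\textbf{C}=1$, the matrix $\textbf{C}$ is invertible modulo $N$, so this least $n$ is exactly the order of $\textbf{C}$ in $\mathrm{GL}_{2}(\mathbb{Z}_{N})$ and coincides with $T$.

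Next I would substitute the closed form of $\textbf{C}^{n}$ from Theorem~\ref{theorem:Cat}. Using $A-2=p\cdot q$, the diagonal entries of $\textbf{C}^{n}$ are $\tfrac{1}{2}G_{n}\mp\tfrac{1}{2}p\cdot q\cdot H_{n}$ and the off-diagonal entries are $p\cdot H_{n}$ and $q\cdot H_{n}$; all four are genuine integers, because $\textbf{C}^{n}$ is a product of integer matrices (equivalently, $G_{n}$ and $H_{n}$ satisfy the integer recurrence $X_{n}=A X_{n-1}-X_{n-2}$, so $\tfrac{1}{2}(G_{n}\mp p\cdot q\cdot H_{n})$ is an integer). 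Equating $\textbf{C}^{n}$ to $\textbf{I}$ entry by entry then reproduces the system~\eqref{eq:pqhGn}: the $(1,2)$ and $(2,1)$ entries give $p\cdot H_{n}\equiv 0$ and $q\cdot H_{n}\equiv 0$, while the $(1,1)$ and $(2,2)$ entries give $\tfrac{1}{2}G_{n}-\tfrac{1}{2}p\cdot q\cdot H_{n}\equiv 1$ and $\tfrac{1}{2}G_{n}+\tfrac{1}{2}p\cdot q\cdot H_{n}\equiv 1$. Taking the minimal $n$ satisfying all four congruences therefore identifies that $n$ with $T$, which is the claim.

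There is no deep obstacle once Theorem~\ref{theorem:Cat} is available; the only points that need a little care are (i) justifying that the period of the map equals the order of $\textbf{C}$ rather than merely a divisor of it — handled by the least-common-multiple characterisation of the period together with Properties~\ref{prop:bijective}, \ref{prop:onecycle} (or Property~\ref{prop:multiplecycle}, which gives a single orbit whose period is a multiple of all the others) — and (ii) observing that although $\tfrac{1}{2}G_{n}$ need not be an integer when $N$ is even, the combinations $\tfrac{1}{2}G_{n}\mp\tfrac{1}{2}p\cdot q\cdot H_{n}$ always are, so the congruences in~\eqref{eq:pqhGn} are well defined exactly as written.
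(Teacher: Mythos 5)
Your proof is correct and follows essentially the same route as the paper's: reduce the least-period condition to the matrix congruence $\textbf{C}^{n}\equiv\textbf{I}\Mod{N}$ via the two standard basis vectors, then read off the four scalar congruences from the explicit form of $\textbf{C}^{n}$ in Theorem~\ref{theorem:Cat} using $A-2=p\cdot q$. Your added remarks on the integrality of the diagonal entries and on why the map's period equals the order of $\textbf{C}$ are careful touches the paper leaves implicit, but they do not change the argument.
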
	
\begin{proof}
If the period of Cat map~\eqref{eq:ArnoldInteger} over $(\mathbb{Z}_{N}, +,\ \cdot\ )$ is $T$,
$\textbf{C}^T \cdot\textbf{X}\equiv \textbf{X} \bmod N$ exists for any $\textbf{X}$. Setting $\textbf{X}=[1, 0]^\intercal$ and $\textbf{X}=[0, 1]^\intercal$ in order, one can get
\begin{equation}
\textbf{C}^n \equiv
\begin{bmatrix}
1   & 0   \\
0   & 1
\end{bmatrix} \bmod N
\label{eq:CnI}
\end{equation}
when $n=T$. Incorporating Eq.~(\ref{eq:iterateCat}) into the above equation, one can assure than Eq.~(\ref{eq:pqhGn}) exists when $n=T$. As $T$ is the least period, $T$ is the minimum possible value of $n$ satisfying Eq.~(\ref{eq:pqhGn}). The condition is therefore necessary.
If $T$ is the minimum possible value of $n$ satisfying Eq.~(\ref{eq:pqhGn}), Eq.~(\ref{eq:CnI}) holds, which means that
$T$ is a period of Cat map~\eqref{eq:ArnoldInteger} over $(\mathbb{Z}_{N}, +,\ \cdot\ )$. As
Eq.~(\ref{eq:CnI}) does no exist for any $n<T$, $T$ is the least period of Cat map~\eqref{eq:ArnoldInteger} over $(\mathbb{Z}_{N}, +,\ \cdot\ )$.
So the sufficient part of the proposition is proved.
\end{proof}

\begin{lemma}
	For any positive integer $m$, the parity of $G_{2^ms}$ is the same as that of $G_{s}$, and
	\[
	\frac{1}{2}G_{2^{m}\cdot s} \equiv 1 \bmod 2
	\]
	if $G_s$ is even, where $s$ is a given positive integer.
\label{lemma:parity}
\end{lemma}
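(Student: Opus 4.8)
The plan is to work directly from the closed form in \eqref{eq:GH}, which says $G_n=\lambda_1^n+\lambda_2^n$ with $\lambda_1=(A+B)/2$, $\lambda_2=(A-B)/2$. The crucial fact is that $\lambda_1\lambda_2=(A^2-B^2)/4=1$, since $B^2=A^2-4$. From $\lambda^2-A\lambda+1=0$ one also gets the integer recurrence $G_0=2$, $G_1=A$, $G_{n+1}=AG_n-G_{n-1}$, so every $G_n$ lies in $\mathbb{Z}[A]\subseteq\mathbb{Z}$ (recall $A=p\cdot q+2$), which is what makes reductions modulo $2$ and $4$ meaningful. The key tool will be the doubling identity
\[
G_{2n}=(\lambda_1^n+\lambda_2^n)^2-2(\lambda_1\lambda_2)^n=G_n^2-2 .
\]

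For the parity statement I would simply iterate this identity. Since $G_{2n}=G_n^2-2\equiv G_n^2\equiv G_n\pmod{2}$, the numbers $G_{2n}$ and $G_n$ always share their parity; chaining this along $s,\ 2s,\ 4s,\ \dots,\ 2^{m}s$ shows that $G_{2^{m}s}$ has the same parity as $G_s$ for every positive integer $m$.

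For the second statement, assume $G_s$ is even. Because $m\ge 1$, the index $2^{m-1}s$ is a positive integer, and by the parity result just obtained $G_{2^{m-1}s}$ is even as well; write $G_{2^{m-1}s}=2u$ with $u\in\mathbb{Z}$. Then the doubling identity gives
\[
G_{2^{m}s}=(2u)^2-2=4u^2-2\equiv 2\pmod{4},
\]
so $\tfrac12 G_{2^{m}s}=2u^2-1$ is odd, i.e. $\tfrac12 G_{2^{m}s}\equiv 1\pmod{2}$, as claimed.

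I do not expect a genuine obstacle here: once the single identity $G_{2n}=G_n^2-2$ is in hand, both halves of the lemma are one-line consequences. The only points demanding a little care are (i) checking that $G_n$ (and hence $u$) is really an integer before working modulo $2$ or $4$, which the recurrence and the parity claim respectively supply, and (ii) noticing that the hypothesis $m\ge 1$ is exactly what is needed so that the doubling identity can be applied once and thereby place $G_{2^{m}s}$ in the residue class $2\bmod 4$ rather than merely among the even numbers.
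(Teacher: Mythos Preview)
Your proof is correct and follows essentially the same route as the paper: both hinge on the doubling identity $G_{2n}=G_n^2-2$ (derived from $\lambda_1\lambda_2=1$) and then iterate it along $s,2s,\dots,2^ms$, the paper via explicit induction on $m$ and you via the equivalent ``chaining'' argument. Your added remark that the recurrence $G_{n+1}=AG_n-G_{n-1}$ forces $G_n\in\mathbb{Z}$ is a useful point the paper leaves implicit.
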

\begin{proof}
	Referring to Eq.~(\ref{eq:GH}), one has
	\begin{IEEEeqnarray}{rCl}
		G_{2^m\cdot s} & =  &  \left(\frac{A+B}{2}\right)^{2^{m}\cdot s}+\left(\frac{A-B}{2}\right)^{2^{m}\cdot s}   \nonumber\\
		& =  &  \left(\left(\frac{A+B}{2}\right)^{2^{m-1}\cdot s}+\left(\frac{A-B}{2}\right)^{2^{m-1}\cdot s}\right)^2 \nonumber\\
		&    &  \;\;\;    -2 \left(\frac{A^2-B^2}{4}\right)^{2^{m-1}\cdot s}   \nonumber\\
		& =  &  (G_{2^{m-1}\cdot s})^2 -2.  \label{eq:Gsm+1}
	\end{IEEEeqnarray}
	When $m=1$, $G_{2s}=(G_{s})^2-2$. So the parity of $G_{2s}$ is the same as that of $G_{s}$ no matter $G_{s}$ is even or odd. In case $G_{s}$ is even,
	$\frac{1}{2}G_{2s}=\frac{1}{2}(G_{s})^2-1\equiv 1 \bmod 2$.	
	Proceed by induction on $m$ and assume that the lemma hold for any $m$ less than a positive integer $k>1$.
	When $m=k$, $G_{2^k\cdot s}=(G_{2^{k-1}\cdot s})^2-2$, which means that the parity of $G_{2^k\cdot s}$ is the same as that of $G_{2^{k-1}\cdot s}$ no matter $G_{2^{k-1}\cdot s}$ is even or odd. If $G_s$ is even,
	$\frac{1}{2}G_{2^k\cdot s}=\frac{1}{2}(G_{2^{k-1}\cdot s})^2-1 \equiv 1 \bmod 2$ also holds.
\end{proof}

\begin{Corollary}
If $G_n$ is even, condition~(\ref{eq:pqhGn}) is equivalent to
		\begin{equation}
		\left\{ \,
		\begin{IEEEeqnarraybox}[
		\IEEEeqnarraystrutmode
		\IEEEeqnarraystrutsizeadd{4pt}{4pt}][c]{rCl}
	p\cdot H_n         & \equiv & 0 \bmod N,    \\
    q\cdot H_n         & \equiv &0 \bmod N,     \\
	\frac{1}{2}p\cdot q\cdot H_n         & \equiv & 0 \bmod N, \\
	\frac{1}{2}	G_n & \equiv & 1 \bmod N. 	
		\end{IEEEeqnarraybox}
		\right.
		\label{eq:pqhGn2}	
		\end{equation}	
\label{co:pqhGn2}		
\end{Corollary}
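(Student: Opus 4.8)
The plan is first to notice that the two off-diagonal congruences are literally the same in \eqref{eq:pqhGn} and \eqref{eq:pqhGn2}, so only the two diagonal ones are in play, and that the implication \eqref{eq:pqhGn2}$\Rightarrow$\eqref{eq:pqhGn} is immediate: adding and subtracting $\frac{1}{2}G_n\equiv 1$ and $\frac{1}{2}p\cdot q\cdot H_n\equiv 0$ reproduces the third and fourth congruences of \eqref{eq:pqhGn}. Hence the whole content lies in the converse, where the only missing ingredient is $\frac{1}{2}p\cdot q\cdot H_n\equiv 0\bmod N$; granting this, the third congruence of \eqref{eq:pqhGn} forces $\frac{1}{2}G_n\equiv 1\bmod N$, and \eqref{eq:pqhGn2} follows. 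Here $\frac{1}{2}G_n$ is a genuine integer because $G_n$ is assumed even, and the identity $G_n^2-(A^2-4)H_n^2=4$ --- read off from \eqref{eq:GH} using $\lambda_1\lambda_2=1$ and $B^2=A^2-4$ --- shows that $p\cdot q\cdot H_n$ is even as well, so $\frac{1}{2}p\cdot q\cdot H_n$ also makes sense.

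To establish $\frac{1}{2}p\cdot q\cdot H_n\equiv 0\bmod N$, I would write $N=2^{e}M$ with $M$ odd and treat the congruence modulo $M$ and modulo $2^{e}$ separately. Modulo $M$ the number $2$ is a unit, so \eqref{eq:pqhGn} and \eqref{eq:pqhGn2} are interchangeable by simply multiplying or dividing the diagonal congruences by $2$. Modulo $2^{e}$ I split on the parity of $p$ and $q$. If $2\mid p$, then $\frac{1}{2}p\cdot q\cdot H_n=\frac{p}{2}\cdot(q\cdot H_n)\equiv 0\bmod 2^{e}$ because $q\cdot H_n\equiv 0\bmod 2^{e}$ by the second congruence of \eqref{eq:pqhGn}; the case $2\mid q$ is symmetric via the first congruence.

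The case $p,q$ both odd is the crux. Then $A=p\cdot q+2$ is odd, so the parity of $G_m$ is $3$-periodic (from $G_{m+1}=A\,G_m-G_{m-1}$, $G_0=2$, $G_1=A$) and the assumption that $G_n$ is even forces $3\mid n$; the identity $G_n^2-(A^2-4)H_n^2=4$ (with $A^2-4$ odd and $G_n$ even) forces $H_n$ to be even, say $H_n=2w$. For precisions $e\le 2$, the duplication formula $H_{3m}=H_m\,(G_m^2-1)$ together with the fact that $G_m^2-1$ is divisible by $8$ whenever $G_m$ is odd (it is then a product of two consecutive even integers) gives, by descent on the power of $3$ dividing $n$, that $8\mid H_n$; hence $\frac{1}{2}p\cdot q\cdot H_n$ is divisible by $4$, a fortiori by $2^{e}$. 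For $e\ge 3$, the first congruence of \eqref{eq:pqhGn} and the invertibility of $p$ modulo $2^{e}$ give $2^{e}\mid H_n$, i.e.\ $2^{e-1}\mid w$; substituting into $\bigl(\tfrac{1}{2}G_n\bigr)^{2}-1=(A^2-4)\,w^2$ shows $2^{2e-2}$ divides $\bigl(\tfrac{1}{2}G_n-1\bigr)\bigl(\tfrac{1}{2}G_n+1\bigr)$, and since $2e-3\ge e$ this forces $\frac{1}{2}G_n\equiv\pm1\bmod 2^{e}$; the two diagonal congruences of \eqref{eq:pqhGn} then rule out $-1$ (it would give $4\equiv 0\bmod 2^{e}$), leaving $\frac{1}{2}G_n\equiv 1\bmod 2^{e}$, whence $\frac{1}{2}p\cdot q\cdot H_n\equiv 0\bmod 2^{e}$ by the third congruence.

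The main obstacle is exactly this both-odd branch: one cannot deduce $\frac{1}{2}p\cdot q\cdot H_n\equiv 0$ from $p\cdot H_n\equiv q\cdot H_n\equiv p\cdot q\cdot H_n\equiv 0\bmod N$ by divisibility alone (those pin it down only modulo $2^{e-1}$), so one must feed the quadratic relation between $G_n$ and $H_n$ and the two diagonal congruences into the argument simultaneously, and must dispatch the small precisions $e\in\{1,2\}$ with the extra structural fact $3\mid n$. Everything else --- the coincidence of the off-diagonal congruences, the easy direction \eqref{eq:pqhGn2}$\Rightarrow$\eqref{eq:pqhGn}, the subcases where $p$ or $q$ is even, and the odd part of $N$ --- is routine.
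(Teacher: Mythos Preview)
Your argument is correct, and in fact considerably more careful than the paper's own proof. The paper simply observes that $\frac{1}{2}G_n$ and $\frac{1}{2}pqH_n$ are integers when $G_n$ is even, and then asserts that the last two congruences of \eqref{eq:pqhGn} alone yield $\frac{1}{2}G_n\equiv 1$ and $\frac{1}{2}pqH_n\equiv 0\bmod N$. That inference amounts to adding and subtracting $a-b\equiv 1$ and $a+b\equiv 1\bmod N$ and then cancelling the factor $2$, which is illegitimate when $N$ is even: e.g.\ $a=3$, $b=2$, $N=4$ satisfies both diagonal congruences but not $a\equiv 1\bmod 4$. So the paper's proof, read literally, has a gap.

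You close that gap by bringing in the two off-diagonal congruences $pH_n\equiv qH_n\equiv 0$ and the quadratic identity $G_n^2-(A^2-4)H_n^2=4$, splitting $N=2^eM$ and handling the $2$-part by a parity case analysis on $p,q$. The even-$p$ or even-$q$ branches are immediate from $qH_n\equiv 0$ (resp.\ $pH_n\equiv 0$); the both-odd branch genuinely needs the extra structure you invoke (invertibility of $p$ mod $2^e$, the identity to force $\frac{1}{2}G_n\equiv\pm1\bmod 2^e$, and the diagonal congruences to exclude $-1$), together with the $3\mid n$ and $8\mid H_n$ observations for the residual cases $e\le 2$. All of this checks out. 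What your approach buys is an honest proof of the equivalence; what the paper's approach buys is brevity at the cost of rigor.
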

\begin{proof}
If $G_n$ is even, $\frac{1}{2}G_n$ is an integer.
As $\frac{1}{2}G_n\pm \frac{1}{2}p\cdot q\cdot H_n$ is an integer,
$\frac{1}{2}G_n-\frac{1}{2}p\cdot q\cdot H_n$ is also an integer.
So, one can get $\frac{1}{2}p\cdot q\cdot H_n \equiv 0 \bmod N$
and $\frac{1}{2}	G_n \equiv 1 \bmod N$ from the last two congruences in condition~(\ref{eq:pqhGn}).
\end{proof}

\begin{lemma}	
Sequence $\{H_n\}_{n=1}^{\infty}$ satisfies
	\begin{equation}
	H_{2^m\cdot s}=H_{s}\cdot \prod_{j=0}^{m-1}G_{2^j\cdot s},
	\label{eq:Hs}
	\end{equation}
	where $m$ and $s$ are positive integers.
\label{le:HG}
\end{lemma}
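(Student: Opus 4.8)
The plan is to reduce the whole statement to a single ``doubling'' identity $H_{2n}=H_n\cdot G_n$ and then iterate it. Write $\alpha=\frac{A+B}{2}$ and $\beta=\frac{A-B}{2}$ for the two characteristic roots appearing in Theorem~\ref{theorem:Cat}, so that by Eq.~\eqref{eq:GH} we have $H_n=\frac{1}{B}(\alpha^n-\beta^n)$ and $G_n=\alpha^n+\beta^n$, and note $\alpha\beta=\frac{A^2-B^2}{4}=1$ because $B^2=A^2-4$. First I would record the base identity: for any positive integer $n$,
\[
H_{2n}=\frac{1}{B}\bigl(\alpha^{2n}-\beta^{2n}\bigr)=\frac{1}{B}(\alpha^n-\beta^n)(\alpha^n+\beta^n)=H_n\cdot G_n,
\]
which is nothing more than the difference-of-squares factorization. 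Replacing $n$ by $2^{j}s$ gives $H_{2^{j+1}s}=H_{2^{j}s}\cdot G_{2^{j}s}$ for every $j\ge 0$.

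Then I would finish by induction on $m$. For $m=1$ the claim is exactly the displayed identity with $n=s$, i.e. $H_{2s}=H_s\cdot G_s=H_s\cdot\prod_{j=0}^{0}G_{2^{j}s}$. Assuming $H_{2^{m-1}s}=H_s\prod_{j=0}^{m-2}G_{2^{j}s}$, I multiply both sides by $G_{2^{m-1}s}$ and apply the doubling identity to the left-hand side to obtain
\[
H_{2^{m}s}=H_{2^{m-1}s}\cdot G_{2^{m-1}s}=H_s\prod_{j=0}^{m-1}G_{2^{j}s},
\]
which closes the induction and proves the lemma.

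There is essentially no obstacle here; the only point that deserves a word of care is that $B=\sqrt{A^{2}-4}$ need not be a rational (or even nonzero) number, so one should read the manipulations above as formal identities in $\mathbb{Q}(\sqrt{A^{2}-4})$, or equivalently observe that $H_n$ and $G_n$ are the integer Lucas sequences $U_n$ and $V_n$ of parameters $(A,1)$. In the latter viewpoint one may instead verify $H_{2n}=H_nG_n$ straight from the linear recurrences $G_{n+1}=AG_n-G_{n-1}$ and $H_{n+1}=AH_n-H_{n-1}$ (together with $G_0=2,G_1=A,H_0=0,H_1=1$), so the degenerate case $A^{2}=4$ requires no separate treatment and the argument goes through uniformly.
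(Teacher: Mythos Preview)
Your proof is correct and follows essentially the same approach as the paper: both establish the doubling identity $H_{2n}=H_n\cdot G_n$ via the difference-of-squares factorization of $\alpha^{2n}-\beta^{2n}$ and then carry out an induction on $m$. Your version is marginally tidier in that you isolate the doubling identity once and for all before the induction, and your closing remark about reading the computation in $\mathbb{Q}(\sqrt{A^2-4})$ (or via the Lucas recurrences) to cover the degenerate case $A^2=4$ is a nice touch the paper omits.
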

\begin{proof}
	This Lemma is proved via mathematical induction on $m$.
	When $m=1$,
	\begin{IEEEeqnarray}{rCl}
		H_{2s} & =  &  \frac{1}{B}\left( \left(\frac{A+B}{2}\right)^{2s}-\left(\frac{A-B}{2}\right)^{2s} \right)\nonumber  \\
		& =  & H_s \cdot G_s.
	\label{eq:HsGs}
	\end{IEEEeqnarray}
	Now, assume the lemma is true for any $m$ in Eq.~(\ref{eq:Hs}) less than $k$.
	When $m=k$,
	\begin{IEEEeqnarray}{rCl}
		H_{2^k\cdot s} & = & \frac{1}{B}\left( \left(\frac{A+B}{2}\right)^{2^k\cdot s}-\left(\frac{A-B}{2}\right)^{2^k\cdot s} \right)   \nonumber\\
		& =  & \frac{1}{B}\left( \left(\frac{A+B}{2}\right)^{2^{k-1}\cdot s}-\left(\frac{A-B}{2}\right)^{2^{k-1}\cdot s} \right) \nonumber\\
		&   & \;\;\; \cdot \left( \left(\frac{A+B}{2}\right)^{2^{k-1}\cdot s}+ \left(\frac{A-B}{2}\right)^{2^{k-1}\cdot s} \right) \nonumber\\
		& =  &  H_{2^{k-1}\cdot s} \cdot G_{2^{k-1}\cdot s}.  \label{eq:HnHG}
	\end{IEEEeqnarray}
	The above induction completes the proof of the lemma.
\end{proof}

\begin{Proposition}
	The inverse of the $n$-th iteration of Cat map matrix
	\begin{equation}
	\textbf{C}^{-n}=
	\begin{bmatrix}
	\frac{1}{2}G_n+\frac{A-2}{2}H_n & -p\cdot H_n    \\
	-q\cdot H_n                      & \frac{1}{2}G_n-\frac{A-2}{2}H_n
	\end{bmatrix}.
	\label{eq:InverseiterateCat}
	\end{equation}
	\label{prop:inverse}
\end{Proposition}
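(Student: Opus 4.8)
The plan is to verify directly that the matrix $\textbf{M}$ displayed on the right-hand side of Eq.~(\ref{eq:InverseiterateCat}) is a two-sided inverse of $\textbf{C}^{n}$, i.e. $\textbf{C}^{n}\cdot\textbf{M}=\textbf{I}$; since a matrix with nonzero determinant has a unique inverse, this identifies $\textbf{M}$ with $\textbf{C}^{-n}$. The structural point to notice first is that $\textbf{M}$ is precisely the adjugate of the $2\times 2$ matrix $\textbf{C}^{n}$ in Theorem~\ref{theorem:Cat}: interchanging the two diagonal entries of Eq.~(\ref{eq:iterateCat}) and negating the two off-diagonal entries reproduces Eq.~(\ref{eq:InverseiterateCat}) verbatim. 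Hence the proposition is equivalent to the single statement $\det\textbf{C}^{n}=1$, after which $\textbf{C}^{-n}=\mathrm{adj}(\textbf{C}^{n})/\det\textbf{C}^{n}=\textbf{M}$.

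The one computation to carry out, then, is $\det\textbf{C}^{n}=1$. The quick route: from Eq.~(\ref{eq:MatMatrix}), $\det\textbf{C}=1\cdot(1+p\cdot q)-p\cdot q=1$, so $\det\textbf{C}^{n}=(\det\textbf{C})^{n}=1$. The route that stays inside the explicit formulas: expanding the determinant of Eq.~(\ref{eq:iterateCat}) and using $p\cdot q=A-2$ collapses it to $\tfrac14\big(G_{n}^{2}-(A^{2}-4)H_{n}^{2}\big)=\tfrac14\big(G_{n}^{2}-B^{2}H_{n}^{2}\big)$; substituting the closed forms of Eq.~(\ref{eq:GH}) with $\lambda_{1}=\tfrac{A+B}{2}$, $\lambda_{2}=\tfrac{A-B}{2}$ and using $\lambda_{1}\lambda_{2}=\tfrac{A^{2}-B^{2}}{4}=1$ gives $G_{n}^{2}-B^{2}H_{n}^{2}=(\lambda_{1}^{n}+\lambda_{2}^{n})^{2}-(\lambda_{1}^{n}-\lambda_{2}^{n})^{2}=4(\lambda_{1}\lambda_{2})^{n}=4$, whence $\det\textbf{C}^{n}=1$.

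If one prefers to confirm $\textbf{C}^{n}\cdot\textbf{M}=\textbf{I}$ by hand, the multiplication is immediate: the two off-diagonal entries cancel because the repeated factor $\tfrac12 G_{n}-\tfrac{A-2}{2}H_{n}$ appears once multiplied by $+p\cdot H_{n}$ and once by $-p\cdot H_{n}$ (similarly for $q$), and each diagonal entry equals $\tfrac14(G_{n}^{2}-B^{2}H_{n}^{2})=1$ by the computation just made. A cleaner conceptual alternative, which exploits the diagonalization $\textbf{C}=\textbf{P}\Lambda\textbf{P}^{-1}$ already established in the proof of Theorem~\ref{theorem:Cat}, is to note $\textbf{C}^{-n}=\textbf{P}\Lambda^{-n}\textbf{P}^{-1}$ and that replacing $n$ by $-n$ in Eq.~(\ref{eq:GH}) gives $G_{-n}=G_{n}$ and $H_{-n}=-H_{n}$ (because $\lambda_{1}^{-1}=\lambda_{2}$ and $\lambda_{2}^{-1}=\lambda_{1}$, thanks to $\lambda_{1}\lambda_{2}=1$); then Eq.~(\ref{eq:InverseiterateCat}) is literally Eq.~(\ref{eq:iterateCat}) evaluated at $-n$.

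There is no real obstacle here. The only place needing a moment of care is the algebraic collapse $\tfrac{(A-2)^{2}}{4}+p\cdot q=\tfrac{(A-2)^{2}+4(A-2)}{4}=\tfrac{A^{2}-4}{4}=\tfrac{B^{2}}{4}$ via $p\cdot q=A-2$, together with recognizing the leftover factor as $(\lambda_{1}\lambda_{2})^{n}$; everything else is routine $2\times 2$ matrix arithmetic.
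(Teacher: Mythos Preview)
Your argument is correct, but it proceeds by a genuinely different route from the paper. You recognize the right-hand side of Eq.~(\ref{eq:InverseiterateCat}) as the adjugate of $\textbf{C}^{n}$ and reduce the whole proposition to $\det\textbf{C}^{n}=1$, which you establish either multiplicatively from $\det\textbf{C}=1$ or via the identity $G_{n}^{2}-B^{2}H_{n}^{2}=4(\lambda_{1}\lambda_{2})^{n}=4$; you also note the elegant symmetry $G_{-n}=G_{n}$, $H_{-n}=-H_{n}$ coming from $\lambda_{1}^{-1}=\lambda_{2}$. The paper instead derives a doubling relation: using $G_{2n}=G_{n}^{2}-2$ and $H_{2n}=H_{n}G_{n}$ it shows $\textbf{C}^{2n}=G_{n}\,\textbf{C}^{n}-\mathcal{I}_{2}$, then reads off $\textbf{C}^{-n}=G_{n}\,\mathcal{I}_{2}-\textbf{C}^{n}$, which expands to Eq.~(\ref{eq:InverseiterateCat}). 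Your approach is shorter and more self-contained; the paper's has the side benefit that the identity $\textbf{C}^{2n}=G_{n}\,\textbf{C}^{n}-\mathcal{I}_{2}$ is itself reused later (see Eq.~(\ref{eq:C2n}) in the proof of Theorem~\ref{the:numbercycle}), so establishing it here is not wasted effort in the paper's overall economy.
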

\begin{proof}
Referring to Eq.~(\ref{eq:Gsm+1}) and Lemma~\ref{le:HG}, one has
	\begin{equation*}
	\begin{split}
	\textbf{C}^{2n} & =
	\begin{bmatrix}
	\frac{1}{2}G_{2n}-\frac{A-2}{2}H_{2n} & p\cdot H_{2n}    \\
	q\cdot H_{2n}                      & \frac{1}{2}G_{2n}+\frac{A-2}{2}H_{2n}
	\end{bmatrix}\\
	&=
	\begin{bmatrix}
	\frac{1}{2}G_{n}^2-1-\frac{A-2}{2}H_{n}G_n & p\cdot H_{n}G_n    \\
	q\cdot H_{n}G_n                      & \frac{1}{2}G_{n}^2-1+\frac{A-2}{2}H_{n}G_n
	\end{bmatrix}  \\
	&= G_n\cdot \textbf{C}^{n}+\begin{bmatrix}
	-1 & 0    \\
	0 & -1
	\end{bmatrix}.
	\end{split}
	\end{equation*}
	So, the $2n$-th iteration of Cat map matrix \eqref{eq:MatMatrix} satisfies
	\begin{IEEEeqnarray*}{rCl}
		G_n\cdot \textbf{C}^{n}-\textbf{C}^{2n} & = & \textbf{C}^{n}\cdot(G_n\cdot\mathcal{I}_2-\textbf{C}^{n})\\
		& = & \mathcal{I}_2.
	\end{IEEEeqnarray*}	
	Substituting $\textbf{C}^n$ in the above equation with Eq.~(\ref{eq:iterateCat}), one can get
	Eq.~(\ref{eq:InverseiterateCat}).	
\end{proof}

\subsection{Properties on iterating Cat map over $(\mathbb{Z}_{2^{\hat{e}}}, +,\ \cdot\ )$}

In this sub-section, how the graph structure of Cat map changes with the binary implementation precision are disclosed. To study the change process with the incremental increase of the precision $e$ from one, let $\hat{e}$ denote the given implementation precision instead, which is the upper bound of $e$.

Using Proposition~\ref{prop:leastPeriod} and Lemma~\ref{le:gcd} on the greatest common divisor
of three integers, the necessary and sufficient condition for the least period of Cat map~\eqref{eq:ArnoldInteger} over $(\mathbb{Z}_{2^e}, +,\ \cdot\ )$ is simplified as
Proposition~\ref{prop:periode}. Then, Lemmas~\ref{eq:Gscondition2l}, \ref{eq:Gscondition2le1}, and \ref{le:Hscondition} describe how the two parameters of the least period of Cat map, $\frac{1}{2}G_n$ and $H_n$, change
with the implementation precision.

\begin{Proposition}
The least period of Cat map~\eqref{eq:ArnoldInteger} over $(\mathbb{Z}_{2^e}, +,\ \cdot\ )$ is $T$ if and only if $T$ is the minimum value of $n$ satisfying	
\begin{empheq}[left=\empheqlbrace]{align}
	\frac{1}{2}	G_n & \equiv  1 \bmod 2^e,       \label{eq:Gne}	\\
	            H_n & \equiv  0 \bmod 2^{e-h_e}, \label{eq:Hneh}
\end{empheq}
where
\begin{equation}
h_e=
\begin{cases}
-1              & \mbox{if } e_p+e_q=0;\\
\min(e_p, e_q)  & \mbox{if } e>\min(e_p, e_q), e_p+e_q\neq 0;\\
e             & \mbox{if } e\le \min(e_p, e_q),
\end{cases}
\label{eq:he}
\end{equation}
$e_p=\max\{ x \mid p\equiv 0 \bmod 2^x \}$, $e_q=\max\{ x \mid q\equiv 0 \bmod 2^x \}$, $e\ge 1$, and $p, q\in \mathbb{Z}_{2^{\hat{e}}}$.
\label{prop:periode}
\end{Proposition}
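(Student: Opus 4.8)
The plan is to obtain Proposition~\ref{prop:periode} from Proposition~\ref{prop:leastPeriod} by proving that, for $N=2^{e}$ with $e\ge 1$, the set of positive integers $n$ satisfying the four congruences~\eqref{eq:pqhGn} coincides with the set satisfying the pair~\eqref{eq:Gne}--\eqref{eq:Hneh}. Since Proposition~\ref{prop:leastPeriod} identifies the least period with the minimum of the former set, and both sets are nonempty (the bijection $f$ on the finite set $\mathbb{Z}^{2}_{2^{e}}$ has finite order, so a least period exists), equality of the two sets at once yields equality of their minima, which is exactly the asserted ``if and only if''.

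First I would reduce~\eqref{eq:pqhGn} to the simpler system of Corollary~\ref{co:pqhGn2} by checking that its hypothesis ``$G_n$ even'' holds in both directions. The diagonal entries of $\textbf{C}^{n}$ are the integers $\tfrac12 G_n-\tfrac12 p q H_n$ and $\tfrac12 G_n+\tfrac12 p q H_n$ (using $A-2=pq$ in~\eqref{eq:iterateCat}); if~\eqref{eq:pqhGn} holds, its third and fourth congruences make each of these $\equiv 1\bmod 2^{e}$, hence odd for $e\ge 1$, so their sum $G_n$ is even. In the other direction,~\eqref{eq:Gne} says $\tfrac12 G_n$ is an integer, so again $G_n$ is even. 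Thus in either case Corollary~\ref{co:pqhGn2} applies, and~\eqref{eq:pqhGn} is equivalent to the conjunction of $p H_n\equiv q H_n\equiv 0$ and $\tfrac12 p q H_n\equiv 0$ (all $\bmod 2^{e}$) together with $\tfrac12 G_n\equiv 1\bmod 2^{e}$; the last of these is precisely~\eqref{eq:Gne}, so everything now reduces to matching the first three congruences with~\eqref{eq:Hneh}.

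For that step, write $s=\max\{x\mid H_n\equiv 0\bmod 2^{x}\}$, in analogy with $e_p,e_q$. The three congruences read, respectively, $s\ge e-e_p$, $s\ge e-e_q$, and $s\ge e+1-e_p-e_q$ (the third being well posed because $p q H_n\equiv G_n\equiv 0\bmod 2$, so $\tfrac12 p q H_n$ is an integer). This is exactly where Lemma~\ref{le:gcd} on the greatest common divisor of three integers is invoked: it collapses the three divisibility requirements into the single requirement $s\ge e-h_e$, i.e. $H_n\equiv 0\bmod 2^{e-h_e}$. It then remains to verify that $e-h_e=\max(e-e_p,\,e-e_q,\,e+1-e_p-e_q)$ for $h_e$ as in~\eqref{eq:he}, by the same trichotomy: if $e_p=e_q=0$ the maximum is $e+1$, giving $h_e=-1$; if at least one of $p,q$ is even and $e>\min(e_p,e_q)$, then whichever of $e_p,e_q$ is not the minimum is $\ge 1$, so the maximum equals $e-\min(e_p,e_q)$, giving $h_e=\min(e_p,e_q)$; and if $e\le\min(e_p,e_q)$ all three quantities are $\le 0$, so the condition on $H_n$ is vacuous and $2^{e-h_e}=2^{0}$, i.e. $h_e=e$.

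I do not expect a serious obstacle, only careful bookkeeping. The delicate points are: confirming that $\tfrac12 p q H_n$ is genuinely an integer wherever the congruence $\tfrac12 p q H_n\equiv 0\bmod 2^{e}$ is used (guaranteed by $G_n$ even via the $(1,1)$ entry of $\textbf{C}^{n}$); handling the degenerate regimes $e\le\min(e_p,e_q)$ and $e_p=e_q=0$ so that the three-line formula~\eqref{eq:he} is reproduced exactly rather than the raw $\min(e_p,e_q,e_p+e_q-1)$; and stating Lemma~\ref{le:gcd} in the form needed to pass cleanly from three divisibility constraints on $H_n$ to one. Beyond Proposition~\ref{prop:leastPeriod}, Corollary~\ref{co:pqhGn2} and Lemma~\ref{le:gcd}, no further information about the sequences $G_n$ and $H_n$ is required.
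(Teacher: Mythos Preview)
Your proposal is correct and follows essentially the same route as the paper: invoke Corollary~\ref{co:pqhGn2} to replace~\eqref{eq:pqhGn} by~\eqref{eq:pqhGn2}, keep the last congruence as~\eqref{eq:Gne}, and collapse the three $H_n$-conditions into the single~\eqref{eq:Hneh} via $2$-adic valuations and Lemma~\ref{le:gcd}, then check the case split~\eqref{eq:he}. Two small remarks: your explicit verification that $G_n$ is even in both directions (so that Corollary~\ref{co:pqhGn2} genuinely applies as an equivalence) is a point the paper leaves implicit, and is a good addition; and note that Lemma~\ref{le:gcd} as stated in the paper handles \emph{two} integers, not three---the paper applies it to the first two congruences to produce an exponent $h_{e,1}$ and then takes $\min(h_{e,1},h_{e,2})$ with the third, whereas your direct ``$s\ge\max(e-e_p,\,e-e_q,\,e+1-e_p-e_q)$'' argument accomplishes the same thing without really needing the lemma at all.
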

\begin{proof}
Setting $N=2^e$ in Eq.~(\ref{eq:pqhGn2}), its last congruence becomes Eq.~(\ref{eq:Gne}).	Referring to Corollary~\ref{co:pqhGn2}, one can see that
this proposition can be proved by demonstrating that Eq.~(\ref{eq:Hneh}) is equivalent to the first three congruences in Eq.~(\ref{eq:pqhGn2}). Combining the first two congruences in Eq.~(\ref{eq:pqhGn2}), one has
\begin{equation}
H_n \equiv  0 \bmod 2^{e-h_{e, 1}},
\label{eq:Hnh1}
\end{equation}
where
$2^{e-h_{e, 1}}= \lcm\left( \frac{2^e}{\gcd(2^e, p)}, \frac{2^e}{\gcd(2^e, q)} \right)$.
Referring to Lemma~\ref{le:gcd},
one can get $h_{e, 1}=\max\{x\, | \gcd(p, q)\bmod 2^e\equiv 0 \bmod 2^x \}$
as
\[\lcm\left( \frac{2^e}{\gcd(2^e, p)}, \frac{2^e}{\gcd(2^e, q)} \right)=\frac{2^e}{\gcd( \gcd(2^e, p), \gcd(2^e, q))}.\]
The third congruence in Eq.~(\ref{eq:pqhGn2})
is equivalent to
\begin{equation}
H_n \equiv
\begin{cases}
0 \bmod \frac{2^e\cdot \gcd(2^e, 2)}{\gcd(2^e, p\cdot q)} & \mbox{if } e_p+e_q=0;\\
0 \bmod \frac{2^e}{\gcd(2^e, \frac{1}{2}p\cdot q)} & \mbox{if } e_p+e_q>0.
\end{cases}
\label{eq:Hnh2}
\end{equation}
Combining Eq.~(\ref{eq:Hnh1}) and Eq.~\eqref{eq:Hnh2}, one can get
$h_e=\min(h_{e, 1}, h_{e, 2})$ to assure that
Eq.~(\ref{eq:Hneh}) is equivalent to the first three congruences in Eq.~(\ref{eq:pqhGn2}),
where
\begin{equation}
h_{e, 2}=
\begin{cases}
\min(e, e_p+e_q)-1 & \mbox{if } e_p+e_q=0;\\
\min(e, e_p+e_q-1) & \mbox{if } e_p+e_q>0.
\end{cases}
\end{equation}
From the definition of $h_{e, 1}$ and $h_{e, 2}$, one has
\[
h_{e, 1}=
\begin{cases}
\min(e_p, e_q)  & \mbox{if } e>\min(e_p, e_q);\\
e               & \mbox{if } e\le \min(e_p, e_q),
\end{cases}
\]
and
\[
h_{e, 2}=
\begin{cases}
e_p+e_q-1       & \mbox{if } e\ge e_p+e_q;\\
e               & \mbox{if } e<e_p+e_q.
\end{cases}
\]
So,
\[
h_e=
\begin{cases}
\min(\min(e_p, e_q), e_p+e_q-1 ) & \mbox{if } e\ge e_p+e_q;\\
\min(\min(e_p, e_q), e)          & \mbox{if } \min(e_p, e_q)<e\\
& \hfill < e_p+e_q;\\
e                     & \mbox{if } e\le\min(e_p, e_q).
\end{cases}
\]
One can verify that
\[\min(\min(e_p, e_q), e_p+e_q-1 )=
\begin{cases}
\min(e_p, e_q) & \mbox{if } e_p+e_q>0;\\
-1             & \mbox{if } e_p+e_q=0.
\end{cases}
\]
If $\min(e_p, e_q)<e$, one has $\min(\min(e_p, e_q), e)=\min(e_p, e_q)$.
So, $h_e$ can be calculated as Eq.~(\ref{eq:he}).
\end{proof}

\begin{lemma}
For any integers $a, b, n$,
	one has
	\begin{equation*}
	\gcd\left(\gcd(d^n, a), \gcd(d^n, b)\right)=d^{n_g},
	\end{equation*}
	where \[n_g=\max\{ x\, |\, \gcd(a, b)\bmod d^n\equiv 0 \bmod d^x\},\]
	$d$ is a prime number, $\lcm$ and $\gcd$ denote the operator solving the least common multiple and greatest common divisor of two numbers, respectively.
\label{le:gcd}
\end{lemma}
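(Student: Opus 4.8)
The plan is to reduce the claim to a statement about $d$-adic valuations, exploiting that $d$ is prime. For a nonzero integer $m$ let $v_d(m)$ denote the exponent of the highest power of $d$ dividing $m$, and set $v_d(0)=\infty$. The only elementary fact needed is that, for a prime power, $\gcd(d^n,m)=d^{\min(n,v_d(m))}$, which holds because the divisors of $d^n$ are precisely $1,d,\dots,d^n$.

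First I would collapse the left-hand side using associativity and commutativity of $\gcd$:
\[
\gcd\bigl(\gcd(d^n,a),\gcd(d^n,b)\bigr)=\gcd(d^n,a,b)=\gcd\bigl(d^n,\gcd(a,b)\bigr),
\]
and then apply the elementary identity with $m=\gcd(a,b)$, noting $v_d(\gcd(a,b))=\min(v_d(a),v_d(b))$. This already expresses the left-hand side as $d^{\min(n,\,v_d(\gcd(a,b)))}$.

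Second I would show that the exponent $n_g$ equals $\min(n,\,v_d(\gcd(a,b)))$. Put $c:=\gcd(a,b)\bmod d^n$. For any $x\le n$ we have $d^x\mid d^n$, so $d^x\mid c$ if and only if $d^x\mid\gcd(a,b)$; hence the largest $x\le n$ with $c\equiv 0\bmod d^x$ is exactly $\min(n,\,v_d(\gcd(a,b)))$. Since $0\le c<d^n$, when $c\ne 0$ no $x\ge n$ can satisfy $d^x\mid c$, so this largest value is the true maximum and equals $n_g$; when $c=0$ one reads $n_g$ as the value $n$ imposed by the reduction modulo $d^n$, which is the sense in which the lemma is invoked in the proof of Proposition~\ref{prop:periode}. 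Comparing with the previous paragraph yields $\gcd\bigl(\gcd(d^n,a),\gcd(d^n,b)\bigr)=d^{n_g}$.

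The routine part is the chain of $\gcd$ identities; the one point demanding care is the boundary case $d^n\mid\gcd(a,b)$, where the maximum in the definition of $n_g$ is literally unbounded and must be understood as truncated at $n$ — precisely the situation $h_{e,1}=e$ arising in Proposition~\ref{prop:periode}. Alternatively, one can avoid $\gcd(a,b)$ altogether and argue directly with the minimum of the three valuations $v_d(a),v_d(b),n$, which sidesteps the $c=0$ issue since $\min(n,v_d(a),v_d(b))$ is manifestly well defined and at most $n$.
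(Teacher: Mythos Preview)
Your proof is correct and follows essentially the same route as the paper: both reduce everything to $d$-adic valuations and identify the left-hand side as $d^{\min(n,\,v_d(a),\,v_d(b))}$, then match this exponent with $n_g$. The paper's notation $a_d,b_d$ is your $v_d(a),v_d(b)$, and its chain $\gcd(\gcd(d^n,d^{a_d}),\gcd(d^n,d^{b_d}))=d^{\min\{n,a_d,b_d\}}$ is your ``alternative'' formulation via the three-way minimum; your intermediate use of associativity to pass through $\gcd(d^n,\gcd(a,b))$ is a harmless cosmetic difference, and your explicit treatment of the boundary case $d^n\mid\gcd(a,b)$ is a point the paper glosses over.
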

\begin{proof}
	Let $a_d=\max\{ x\, |\, a \equiv 0 \bmod d^x\}$, $b_d=\max\{ x\, |\, b \equiv 0 \bmod d^x\}$,
	so 	$\min\{n, a_d,  b_d\}=\max\{ x\, |\, \gcd(a, b)\bmod d^n\equiv 0 \bmod d^x\}$.
	Then, one has	
	\begin{IEEEeqnarray*}{rCl}
		\IEEEeqnarraymulticol{3}{l}{\gcd\left(\gcd(d^n, a), \gcd(d^n, b)\right)}\nonumber\\\quad\quad
		&=&	 \gcd\left(\gcd(d^n, d^{a_d}), \gcd(d^n, d^{b_d})\right)\\
		&=&	d^{ \min\{ \min\{n, a_d\}, \min\{n, b_d\} \} }\\
		&=&	d^{ \min\{n, a_d,  b_d\} }\\
		&=& d^{n_g}.
	\end{IEEEeqnarray*}	
\end{proof}

\begin{lemma}
	Given an integer $e>1$, if $m$ and $s$ satisfy
	\begin{equation}
	\left\{
	\begin{split}
	\frac{1}{2}G_{2^m\cdot s} & \equiv 1 \bmod 2^{e},   \\
	\frac{1}{2}G_{2^m\cdot s} & \not\equiv   1 \bmod 2^{e+1},
	\end{split}
	\right.
	\label{eq:Gscondition}
	\end{equation}
	one has
	\begin{equation}
	\left\{
	\begin{split}	
	\frac{1}{2}G_{2^{m+l}\cdot s} &  \equiv 1 \bmod 2^{e+2l-1},  \\
	\frac{1}{2}G_{2^{m+l}\cdot s}     &  \equiv 1 \bmod 2^{e+2l},    \\
	\frac{1}{2}G_{2^{m+l}\cdot s}     &  \not\equiv 1 \bmod 2^{e+2l+1},
	\end{split}
	\right.
	\label{eq:Gscondition2l}
	\end{equation}	
	where $s$ and $l$ are positive integers and $m$ is a non-negative integer.
	\label{le:Gscondition}
\end{lemma}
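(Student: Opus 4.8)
The plan is to track the exact power of $2$ dividing $\frac{1}{2}G_{2^{m+l}\cdot s}-1$ and to show it equals $e+2l$ for every $l\ge 0$; writing $v_2(x)$ for that exponent, the three congruences in~\eqref{eq:Gscondition2l} follow immediately once $v_2\bigl(\frac{1}{2}G_{2^{m+l}\cdot s}-1\bigr)=e+2l$ is known. The engine of the argument is the squaring recursion $G_{2^{k}\cdot s}=(G_{2^{k-1}\cdot s})^2-2$ already established in Eq.~\eqref{eq:Gsm+1}.

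First I would reformulate the hypothesis~\eqref{eq:Gscondition}: since $\frac{1}{2}G_{2^m\cdot s}\equiv 1\bmod 2^{e}$ but $\not\equiv 1\bmod 2^{e+1}$, there is an odd integer $u$ with $G_{2^m\cdot s}=2+2^{e+1}u$, i.e.\ $v_2\bigl(\frac{1}{2}G_{2^m\cdot s}-1\bigr)=e$. This is the base case $l=0$ of an induction on $l$.

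For the inductive step I would assume $G_{2^{m+l}\cdot s}=2+2^{e+2l+1}v$ with $v$ odd and substitute into Eq.~\eqref{eq:Gsm+1}:
\begin{equation*}
G_{2^{m+l+1}\cdot s}=(2+2^{e+2l+1}v)^2-2=2+2^{e+2l+3}v+2^{2e+4l+2}v^2,
\end{equation*}
so that
\begin{equation*}
\frac{1}{2}G_{2^{m+l+1}\cdot s}-1=2^{e+2l+2}\bigl(v+2^{e+2l-1}v^2\bigr).
\end{equation*}
Because $e>1$ we have $e+2l-1\ge 1$ for every $l\ge 0$, hence $2^{e+2l-1}v^2$ is even while $v$ is odd, so $v+2^{e+2l-1}v^2$ is odd and $v_2\bigl(\frac{1}{2}G_{2^{m+l+1}\cdot s}-1\bigr)=e+2(l+1)$, which closes the induction. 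Converting $v_2\bigl(\frac{1}{2}G_{2^{m+l}\cdot s}-1\bigr)=e+2l$ back into congruences yields $\frac{1}{2}G_{2^{m+l}\cdot s}\equiv 1\bmod 2^{e+2l}$ (and therefore also $\bmod\,2^{e+2l-1}$), together with $\frac{1}{2}G_{2^{m+l}\cdot s}\not\equiv 1\bmod 2^{e+2l+1}$, which are exactly the three assertions of~\eqref{eq:Gscondition2l}.

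The only delicate point --- and the sole place the assumption $e>1$ is used --- is making sure that squaring does not drop the valuation: the cross term $2^{2e+4l+2}v^2$ in $(2+2^{e+2l+1}v)^2$ must have strictly larger $2$-adic valuation than the linear term $2^{e+2l+3}v$, i.e.\ $2e+4l+2>e+2l+3$, equivalently $e+2l>1$, which fails only for $e=1,\ l=0$. Indeed, when $e=1$ the factor $v+2^{e+2l-1}v^2=v+v^2=v(v+1)$ is even, so the valuation would jump by more than two and the uniform doubling pattern of~\eqref{eq:Gscondition2l} would break; this is precisely why the lemma is stated for $e>1$.
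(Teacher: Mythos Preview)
Your proof is correct and follows essentially the same route as the paper's: induction on $l$ via the recursion $G_{2n}=G_n^2-2$ (Eq.~\eqref{eq:Gsm+1}), writing $\frac{1}{2}G_{2^{m+l}\cdot s}-1$ as $2^{e+2l}$ times an odd integer and checking that oddness survives the squaring because $e>1$. The only differences are cosmetic --- you package the bookkeeping with the $2$-adic valuation $v_2$ and start the induction at $l=0$, and you add an explicit remark on why the hypothesis $e>1$ is indispensable --- but the underlying computation is identical to the paper's.
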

\begin{proof}
	This lemma is proved via mathematical induction on $l$.
	From condition~(\ref{eq:Gscondition}), one can get $\frac{1}{2}G_{2^m\cdot s}=1+a_{e}\cdot 2^{e}$ , where $a_{e}$ is an odd integer. Then, one has
	\begin{IEEEeqnarray}{rCl}
		\frac{1}{2}G_{2^{m+1}\cdot s} & = &  \frac{1}{2}(G_{2^{m}\cdot s}^2-2) \nonumber \\
		& = &  \frac{1}{2}((2+a_e \cdot 2^{e+1})^2-2)   \nonumber \\
		& = &  2^{e+2}\cdot a_e\cdot (a_e\cdot 2^{e-1}+1)+1.
		\label{eq:G2msl=1}
	\end{IEEEeqnarray}
	As $a_e\cdot (a_e\cdot 2^{e-1}+1)$ is odd, condition~(\ref{eq:Gscondition2l}) exists for $l=1$. Assume that condition~(\ref{eq:Gscondition2l}) hold for $l=k$, namely
	$\frac{1}{2}G_{2^{m+k}\cdot s}=1+a_{e+2k}\cdot 2^{e+2k}$,
	where $a_{{e}+2k}$ is an odd integer.
	When $l=k+1$, one has
	\begin{IEEEeqnarray*}{rCl}
		\frac{1}{2}G_{2^{m+k+1}\cdot s}
		& = & \frac{1}{2}( G_{2^{m+k}\cdot s}^2-2)  \\
		& = & \frac{1}{2}((2+a_{e+2k} \cdot 2^{{e}+2k+1})^2-2)\\
		& = & a_{e+2k}^2\cdot 2^{2e+4k+1}+ a_{e+2k}\cdot 2^{e+2k+2} +1    \\
		& = & (a_{e+2k}^2\cdot 2^{e+2k-1}+ a_{e+2k})\cdot 2^{e+2k+2} +1.	
	\end{IEEEeqnarray*}
	As $a_{e+2k}^2\cdot 2^{e+2k-1}+ a_{e+2k}$ is an odd integer,
	condition~(\ref{eq:Gscondition2l}) also hold for $l=k+1$.
\end{proof}

\begin{lemma}
	If there is an odd integer $a_1$ satisfying $\frac{1}{2}G_{2^m\cdot s}=2\cdot a_1+1$, namely
	\begin{equation}
	\left\{
	\begin{split}
	\frac{1}{2}G_{2^m\cdot s} & \equiv 1 \bmod 2,   \\
	\frac{1}{2}G_{2^m\cdot s} & \not\equiv   1 \bmod 2^{2},
	\end{split}
	\right.
	\label{eq:Gsconditione1}
	\end{equation}
	one has
	\begin{equation}
	\left\{
	\begin{split}
	\frac{1}{2}G_{2^{m+1}\cdot s} &  \equiv 1 \bmod 2^{e_{g,0}}, \\
	\frac{1}{2}G_{2^{m+1}\cdot s} &  \not\equiv 1 \bmod 2^{e_{g,0}+1},
	\end{split}
	\right.
	\label{eq:Gscondition2le1}
	\end{equation}
	where
	$e_{g,0}=3+\max\{x \mid (a_1+1)\equiv 0 \bmod 2^x\}$.
	\label{le:Gms=1}
\end{lemma}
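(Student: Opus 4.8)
The plan is to mimic the doubling argument used in Lemma~\ref{le:Gscondition}, but starting from the weaker hypothesis~\eqref{eq:Gsconditione1} (only $\frac{1}{2}G_{2^m\cdot s}\equiv 1\bmod 2$ rather than $\bmod\ 2^e$ with $e>1$), so that the squaring step~\eqref{eq:Gsm+1} has to be unwound one level more carefully. First I would write $\frac{1}{2}G_{2^m\cdot s}=2a_1+1$ with $a_1$ an arbitrary integer (the hypothesis $\not\equiv 1\bmod 4$ is automatic once $a_1$ is allowed to be anything, so really the content is just that $G_{2^m\cdot s}$ is even), hence $G_{2^m\cdot s}=4a_1+2=2(2a_1+1)$. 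Then I would apply the recurrence $G_{2^{m+1}\cdot s}=(G_{2^m\cdot s})^2-2$ from Eq.~\eqref{eq:Gsm+1} to get
\[
\tfrac{1}{2}G_{2^{m+1}\cdot s}=\tfrac{1}{2}\bigl((4a_1+2)^2-2\bigr)=8a_1^2+8a_1+1=8a_1(a_1+1)+1.
\]

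Next I would analyze the $2$-adic valuation of $8a_1(a_1+1)$. Since $a_1(a_1+1)$ is a product of two consecutive integers it is always even, so $\nu_2(8a_1(a_1+1))\ge 4$; more precisely, writing $v=\max\{x\mid (a_1+1)\equiv 0\bmod 2^x\}=\nu_2(a_1+1)$, if $v\ge 1$ then $a_1$ is odd and $\nu_2(a_1(a_1+1))=v$, while if $v=0$ then $a_1$ is even and $\nu_2(a_1(a_1+1))=\nu_2(a_1)$. The lemma's stated exponent $e_{g,0}=3+v$ matches the first case exactly, so I would need to check that the relevant regime is the one in which $a_1+1$ is the "controlling" factor; in fact, because the previous lemma is invoked only when $\frac{1}{2}G_{2^m\cdot s}$ is odd \emph{and} $\not\equiv 1\bmod 4$, that forces $a_1$ to be odd, hence $v\ge 1$ and $\nu_2(a_1(a_1+1))=\nu_2(a_1+1)=v$. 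Therefore $\nu_2\bigl(8a_1(a_1+1)\bigr)=3+v=e_{g,0}$, which gives simultaneously $\frac{1}{2}G_{2^{m+1}\cdot s}\equiv 1\bmod 2^{e_{g,0}}$ and $\frac{1}{2}G_{2^{m+1}\cdot s}\not\equiv 1\bmod 2^{e_{g,0}+1}$, establishing~\eqref{eq:Gscondition2le1}.

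The main obstacle, and the place where care is genuinely required, is pinning down which of the two parity cases for $a_1$ is actually in force, i.e. justifying that the hypothesis forces $a_1$ odd so that $\nu_2(a_1+1)$ rather than $\nu_2(a_1)$ governs the answer. This hinges on reading~\eqref{eq:Gsconditione1} correctly: the condition $\frac{1}{2}G_{2^m\cdot s}\not\equiv 1\bmod 4$ says $2a_1\not\equiv 0\bmod 4$, i.e. $a_1$ is odd, exactly as needed. Once that is in hand, the rest is a one-line $2$-adic valuation computation on $8a_1(a_1+1)=2^3\cdot a_1\cdot(a_1+1)$ with $a_1$ odd, giving valuation $3+\nu_2(a_1+1)$. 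I would also note for completeness that no induction on $l$ is needed here (unlike Lemma~\ref{le:Gscondition}); this lemma is the single "bootstrap" step that converts the $\bmod\ 2$ information at exponent $1$ into $\bmod\ 2^{e_{g,0}}$ information, after which Lemma~\ref{le:Gscondition} (applied with the new $e=e_{g,0}>1$) propagates it further.
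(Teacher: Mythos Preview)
Your argument is correct and is essentially the paper's proof spelled out in full: the paper simply specializes Eq.~\eqref{eq:G2msl=1} with $e=1$ to obtain $\tfrac{1}{2}G_{2^{m+1}\cdot s}=2^{3}\cdot a_{1}\cdot(a_{1}+1)+1$ and then asserts~\eqref{eq:Gscondition2le1}, which is exactly your computation $8a_{1}(a_{1}+1)+1$ together with the $2$-adic valuation $\nu_{2}\bigl(8a_{1}(a_{1}+1)\bigr)=3+\nu_{2}(a_{1}+1)$ for odd $a_{1}$. One small remark: your first paragraph's parenthetical that the condition $\not\equiv 1\bmod 4$ ``is automatic once $a_{1}$ is allowed to be anything'' is not right (for even $a_{1}$ one would have $2a_{1}+1\equiv 1\bmod 4$), and your subsequent ``main obstacle'' discussion, while correctly recovering that $a_{1}$ must be odd from $\not\equiv 1\bmod 4$, is unnecessary since the hypothesis already stipulates that $a_{1}$ is odd.
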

\begin{proof}
	From Eq.~(\ref{eq:G2msl=1}), one has
	$\frac{1}{2}G_{2^{m+1}\cdot s}=2^{3}\cdot a_1\cdot (a_1+1)+1.
	$ Then, condition~(\ref{eq:Gscondition2le1}) can be derived.
\end{proof}

\begin{lemma}
	If $G_s$ is even,
	\begin{equation}
	\left\{
	\begin{split}
	H_{2^m\cdot s} &  \equiv 0    \bmod 2^e,     \\
	H_{2^m\cdot s} & \not\equiv 0 \bmod 2^{e+1},
	\end{split}
	\right.
	\label{eq:Hscondition}
	\end{equation}
	one has
	\begin{empheq}[left=\empheqlbrace]{align}
	H_{2^{m+l}\cdot s} &  \equiv 0    \bmod 2^{e+l}, \label{eq:Hmscondition+l} \\
	H_{2^{m+l}\cdot s} & \not\equiv 0 \bmod 2^{e+l+1},\label{eq:Hmscondition+l+l}
	\end{empheq}
	where $l$ is a positive integer.
	\label{le:Hscondition}
\end{lemma}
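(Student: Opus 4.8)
The plan is to prove the lemma by a short induction on $l$, in the same style as Lemma~\ref{le:Gscondition}, using the multiplicative recursion for $H$ from Lemma~\ref{le:HG} (equivalently $H_{2s'}=H_{s'}G_{s'}$, Eq.~\eqref{eq:HsGs}) together with the parity statement of Lemma~\ref{lemma:parity}. The one fact I would record at the outset is that, because $G_s$ is even, Lemma~\ref{lemma:parity} makes $\tfrac12 G_{2^j\cdot s}$ an odd integer for every positive integer $j$; hence each factor $G_{2^j\cdot s}$ with $j\ge 1$ has $2$-adic valuation \emph{exactly} $1$.

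For the base case $l=1$, write hypothesis \eqref{eq:Hscondition} as $H_{2^m\cdot s}=2^e\cdot u$ with $u$ odd. The $M=1$ instance of Lemma~\ref{le:HG} (Eq.~\eqref{eq:HsGs} with $s$ replaced by $2^m\cdot s$) gives $H_{2^{m+1}\cdot s}=H_{2^m\cdot s}\cdot G_{2^m\cdot s}$, and since $G_{2^m\cdot s}$ has valuation $1$ this equals $2^{e+1}\cdot(\text{odd})$, which is exactly \eqref{eq:Hmscondition+l}--\eqref{eq:Hmscondition+l+l} for $l=1$. For the inductive step, assume $H_{2^{m+k}\cdot s}=2^{e+k}\cdot(\text{odd})$; applying Lemma~\ref{le:HG} once more gives $H_{2^{m+k+1}\cdot s}=H_{2^{m+k}\cdot s}\cdot G_{2^{m+k}\cdot s}$, and adding valuations yields valuation $(e+k)+1=e+(k+1)$, which closes the induction. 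Equivalently one can skip the induction entirely: Lemma~\ref{le:HG} gives $H_{2^{m+l}\cdot s}=H_{2^m\cdot s}\prod_{j=m}^{m+l-1}G_{2^j\cdot s}$, a product of $H_{2^m\cdot s}$ with $l$ further factors each of valuation $1$.

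I do not expect a real obstacle: every step is a one-line $2$-adic valuation count. The only point that needs care — and the reason the statement holds with no slack — is that each factor $G_{2^j\cdot s}$ has valuation \emph{exactly} $1$, which is precisely the content of the ``$\tfrac12 G_{2^m\cdot s}\equiv 1\bmod 2$'' clause of Lemma~\ref{lemma:parity}; merely knowing $G$ is even would only give $H_{2^{m+l}\cdot s}\equiv 0\bmod 2^{e+l}$ without the complementary non-divisibility. One bookkeeping caveat is that this clause requires the exponent inside $G$ to be positive, so the factors appearing must be $G_{2^m\cdot s},\dots,G_{2^{m+l-1}\cdot s}$ with $m\ge 1$; if an $m=0$ starting index had to be accommodated, I would first shift up one step using $H_{2s}=H_sG_s$ and $G_{2s}=G_s^2-2$ and then run the argument from there.
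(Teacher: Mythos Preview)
Your proposal is correct and follows essentially the same route as the paper: factor $H_{2^{m+l}\cdot s}$ via Lemma~\ref{le:HG} and then use Lemma~\ref{lemma:parity} to see that each $G$-factor contributes exactly one power of~$2$. Your version is in fact tidier than the paper's, which writes the product as $\prod_{j=0}^{l-1}G_{2^j\cdot s}$ instead of the correct $\prod_{j=m}^{m+l-1}G_{2^j\cdot s}$, and your caveat about the $m=0$ endpoint (where Lemma~\ref{lemma:parity} gives no control on the valuation of $G_s$ itself) is well placed.
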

\begin{proof}
	From Lemma~\ref{lemma:parity}, one has
	\begin{empheq}[left=\empheqlbrace]{align}
	\prod_{j=0}^{l-1} G_{2^j\cdot s} & \equiv 0    \bmod 2^l,  \label{eq:prodl} \\
	\prod_{j=0}^{l-1} G_{2^j\cdot s} & \not\equiv 0 \bmod 2^{l+1}, \label{eq:prodl+1}
	\end{empheq}
	since $2\mid G_{2^j\cdot s}$ and $4\nmid G_{2^j\cdot s}$ for any $j\in\{0, 2, \cdots, l-1\}$.
	From Lemma~\ref{le:HG}, one can get
	\begin{align}
	H_{2^{m+l}\cdot s} & = H_{2^{l}\cdot (2^{m}\cdot s)} \nonumber\\
	& = H_{2^{m}\cdot s} \cdot \prod_{j=0}^{l-1} G_{2^j\cdot s}.
	\end{align}
	So, Eq.~(\ref{eq:Hmscondition+l}) and inequality~(\ref{eq:Hmscondition+l+l})
	can be obtained by combining Eq.~(\ref{eq:Hscondition}) with
	Eq.~(\ref{eq:prodl}) and inequality~(\ref{eq:prodl+1}), respectively.
\end{proof}

\begin{Proposition}
For any $p, q$,
$G_{T_1}$ is even, and
\begin{equation}
\frac{1}{2}G_{T_1}-1=
\begin{cases}
\frac{1}{2}p\cdot q (p\cdot q+3)^2 & \mbox{if } p \mbox{ and } q \mbox{ are odd};\\
p\cdot q (\frac{1}{2}p\cdot q+2)   & \mbox{if } p \mbox{ or } q \mbox{ is odd}; \\
\frac{1}{2}p\cdot q                & \mbox{if } p \mbox{ and } q \mbox{ are even},
\end{cases}
\label{eq:gt1}
\end{equation}
where $T_1$ is the least period of Cat map~\eqref{eq:ArnoldInteger} over $(\mathbb{Z}_{2}, +, \cdot)$.	
\label{prop:Gt1even}
\end{Proposition}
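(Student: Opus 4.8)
The plan is to treat the three parity cases for $(p,q)$ separately; in each case I would first pin down $T_1$ and then evaluate $G_{T_1}$ from the closed form in Theorem~\ref{theorem:Cat}. The elementary fact I would exploit is that, since $\lambda_1+\lambda_2=A$ and $\lambda_1\lambda_2=1$, the sequence $\{G_n\}$ obeys $G_{n+1}=A\cdot G_n-G_{n-1}$ with $G_0=2$, $G_1=A$; equivalently one may use $G_{2s}=G_s^2-2$ from Eq.~\eqref{eq:Gsm+1}. Hence $G_1=A$, $G_2=A^2-2$, and $G_3=A^3-3A$, where $A=p\cdot q+2$.

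To determine $T_1$ I would reduce $\textbf{C}$ modulo $2$ (this can also be read off from Fig.~\ref{fig:perioddistributione1}). When $p$ and $q$ are both even, $\textbf{C}\equiv\mathcal{I}_2\bmod 2$, so $T_1=1$. When exactly one of $p,q$ is odd, $1+p\cdot q$ is odd, so $\textbf{C}\bmod 2$ is one of the two transvections $\begin{bsmallmatrix}1&0\\1&1\end{bsmallmatrix}$, $\begin{bsmallmatrix}1&1\\0&1\end{bsmallmatrix}$, each of which squares to $\mathcal{I}_2$ but is not $\mathcal{I}_2$, giving $T_1=2$. When $p$ and $q$ are both odd, $1+p\cdot q$ is even, so $\textbf{C}\equiv\begin{bsmallmatrix}1&1\\1&0\end{bsmallmatrix}\bmod 2$, whose square is $\begin{bsmallmatrix}0&1\\1&1\end{bsmallmatrix}\ne\mathcal{I}_2$ and whose cube is $\mathcal{I}_2$, giving $T_1=3$.

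The proposition then follows from three direct substitutions; writing $m=p\cdot q$, the case hypothesis fixes the parity of $m$. For $m$ even (both even), $\tfrac12 G_1-1=\tfrac12(m+2)-1=\tfrac12 m$. For $m$ even (exactly one odd), $\tfrac12 G_2-1=\tfrac12\bigl((m+2)^2-2\bigr)-1=\tfrac12 m^2+2m=m\bigl(\tfrac12 m+2\bigr)$. For $m$ odd (both odd), $G_3=(m+2)^3-3(m+2)=m^3+6m^2+9m+2$, so $\tfrac12 G_3-1=\tfrac12 m(m^2+6m+9)=\tfrac12 m(m+3)^2$, matching Eq.~\eqref{eq:gt1} in each branch. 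In every branch the forced parity of $m$ makes $G_{T_1}$ even and all the halvings legitimate (e.g.\ $m$ odd $\Rightarrow m^3+6m^2+9m+2$ is even; $m$ even $\Rightarrow m+2$ is even), which simultaneously proves the asserted evenness of $G_{T_1}$. The only delicate point is this parity bookkeeping together with the short computation of $\textbf{C}\bmod 2$; neither is a genuine obstacle, so the argument is essentially a case split followed by substitution.
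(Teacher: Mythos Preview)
Your proposal is correct and follows essentially the same approach as the paper's own proof: a parity case split on $(p,q)$ to determine $T_1\in\{1,2,3\}$, followed by direct evaluation of $G_{T_1}$ from the closed form and the substitution $A=pq+2$. Your write-up is in fact slightly more careful than the paper's, both in spelling out the algebra leading to Eq.~\eqref{eq:gt1} and in the mod-$2$ reduction of $\textbf{C}$ in the both-odd case.
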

\begin{proof}
Depending on the parity of $p$, $q$, the proof is divided into the following three cases:
\begin{itemize}
\item When $p, q$ are both odd:
\begin{equation*}
\textbf{C} \equiv
\begin{bmatrix}
1   & 0   \\
0   & 1
\end{bmatrix} \bmod 2.
\end{equation*}
One can calculate $T_1=3$. From Eq.~(\ref{eq:GH}), one has
\begin{IEEEeqnarray*}{rCl}
G_3 & = & \left(\frac{A+B}{2}\right)^3+\left(\frac{A-B}{2}\right)^3   \\
    & = & \frac{2A^3+6A\cdot B^2}{8} \\
    & = & A^3-3A  .
\end{IEEEeqnarray*}
As $A=p\cdot q+2$ is odd, $G_{T_1}$ is even.
\item When only $p$ or $q$ is odd:
$\textbf{C} \equiv
\begin{bsmallmatrix}
1   & 1   \\
0   & 1
\end{bsmallmatrix}\bmod 2$ if $p$ is odd; $\textbf{C} \equiv
\begin{bsmallmatrix}
1   & 0   \\
1   & 1
\end{bsmallmatrix}\bmod 2$ if $q$ is odd.
In either sub-case, $T_1=2$ and $A$ is even. As
\begin{IEEEeqnarray*}{rCl}
	G_2 & = & \left(\frac{A+B}{2}\right)^2+\left(\frac{A-B}{2}\right)^2 \\
	& = & A^2-2,
\end{IEEEeqnarray*}
one has $G_{T_1}$ is even.

\item When $p, q$ are both even: $T_1=1$.
So $G_{T_1}=A$ is also even.
\end{itemize}
Substituting $A=p\cdot q+2$ into $G_{T_1}$ in each above case, one can obtain Eq.~(\ref{eq:gt1}).
\end{proof}

\begin{Property}
The length of any cycle of Cat map~\eqref{eq:ArnoldInteger} implemented over $(\mathbb{Z}_{2^{\hat{e}}}, +, \cdot)$ comes from set
	$\{1 \} \cup \{2^k\cdot T_1\}_{k=0}^{\hat{e}-1}$, where
	\begin{equation*}
	T_1=
	\begin{cases}
	3  & \mbox{if } p \mbox{ and } q \mbox{ are odd};\\
	2  & \mbox{if only } p \mbox{ or } q \mbox{ is odd}; \\
	1  & \mbox{if } p \mbox{ and } q \mbox{ are even},
	\end{cases}
	\end{equation*}
	and $\hat{e}\ge 2$.
\label{prop:possibleLen}
\end{Property}
\begin{proof}
	As
	\begin{equation}
	\begin{bmatrix}
	1 & p\\
	q & 1+p\cdot q
	\end{bmatrix}
	\cdot
	\begin{bmatrix}
	x\\
	y
	\end{bmatrix}\bmod 2^{e}
	=
	\begin{bmatrix}
	x\\
	y
	\end{bmatrix}
	\label{eq:gnhn2epq}
	\end{equation}
	if and only if
	\begin{equation*}
	\left\{
	\begin{split}
	q\cdot x & \equiv 0\bmod 2^{e}, \\
	p\cdot y & \equiv 0\bmod 2^{e},
	\end{split}
	\right.
	\end{equation*}
	1 is the length of the cycles whose nodes satisfying condition~(\ref{eq:gnhn2epq}). From Fig.~\ref{fig:perioddistributione1}, one can see that
	\begin{equation}
	\begin{bmatrix}
	1 & p\\
	q & 1+p\cdot q
	\end{bmatrix}^{T_1}
	\cdot
	\begin{bmatrix}
	x\\
	y
	\end{bmatrix}\bmod 2
	=
	\begin{bmatrix}
	x\\
	y
	\end{bmatrix}
	\end{equation}
exists for any $x, y\in\mathbb{Z}_{2}$ if Eq.~(\ref{eq:gnhn2epq}) does not hold for $e=1$ and $T_1\neq 1$. From Fig.~\ref{Functionalgraphse1}, one has
	\begin{equation}
	\begin{bmatrix}
	1 & p\\
	q & 1+p\cdot q
	\end{bmatrix}^{n}
	\cdot
	\begin{bmatrix}
	x\\
	y
	\end{bmatrix}\bmod 2^2
	=
	\begin{bmatrix}
	x\\
	y
	\end{bmatrix}
	\label{eq:gnhn2e2}
	\end{equation}
	always holds for any $x, y\in\mathbb{Z}_{2^2}$ when $n=2\cdot T_1$ if Eq.~(\ref{eq:gnhn2e2}) does not hold for
	$n=T_1$.
	As $G_{2T_1}$ and $H_{2T_1}$ are both even,
	\[
	\begin{bmatrix}
	1 & p\\
	q & 1+p\cdot q
	\end{bmatrix}^{n}
	\cdot
	\begin{bmatrix}
	x \\
	y
	\end{bmatrix}\bmod 2^{e}
	=\begin{bmatrix}
	x \\
	y
	\end{bmatrix}
	\]
	can be presented as the equivalent form
	\begin{equation}
	\begin{bmatrix}
	x & 2\cdot p\cdot y-p\cdot q\cdot x\\
	y & 2\cdot q\cdot x+p\cdot q\cdot y
	\end{bmatrix}
	\cdot
	\begin{bmatrix}
	\frac{1}{2}G_n-1\\
	\frac{1}{2} H_n
	\end{bmatrix}\bmod 2^{e}
	=0
	\label{eq:gnhn2ee}
	\end{equation}
	when $n=2T_1$.
	If Eq.~(\ref{eq:gnhn2ee}) does not hold for $n=2T_1$ and $e\ge 3$,
	$n=2^2T_1$ is the least number of $n$ satisfying Eq.~(\ref{eq:gnhn2ee}) (See Lemmas~\ref{le:Gscondition} and \ref{le:Hscondition}). Referring to Lemmas~\ref{lemma:parity} and \ref{le:HG},
	$G_{2^mT_1}$ and $H_{2^mT_1}$ are both even for any positive integer
	$m$. So the equivalent form (\ref{eq:gnhn2ee}) can be reserved for any
	possible values of $n$.
	Iteratively repeat the above process, the length of cycle $n=2^k\cdot T_1$ can be obtained, where $k$ ranges from 0 to $\hat{e}-1$.
\end{proof}

If $p$ and $q$ are both odd, $H_{T_1}=(p\cdot q+1) \cdot (p\cdot q+3)\equiv 0\bmod 2^2$. As shown in Fig.~\ref{fig:SMNcat}, the length of the maximum cycle of Cat map over $\mathbb{Z}_{2^2}$ is $T_1$. So, the length of the maximum cycle is $3\cdot 2^{\hat{e}-2}$ if $\hat{e}\ge 3$. In addition, Property~\ref{prop:multiplecycle} is a direct consequence of
Property~\ref{prop:possibleLen}.

\begin{Proposition}
As for any $p$, $q$, $H_n$ is even,
where $n$ is the length of a cycle of Cat map~\eqref{eq:ArnoldInteger} larger than one.
\end{Proposition}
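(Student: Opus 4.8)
The plan is to read this off the structural results already established, chiefly Property~\ref{prop:possibleLen} on the admissible cycle lengths, Proposition~\ref{prop:Gt1even} on the parity of $G_{T_1}$, and the multiplicative identity for the sequence $\{H_n\}$ in Lemma~\ref{le:HG}. No genuinely new computation is needed beyond two small base-case evaluations.

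First I would invoke Property~\ref{prop:possibleLen}: if $n>1$ is the length of some cycle of Cat map~\eqref{eq:ArnoldInteger} over $(\mathbb{Z}_{2^{\hat e}},+,\cdot)$, then $n=2^{k}\cdot T_1$ for some integer $k$ with $0\le k\le \hat e-1$, where $T_1\in\{1,2,3\}$ is determined by the parities of $p$ and $q$. Note that when $T_1=1$ the candidate length $2^{0}T_1=1$ is ruled out by the hypothesis $n>1$, so in that case necessarily $k\ge1$; when $T_1\in\{2,3\}$ the value $k=0$ is allowed.

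Next I would handle the generic case $k\ge1$. Applying Lemma~\ref{le:HG} with $s=T_1$ and $m=k$ gives
\[
H_n=H_{2^{k}T_1}=H_{T_1}\cdot\prod_{j=0}^{k-1}G_{2^{j}T_1}.
\]
The product is non-empty (since $k\ge1$) and its $j=0$ factor is exactly $G_{T_1}$, which by Proposition~\ref{prop:Gt1even} is even for every choice of $p,q$. Hence $H_n$ is even. The only remaining case is $k=0$, which forces $n=T_1$ and, since $T_1=1$ is already excluded, $T_1\in\{2,3\}$; here I would simply compute $H_{T_1}$ from the defining relation in Theorem~\ref{theorem:Cat} (equivalently the recurrence $H_0=0$, $H_1=1$, $H_{m+1}=A\,H_m-H_{m-1}$ with $A=p\cdot q+2$, or from Lemma~\ref{le:HG} in the form $H_{2T_1}=H_{T_1}G_{T_1}$ and $H_1=1$): namely $H_2=A=p\cdot q+2$, which is even because exactly one of $p,q$ is odd when $T_1=2$ (so $p\cdot q$ is even); and $H_3=A^2-1$, which is even because $A=p\cdot q+2$ is odd when $p,q$ are both odd, i.e. when $T_1=3$. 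This exhausts all possibilities.

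I do not anticipate a real obstacle: the argument is essentially a corollary of Property~\ref{prop:possibleLen}, Proposition~\ref{prop:Gt1even}, and Lemma~\ref{le:HG}. The only points requiring care are the bookkeeping that $k\ge1$ whenever $T_1=1$ (so the length $n=1$ is correctly excluded), and supplying the two explicit small values $H_2$ and $H_3$ for the base case $k=0$.
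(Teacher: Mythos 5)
Your proof is correct and follows essentially the same route as the paper: reduce to $n=2^{k}T_1$ via Property~\ref{prop:possibleLen}, factor $H_n$ via Lemma~\ref{le:HG}, and verify the parity in small base cases. The only (harmless) difference is where the factor of two is located --- you extract it from $G_{T_1}$ (Proposition~\ref{prop:Gt1even}) whenever $k\ge 1$, which uniformly absorbs the both-even case $T_1=1$, $H_1=1$, whereas the paper instead shifts the base of the product to $s=2$ in that case and draws the evenness from $H_s$ itself.
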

\begin{proof}
Referring to the definition of $H_{T_1}$ in Eq.~(\ref{eq:GH}),
\begin{equation}
H_{T_1}=
\begin{cases}
(p\cdot q+1) \cdot (p\cdot q+3) & \mbox{if } p \mbox{ and } q \mbox{ are odd};\\
p\cdot q+2                     & \mbox{if only } p \mbox{ or } q \mbox{ is odd}; \\
1                              & \mbox{if } p \mbox{ and } q \mbox{ are even}
\end{cases}
\label{eq:ht}
\end{equation}
can be calculated as the proof of Proposition~\ref{prop:Gt1even}.
\begin{itemize}
\item When $p$ or $q$ is even:
$H_2=H_1\cdot G_1=p\cdot q+2$ is even.

\item When $p$ and $q$ are both odd:
$H_3=(p\cdot q+1) \cdot (p\cdot q+3)$ is even.
\end{itemize}
Referring to Property~\ref{prop:possibleLen}, $H_n=H_{2^m\cdot s}$,
where
\begin{equation*}
s=
\begin{cases}
2   & \mbox{if } p \mbox{ and } q \mbox{ are even};\\
T_1 & \mbox{otherwise}.
\end{cases}
\end{equation*}
From Lemma~\ref{le:HG}, one can get $H_n$ is even.
\end{proof}

\subsection{Disclosing the regular graph structure of Cat map}

With increase of $e$, $\frac{1}{2}G_{2^nT_1}$ and $H_{2^nT_1}$ will reach
the balancing condition given in Proposition~\ref{prop:periode}. As shown the proof in Theorem~\ref{the:threshold}, the explicit presentation of the threshold value of $e$, $e_s$, is obtained. When $e\ge e_s$, the period
of Cat map double for every increase of $e$ by one. As for Table~\ref{table:pqeTc}, $e_s=8$ (The dashlined row with tiny gap).

\begin{theorem}
There exists a threshold value of $e$, $e_{s}$, satisfying
\begin{equation}
T_{e+l}=2^l \cdot T_{e}
\end{equation}
when $e\ge e_s$, where $T_e$ is the period of Cat map over $(\mathbb{Z}_{2^e}, +,\ \cdot\ )$, and $l$ is a non-negative integer.
\label{the:threshold}
\end{theorem}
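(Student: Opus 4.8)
The plan is to locate the precise moment when the two congruences of Proposition~\ref{prop:periode} begin to behave ``linearly'' in $e$, and to show that past that moment each unit increase of $e$ exactly doubles the period. Write $T_e = 2^{k_e}\cdot s$ with $s\in\{1,T_1\}$ as in Property~\ref{prop:possibleLen}; for the generic (non-degenerate) case $s=T_1$ and $G_{T_1}$ is even by Proposition~\ref{prop:Gt1even}. First I would track, as a function of $e$, the two quantities that determine $T_e$ via Proposition~\ref{prop:periode}: the $2$-adic valuation $v_G(m):=\max\{x : \tfrac12 G_{2^m s}\equiv 1 \bmod 2^x\}$ of $\tfrac12 G_{2^m s}-1$, and the $2$-adic valuation $v_H(m):=\max\{x : H_{2^m s}\equiv 0\bmod 2^x\}$ of $H_{2^m s}$. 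By Proposition~\ref{prop:periode}, $T_e$ is the least $2^m s$ (or $1$) with $v_G(m)\ge e$ and $v_H(m)\ge e-h_e$, where $h_e$ is eventually the constant $\min(e_p,e_q)$ once $e>\min(e_p,e_q)$.

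Next I would invoke Lemma~\ref{le:Gscondition} (together with Lemma~\ref{le:Gms=1} to start the recursion) and Lemma~\ref{le:Hscondition}: once $v_G$ and $v_H$ enter their ``stable regime'' (i.e.\ once $\tfrac12 G_{2^m s}\equiv 1\bmod 2^{e_0}$ but $\not\equiv 1\bmod 2^{e_0+1}$ for some starting $e_0>1$, and likewise for $H$), each further doubling $m\mapsto m+1$ raises $v_G$ by exactly $2$ and $v_H$ by exactly $1$. Concretely, from Lemma~\ref{le:Gscondition} we get $v_G(m+l)=e_0+2l-1$ eventually matching the congruence mod $2^{e_0+2l}$ but not $2^{e_0+2l+1}$, and from Lemma~\ref{le:Hscondition} we get $v_H(m+l)=v_H(m)+l$. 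Both valuations are then strictly increasing and unbounded, so there is a well-defined smallest $m^\ast$ at which both constraints of Proposition~\ref{prop:periode} first ``catch up'' simultaneously in the sense that one more doubling is needed for each extra bit of precision. I would define $e_s$ to be the corresponding precision: $e_s = \max\bigl(\,v_H(m_0)+\min(e_p,e_q),\ \lceil (v_G(m_0)+1)/1\rceil,\ \ldots\bigr)$ — i.e.\ the threshold past which increasing $e$ by one forces $m$ to increase by exactly one. (The explicit closed form for $e_s$ in terms of $p,q$ comes from plugging $e_0=e_{g,0}$ from Lemma~\ref{le:Gms=1} and the base valuation of $H_{T_1}$ from Eq.~\eqref{eq:ht} into these recursions.) For $e\ge e_s$, the binding constraint is $v_G$, which advances in steps of $2$ per doubling while $e$ advances by $1$, and $v_H$ (advancing by $1$ per doubling against $e-h_e$ advancing by $1$) keeps pace; hence the least $m$ with both $v_G(m)\ge e+l$ and $v_H(m)\ge e+l-h_e$ is exactly $m=k_e+l$, giving $T_{e+l}=2^{k_e+l}s=2^l\,T_e$.

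Finally I would dispatch the degenerate cases separately: if $p$ and $q$ are both even the third branch of Eq.~\eqref{eq:gt1} and the $h_e=e$ branch of Eq.~\eqref{eq:he} apply and one must check the period still eventually doubles (here $s=2$ by the exception in the proof of the preceding Proposition), and similarly when $e\le\min(e_p,e_q)$ so that $h_e=e$; in those ranges $H_n\equiv 0$ is automatic and the period is governed purely by $v_G$, so the doubling argument goes through verbatim once $e$ exceeds $\min(e_p,e_q)$. The main obstacle I anticipate is bookkeeping the interaction of the two congruences: individually Lemmas~\ref{le:Gscondition} and \ref{le:Hscondition} give clean linear growth, but I must verify that the \emph{same} value of $m=k_e$ is forced by both simultaneously for all $e\ge e_s$ — i.e.\ that neither constraint ever ``lags'' and causes an extra doubling. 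Showing $v_G$ is the asymptotically dominant (slower-saturating) constraint, so that $v_H\ge e-h_e$ is automatically satisfied whenever $v_G\ge e$, is the crux; this should follow because $v_G$ gains $2$ bits per doubling versus $v_H$'s $1$ bit, but the base offsets (from Eqs.~\eqref{eq:gt1} and \eqref{eq:ht}) must be compared carefully to pin down $e_s$ exactly.
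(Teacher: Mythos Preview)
Your overall strategy matches the paper's: track the $2$-adic valuations $v_G(m)$ and $v_H(m)$ via Lemmas~\ref{le:Gscondition}, \ref{le:Gms=1}, \ref{le:Hscondition}, and use Proposition~\ref{prop:periode} to read off the least $m$ at each precision $e$. The paper does exactly this, defining $e_{s,h}$ and $e_{s,g}$ for the starting valuations, an offset $m_0\in\{0,1\}$ to enter the regime of Lemma~\ref{le:Gscondition}, and then an explicit $e_s=(e_{s,h}+m_0+\hat h_e)+x_0$ with $x_0$ chosen so that both constraints are simultaneously tight.

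However, you have the roles of the two constraints reversed, and this is not cosmetic. You write that ``the binding constraint is $v_G$'' and that the crux is ``showing $v_G$ is the asymptotically dominant (slower-saturating) constraint, so that $v_H\ge e-h_e$ is automatically satisfied whenever $v_G\ge e$.'' This is backwards. Since $v_G$ gains $2$ per doubling while the target $e$ gains $1$, the $G$-constraint needs only $m\ge k_e+\lceil l/2\rceil$ to reach precision $e+l$; if it were binding, the period would double only every \emph{second} increment of $e$, contradicting $T_{e+l}=2^lT_e$. The constraint that actually forces $m\mapsto m+1$ for each $e\mapsto e+1$ is $v_H$: it gains exactly $1$ per doubling (Lemma~\ref{le:Hscondition}) against a target $e-h_e$ that also gains exactly $1$ once $h_e$ has stabilised at $\hat h_e$. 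The paper's proof makes precisely this point: it establishes the tight pair $H_{2^{m_0+x_0}T_1}\equiv 0\bmod 2^{e_s-\hat h_e}$ and $\not\equiv 0\bmod 2^{e_s+1-\hat h_e}$, then uses Lemma~\ref{le:Hscondition} to get $H_{2^{m_0+x_0+l}T_1}\equiv 0\bmod 2^{e_s+l-\hat h_e}$ (and no more), while disposing of the $G$-constraint via the inequality $e_{s,g}+2x_0+2l\ge e_s+l$. You need both the tightness of $v_H$ (to rule out $T_{e+1}=T_e$) and the slackness of $v_G$ (to rule out an extra doubling); your current write-up argues for neither in the right direction.

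A second, smaller gap: you never verify that the $H$-constraint is \emph{exactly} tight at $e_s$, i.e.\ that $v_H(k_{e_s})=e_s-\hat h_e$ rather than $>$. Without this you cannot conclude $T_{e+1}>T_e$. The paper secures it by carrying the ``$\not\equiv$'' half of Lemma~\ref{le:Hscondition} alongside the ``$\equiv$'' half; you should do the same.
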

\begin{proof}
When $e=1$, one has
\begin{empheq}[left=\empheqlbrace]{align}
	\frac{1}{2}	G_{T_1} & \equiv  1 \bmod 2  \nonumber	\\
		H_{T_1}         & \equiv  0 \bmod 2^{1-h_1}, \label{eq:HT1}
\end{empheq}
from Proposition~\ref{prop:periode}.
From Eq.~(\ref{eq:HT1}), one can get $e_{s, h}$, the minimal number of $e$ satisfying
\begin{equation*}
\left\{
\begin{split}
H_{T_1} &  \equiv 0    \bmod 2^{e},   \\
H_{T_1} & \not\equiv 0 \bmod 2^{e+1}, \\
e       & \ge  1-h_1.
\end{split}
\right.
\label{eq:Hscondition4}
\end{equation*}
From Lemma~\ref{le:Gms=1}, one can get the minimum positive number of $e$ satisfying
\begin{equation}
\left\{
\begin{split}
\frac{1}{2}G_{2^{m_0}\cdot T_1} & \equiv 1 \bmod 2^{e},     \\
\frac{1}{2}G_{2^{m_0}\cdot T_1} & \not\equiv  1 \bmod 2^{e+1},
\end{split}
\right.
\label{eq:escondition2}
\end{equation}
$e_{s, g}$, by increasing $e$ from 1, where
\begin{equation*}
m_0=
\begin{cases}
1  & \mbox{if } \frac{1}{2}\cdot G_{T_1} \not\equiv  1 \bmod 2^{2};\\
0  & \mbox{otherwise.}
\end{cases}
\end{equation*}

\setlength\tabcolsep{4pt} 
\addtolength{\abovecaptionskip}{0pt}
\begin{table*}[!htb]
	\centering  
	\caption{The threshold values $e_{s}$, $e_{s}'$ under various combinations of $(p,q)$.}
	\begin{tabular}{*{16}{c|}c} 
		\hline 
		\diagbox[width=6em]{$p$}{$e_s (e'_{s})$}{$q$}
		&$1$&   $2$&   $3$&   $4$&   $5$&   $6$&   $7$&   $8$&   $9$&   $10$&   $11$&   $12$&   $13$&   $14$&   $15$   &$16$\\ \hline
		1&  2(2)& 3(3)& 3(3)& 1(1)& 3(3)& 4(4)& 4(4)& 1(1)& 2(2)& 3(3)& 3(3)& 1(1)& 4(4)& 5(5)& 5(5)&1(1)\\
		2&  3(3)& 2(1)& 4(4)& 1(1)& 3(3)& 2(1)& 5(5)& 1(1)& 3(3)& 2(1)& 4(4)& 1(1)& 3(3)& 2(1)& 6(6)&1(1)\\
		3&  3(3)& 4(4)& 2(2)& 1(1)& 5(5)& 3(3)& 3(3)& 1(1)& 3(3)& 6(6)& 2(2)& 1(1)& 4(4)& 3(3)& 4(4)&1(1)\\
		4&  1(1)& 1(1)& 1(1)& 2(2)& 1(1)& 1(1)& 1(1)& 2(2)& 1(1)& 1(1)& 1(1)& 2(2)& 1(1)& 1(1)& 1(1)&2(2)\\
		5&  3(3)& 3(3)& 5(5)& 1(1)& 2(2)& 6(6)& 3(3)& 1(1)& 4(4)& 3(3)& 4(4)& 1(1)& 2(2)& 4(4)& 3(3)&1(1)\\
		6&  4(4)& 2(1)& 3(3)& 1(1)& 6(6)& 2(1)& 3(3)& 2(1)& 4(4)& 2(1)& 3(3)& 1(1)& 5(5)& 2(1)& 3(3)&1(1)\\
		7&  4(4)& 5(5)& 3(3)& 1(1)& 3(3)& 3(3)& 2(2)& 1(1)& 7(7)& 4(4)& 4(4)& 1(1)& 3(3)& 3(3)& 2(2)&1(1)\\
		8&  1(1)& 1(1)& 1(1)& 2(2)& 1(1)& 1(1)& 1(1)& 3(3)& 1(1)& 1(1)& 1(1)& 2(2)& 1(1)& 1(1)& 1(1)&3(3)\\
		9&  2(2)& 3(3)& 3(3)& 1(1)& 4(4)& 4(4)& 7(7)& 1(1)& 2(2)& 3(3)& 3(3)& 1(1)& 3(3)& 8(8)& 4(4)&1(1)\\
		10& 3(3)& 2(1)& 6(6)& 1(1)& 3(3)& 2(1)& 4(4)& 1(1)& 3(3)& 2(1)& 5(5)& 1(1)& 3(3)& 2(1)& 4(4)&1(1)\\
		11& 3(3)& 4(4)& 2(2)& 1(1)& 4(4)& 3(3)& 4(4)& 1(1)& 3(3)& 5(5)& 2(2)& 1(1)& 5(5)& 3(3)& 3(3)&1(1)\\
		12& 1(1)& 1(1)& 1(1)& 2(2)& 1(1)& 1(1)& 1(1)& 2(2)& 1(1)& 1(1)& 1(1)& 2(2)& 1(1)& 1(1)& 1(1)&2(2)\\
		13& 4(4)& 3(3)& 4(4)& 1(1)& 2(2)& 5(5)& 3(3)& 1(1)& 3(3)& 3(3)& 5(5)& 1(1)& 2(2)& 4(4)& 3(3)&1(1)\\
		14& 5(5)& 2(1)& 3(3)& 1(1)& 4(4)& 2(1)& 3(3)& 1(1)& 8(8)& 2(1)& 3(3)& 1(1)& 4(4)& 2(1)& 3(3)&1(1)\\
		15& 5(5)& 6(6)& 4(4)& 1(1)& 3(3)& 3(3)& 2(2)& 1(1)& 4(4)& 4(4)& 3(3)& 1(1)& 3(3)& 3(3)& 2(2)&1(1)\\
		16& 1(1)& 1(1)& 1(1)& 2(2)& 1(1)& 1(1)& 1(1)& 3(3)& 1(1)& 1(1)& 1(1)& 2(2)& 1(1)& 1(1)&	1(1)&4(4)\\
		\hline 
	\end{tabular}
	\label{table:nume_vs}
\end{table*}	

Referring to Proposition~\ref{prop:Gt1even}, $G_{T_1}$ is even.
So, one can get
\begin{empheq}[left=\empheqlbrace]{align*}
\frac{1}{2}	G_{2^{m_0+x}\cdot T_1} & \equiv  1 \bmod 2^{e_{s, g}+2\cdot x}\\
H_{2^{m_0+x}\cdot T_1}             & \equiv  0 \bmod 2^{m_0+x+e_{s, h}}
\end{empheq}
by referring to Lemmas \ref{le:Hscondition} and ~\ref{le:Gscondition},
where $x$ is a non-negative integer. Note that
$h_e$ is monotonically increasing with respect to $e$ and fixed as $\hat{h}_{e}$ when $e\ge \min(e_p, e_q)$ (See Eq.~(\ref{eq:he})),
where
\[
\hat{h}_e=
\begin{cases}
-1               & \mbox{if } e_p+e_q=0;\\
\min(e_p, e_q)   & \mbox{otherwise}.
\end{cases}
\]
Set
\begin{equation}
e_s=(e_{s,h}+m_0+\hat{h}_e) +x_0,
\label{eq:es}
\end{equation}
one has
$e_s\ge \min(e_p, e_q)$ from Eq.~(\ref{eq:ht}),
\begin{empheq}[left=\empheqlbrace]{align*}
\frac{1}{2}	G_{2^{m_0+x_0}\cdot T_1} & \equiv  1 \bmod 2^{e_{s, g}+2x_0} \\
\frac{1}{2}	G_{2^{m_0+x_0}\cdot T_1} & \not\equiv  1 \bmod 2^{e_{s, g}+2x_0+1}
\end{empheq}
and
\begin{empheq}[left=\empheqlbrace]{align*}
	H_{2^{m_0+x_0}\cdot T_1}  & \equiv  0 \bmod 2^{e_s-\hat{h}_e}  \\
    H_{2^{m_0+x_0}\cdot T_1}  & \not\equiv  0 \bmod 2^{e_s+1-\hat{h}_{e}}
\end{empheq}
where
\begin{equation}
x_0=
\begin{cases}
(e_{s, h}+m_0+\hat{h}_e)-e_{s, g} & \mbox{if } e_{s, g}<e_{s, h}+m_0+\hat{h}_e;\\
0  & \mbox{otherwise.}
\end{cases}
\label{eq:x0Condition}
\end{equation}
Referring to Lemma~\ref{le:Gscondition}, one has
\begin{equation}
\frac{1}{2}	G_{2^{m_0+x_0+l}\cdot T_1} \equiv  1 \bmod 2^{e_{s, g}+2x_0+2l}.
\label{eq:es2l}
\end{equation}
Combing Lemma~\ref{le:Hscondition},
\begin{empheq}[left=\empheqlbrace]{align}
\frac{1}{2}	G_{2^{m_0+x_0+l}\cdot T_1} & \equiv  1 \bmod 2^{e_s+l}  \nonumber\\
H_{2^{m_0+x_0+l}\cdot T_1}             & \equiv  0 \bmod 2^{e_s+l-\hat{h}_e}\label{eq:hscondition3}
\end{empheq}
as $e_{s, g}+2x_0+2l\ge e_s+l$ for any non-negative integer $l$. Hence, by Proposition~\ref{prop:periode},
\begin{equation*}
T_{e+l}=2^l \cdot T_{e}=2^{m_0+x_0+l}\cdot T_1
\end{equation*}
when $e\ge e_s$.
\end{proof}

From the proof of Theorem~\ref{the:threshold}, one can see that the value of $e_s$
in Eq.~(\ref{eq:es}) is conservatively estimated to satisfy the required conditions
(The balancing conditions may be obtained when $h_e$ is still not approach $\hat{h}_e$).
In practice, there exists another real threshold value of $e$, $e'_{s}\le e_{s}$, satisfying
\begin{equation*}
T_{e'_s+l}=2^l \cdot T_{e'_s},
\end{equation*}
which is verified by Table~\ref{table:nume_vs}.

\begin{figure*}[!htb]
	\centering
	\begin{minipage}{1.8\twofigwidth}
		\centering
		\includegraphics[width=1.8\twofigwidth]{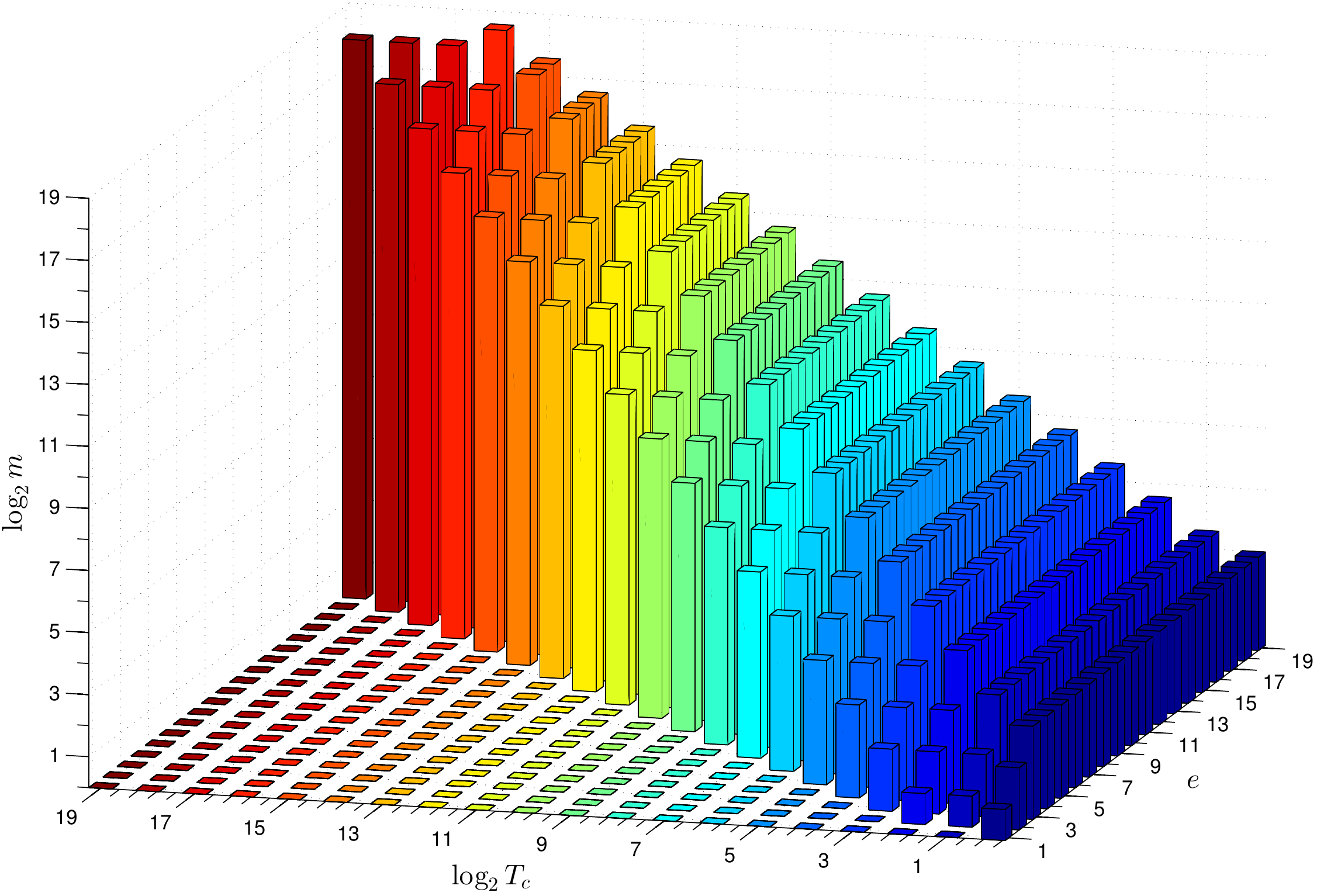}
		a)
	\end{minipage}
	\begin{minipage}{1.8\twofigwidth}
		\centering
		\includegraphics[width=1.8\twofigwidth]{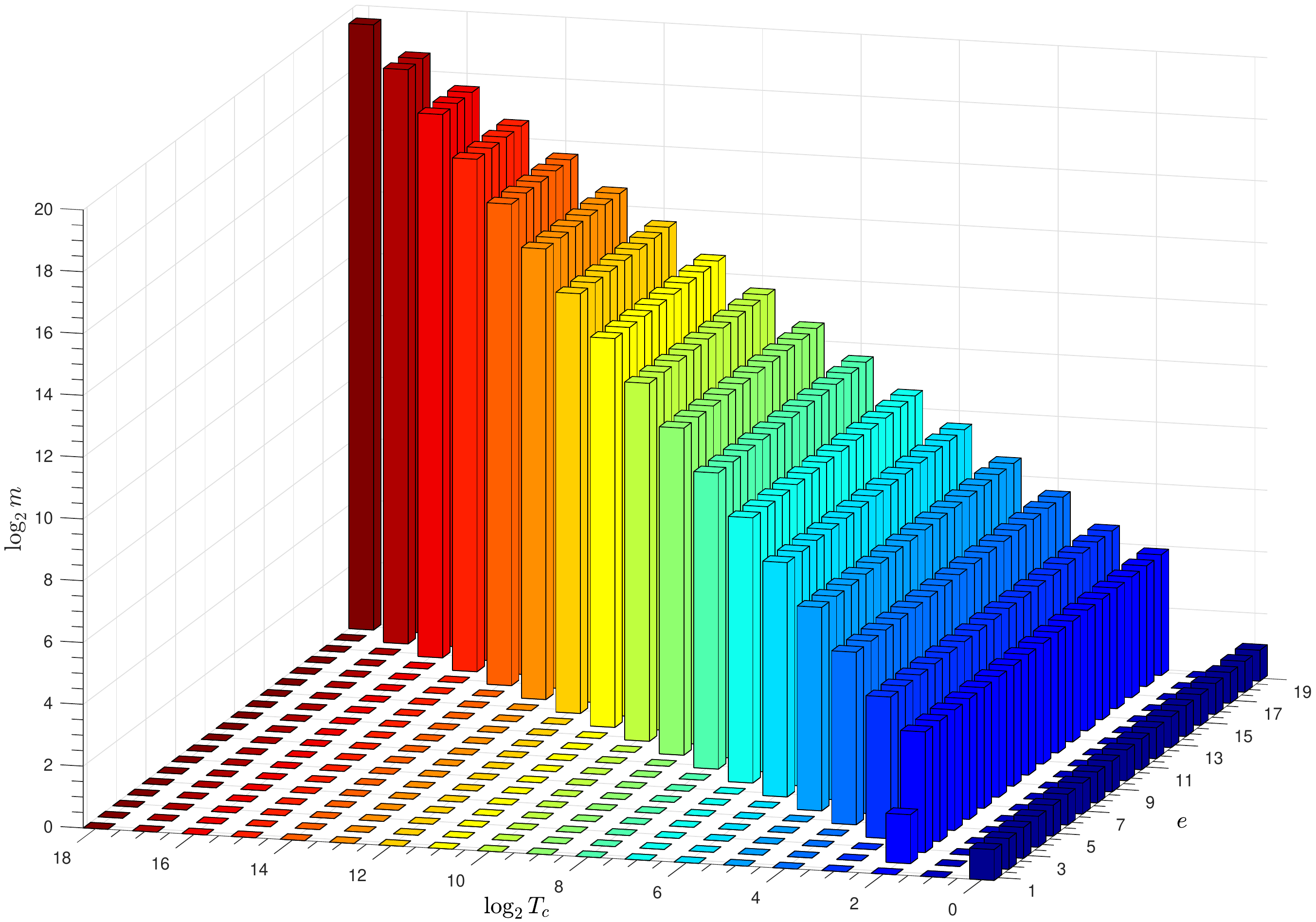}
		b)
	\end{minipage}\\
	\begin{minipage}{1.8\twofigwidth}
		\centering
		\includegraphics[width=1.8\twofigwidth]{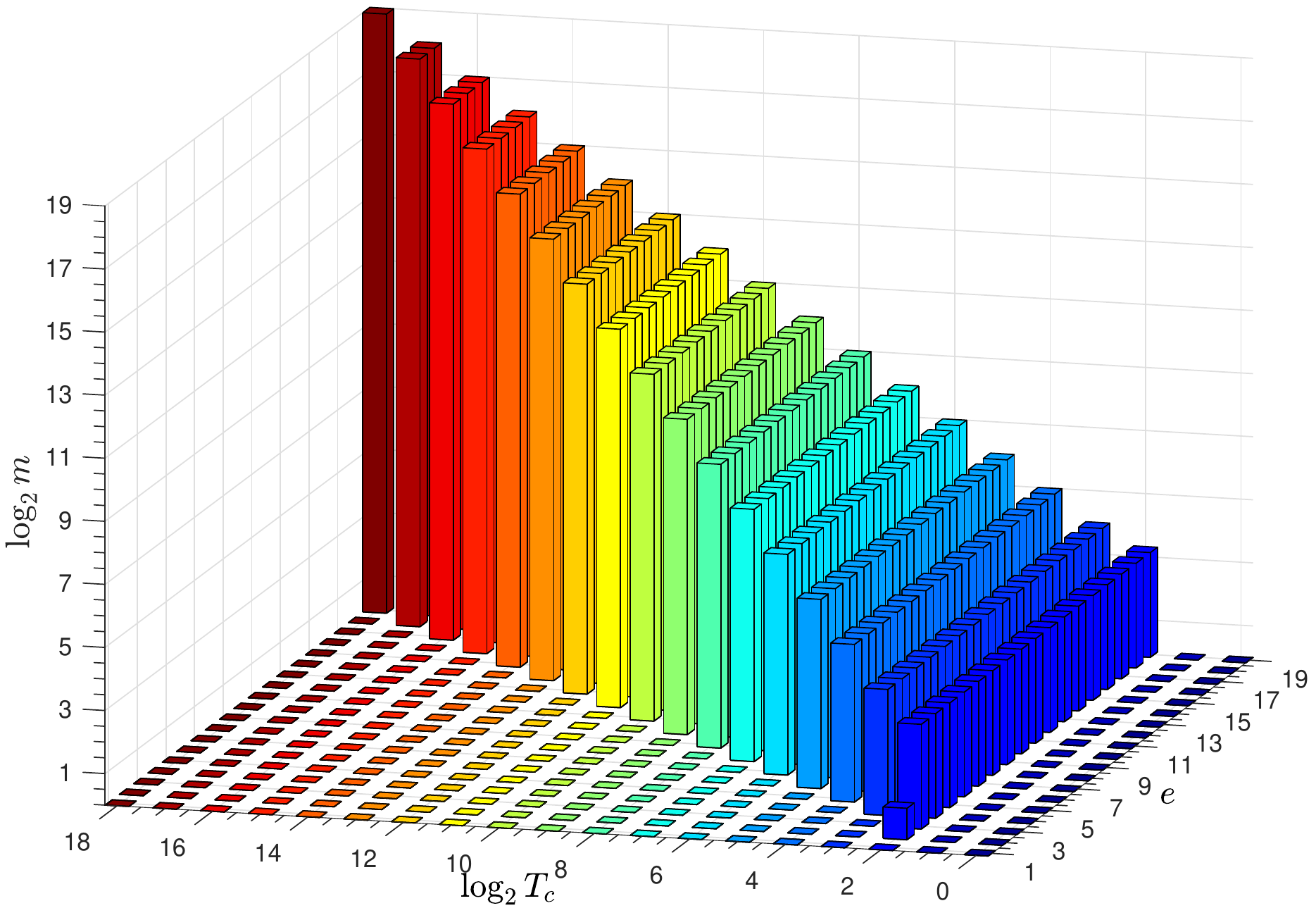}
		c)
	\end{minipage}
	\begin{minipage}{1.8\twofigwidth}
		\centering
		\includegraphics[width=1.8\twofigwidth]{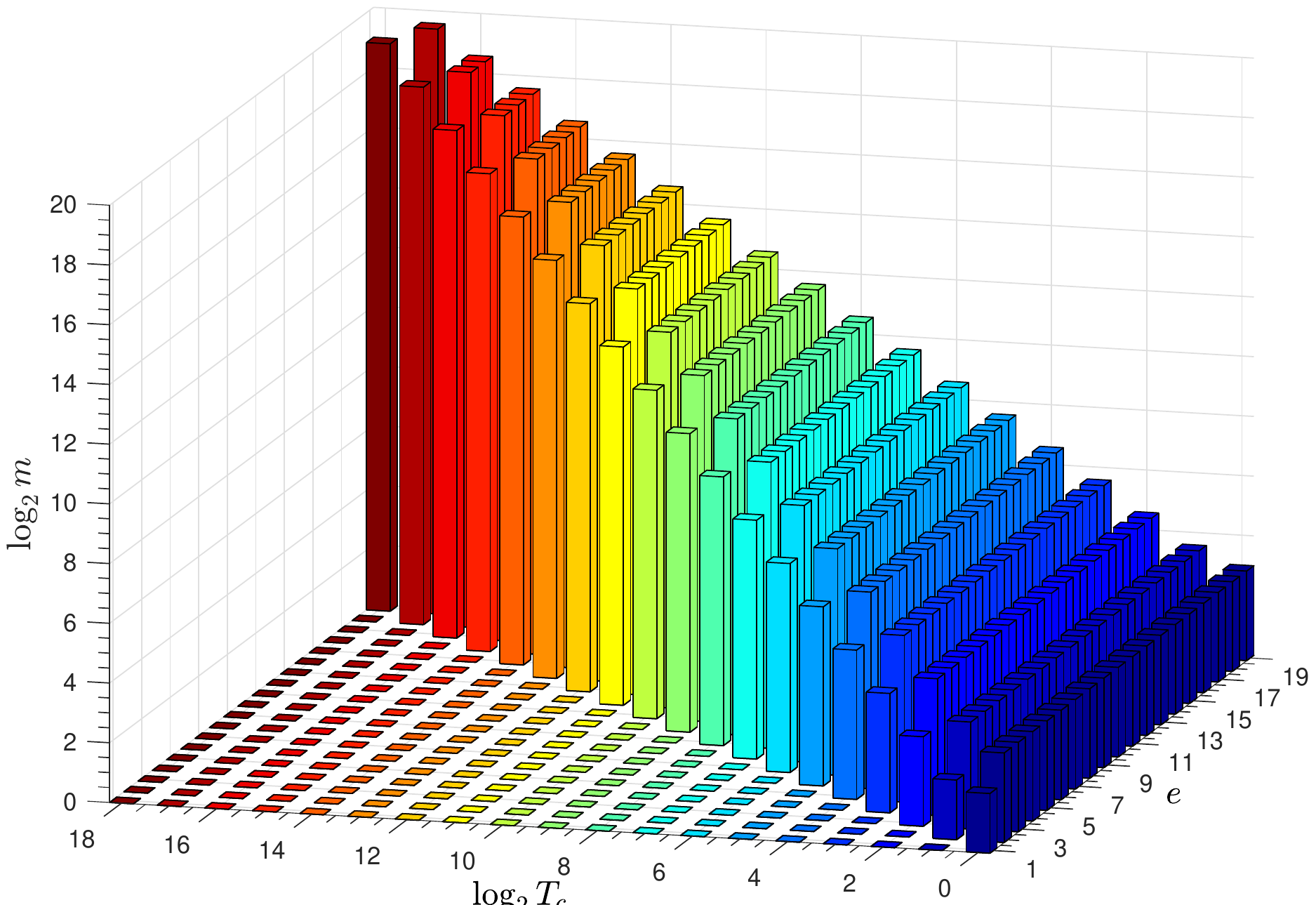}
		d)
	\end{minipage}
	\caption{The cycle distribution of Cat map~(\ref{eq:ArnoldInteger}) over $\mathbb{Z}_{2^e}$, $e=1\sim 19$: a) $(p, q)=(7, 8)$; b) $(p, q)=(6, 7)$; c) $(p, q)=(5, 7)$; d) $(p, q)=(12, 14)$.}
	\label{fig:perioddistributionab}
\end{figure*}

\begin{theorem}
When $T_c>T_{e_s}$,
\begin{equation}
2N_{T_c, e}=N_{2T_c, e+1},
\end{equation}
where $N_{T_c, e}$ is the number of cycles with period $T_c$ of Cat map~\eqref{eq:ArnoldInteger} over $(\mathbb{Z}_{2^{e}}, +, \cdot)$.
\label{the:numbercycle}
\end{theorem}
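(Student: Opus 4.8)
The plan is to trace, cycle by cycle, how the functional graph changes from $F_e$ to $F_{e+1}$ and to show that, as soon as a cycle is longer than $T_{e_s}$, it must split in exactly one way. Concretely, write $T_c=2^{k}T_1$ (this is forced by Property~\ref{prop:possibleLen}, since $T_c>T_{e_s}\ge T_1$ rules out $T_c=1$), so $k\ge1$. By Property~\ref{Prop:cycleExpan}, a cycle of length $2T_c$ in $F_{e+1}$ comes from expanding a cycle of length $\ell'$ in $F_e$ with $2T_c/\ell'\in\{1,2,3,4\}$, i.e.\ $\ell'\in\{2T_c,\,T_c,\,\tfrac23T_c,\,\tfrac12T_c\}$; among these $\tfrac23T_c$ is never an admissible cycle length (again Property~\ref{prop:possibleLen}), and $T_1$ divides each of $2T_c$, $T_c$ and $\tfrac12T_c$. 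The first thing I would record is the behaviour forced by this divisibility: exactly as in the proof of Property~\ref{Prop:cycleExpan}, if $T_1\mid\ell$ then going once around a cycle of length $\ell$ the induced map of Eq.~(\ref{eq:kxky}) on the offset set $\{(0,0),(0,1),(1,0),(1,1)\}$ has linear part $\textbf{C}^{\ell}\equiv\mathcal{I}_2\pmod2$, hence is a translation $v\mapsto v+w$, so such a cycle can only expand via the third expansion case (when $w$ is the zero offset: four cycles of length $\ell$) or the fourth one (when $w$ is nonzero: two cycles of length $2\ell$). Granting the key fact below — that \emph{every} cycle of $F_e$ of length greater than $T_{e_s}$ expands via the fourth case — the accounting closes at once: cycles of length $2T_c$ yield only cycles of length $4T_c$, cycles of length $\tfrac12T_c$ (being of length a multiple of $T_1$) cannot use the fifth case, and each of the $N_{T_c,e}$ cycles of length $T_c$ produces exactly two cycles of length $2T_c$; hence $N_{2T_c,e+1}=2N_{T_c,e}$. (For $e<e_s$ both sides vanish, since then $T_e\le T_{e_s}<T_c$ and $T_{e+1}\le T_{e_s}<2T_c$, so only $e\ge e_s$ is at issue, and $e=e_s$ is likewise vacuous.)

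It therefore remains to prove the key fact: if a cycle of $F_e$ has length $T_c>T_{e_s}$, its return translation $w$ is nonzero. Following Property~\ref{Prop:evolution} from the zero offset at a node $\textbf{X}=(x,y)$ of the cycle, one full round in $F_{e+1}$ returns the offset $w=2^{-e}\bigl((\textbf{C}^{T_c}-\mathcal{I}_2)\textbf{X}\bigr)\bmod2$, which is a genuine $\{0,1\}$-vector because $(\textbf{C}^{T_c}-\mathcal{I}_2)\textbf{X}\equiv\textbf{0}\pmod{2^e}$ ($T_c$ being a period of $\textbf{X}$). Thus $w\ne\textbf{0}$ is equivalent to $(\textbf{C}^{T_c}-\mathcal{I}_2)\textbf{X}\not\equiv\textbf{0}\pmod{2^{e+1}}$, i.e.\ to the assertion that the period of $\textbf{X}$ over $(\mathbb{Z}_{2^{e+1}},+,\cdot)$ is strictly larger than $T_c$ (and then, by the first paragraph, exactly $2T_c$). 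In other words, one must show that once a point's period exceeds $T_{e_s}$ it keeps doubling as the precision grows by one.

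To do this I would write $T_c=2^{m_0+x_0+l}T_1$ with $l\ge1$ — which is exactly the condition $T_c>T_{e_s}$, by the analysis in the proof of Theorem~\ref{the:threshold} — factor $\textbf{C}^{T_c}-\mathcal{I}_2=(\textbf{C}^{T_c/2}-\mathcal{I}_2)(\textbf{C}^{T_c/2}+\mathcal{I}_2)$, and expand both factors via Theorem~\ref{theorem:Cat} together with $G_{T_c}=G_{T_c/2}^2-2$ and $H_{T_c}=H_{T_c/2}G_{T_c/2}$ (Eqs.~(\ref{eq:Gsm+1}), (\ref{eq:HsGs})). This yields $(\textbf{C}^{T_c}-\mathcal{I}_2)\textbf{X}=(\textbf{C}^{T_c/2}+\mathcal{I}_2)\textbf{W}$ with $\textbf{W}:=(\textbf{C}^{T_c/2}-\mathcal{I}_2)\textbf{X}$, and here $\textbf{W}\not\equiv\textbf{0}\pmod{2^e}$ because the period of $\textbf{X}$ over $\mathbb{Z}_{2^e}$ is $T_c$ and not $T_c/2$. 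Lemmas~\ref{lemma:parity}, \ref{le:Gscondition} and~\ref{le:Hscondition} control the $2$-adic valuations of $\tfrac12G_{2^{j}T_1}-1$ and of $H_{2^{j}T_1}$ — the first increasing by $2$ and the second by $1$ with each unit increase of $j$ past the thresholds located in the proof of Theorem~\ref{the:threshold} — and inserting these valuations into the explicit entries of $\textbf{C}^{T_c/2}\pm\mathcal{I}_2$ pins down $v_2\bigl((\textbf{C}^{T_c}-\mathcal{I}_2)\textbf{X}\bigr)$ relative to $v_2(\textbf{W})$ and to $e$; combined with $v_2\bigl((\textbf{C}^{T_c}-\mathcal{I}_2)\textbf{X}\bigr)\ge e$ this forces that valuation to be exactly $e$, i.e.\ $w\ne\textbf{0}$.

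The main obstacle I anticipate is precisely this last valuation bookkeeping. One has to carry out a short case split on the parities of $p$ and $q$ (equivalently on $T_1\in\{1,2,3\}$ and on the $2$-adic orders $e_p,e_q$ that enter $h_e$ in Proposition~\ref{prop:periode}), and pay particular attention to the borderline case $l=1$, i.e.\ $T_c=2T_{e_s}$, where $v_2\bigl(\tfrac12G_{T_c/2}-1\bigr)$ is smallest and where a crude entrywise divisibility estimate on $\textbf{C}^{T_c/2}-\mathcal{I}_2$ is not by itself enough: there one must keep track of how the off-diagonal terms $pH_{T_c/2}$, $qH_{T_c/2}$ cancel against the $2\mathcal{I}_2$ part (using that $\tfrac12G_{2^{j}T_1}$ is odd for $j\ge1$, Lemma~\ref{lemma:parity}) rather than bounding each term in isolation.
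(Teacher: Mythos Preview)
Your plan is sound and converges on the same core claim the paper establishes: every point of exact period $T_c>T_{e_s}$ in $F_e$ has exact period $2T_c$ in $F_{e+1}$. Your front end is actually tidier than the paper's. The observation that when $T_1\mid\ell$ the once-around return map on the offset set $\{0,1\}^2$ is a pure translation $v\mapsto v+w$ (hence only expansion cases~3 or~4 can occur) cleanly explains why no $2T_c$-cycle in $F_{e+1}$ can descend from a cycle of length $T_c/2$ or $2T_c$ in $F_e$; the paper leaves this to the bare count $4N_{T_c,e}T_c=2T_c\cdot N_{2T_c,e+1}$ at the very end, invoking Property~\ref{Prop:cycleExpan} without isolating the translation structure.

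Where the paper differs is in the algebra behind the key fact. Instead of factoring $\textbf{C}^{T_c}-\mathcal{I}_2=(\textbf{C}^{T_c/2}-\mathcal{I}_2)(\textbf{C}^{T_c/2}+\mathcal{I}_2)$ and chasing $2$-adic valuations through both factors, the paper uses the scalar identity $\textbf{C}^{2n}=G_n\textbf{C}^{n}-\mathcal{I}_2$ from Proposition~\ref{prop:inverse}. Feeding in the exact valuations $e_{g,n}=v_2(\tfrac12 G_n-1)$ and $e_{h,n}=v_2(H_n)$ already computed in the proof of Theorem~\ref{the:threshold}, this identity turns the congruence $\textbf{C}^{n}\textbf{X}\equiv\textbf{X}\pmod{2^e}$ directly into $\textbf{C}^{2n}\textbf{X}\equiv\textbf{X}\pmod{2^{e+1}}$ (this is the displayed equation~(\ref{condi:g2nh2n})), and the \emph{non}-congruence $\textbf{C}^{T_c}\textbf{X}\not\equiv\textbf{X}\pmod{2^{e+1}}$ is then read off from $\textbf{C}^{T_c/2}\textbf{X}\not\equiv\textbf{X}\pmod{2^{e}}$ by the same substitution with $n=T_c/2$. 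The offset part $(a\cdot 2^e,b\cdot 2^e)$ is handled separately by the same identity using only that $G_n$ is even. This route sidesteps precisely the borderline bookkeeping you flag at $l=1$: the single identity absorbs the extra factor of two, so no entrywise cancellation analysis between $pH_{T_c/2}$, $qH_{T_c/2}$ and the $2\mathcal{I}_2$ contribution is needed. If you find your factorization getting unwieldy, swapping in $\textbf{C}^{2n}=G_n\textbf{C}^{n}-\mathcal{I}_2$ should shorten the argument considerably while preserving your cleaner structural framing.
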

\begin{proof}
As for any point $(x, y)$ of a cycle with the least period $T_c$ in $F_e$, one has
\begin{empheq}[left=\empheqlbrace]{align}
\textbf{C}^{T_c}\cdot
\begin{bmatrix}
x \\
y
\end{bmatrix}\bmod 2^{e}
&=\begin{bmatrix}
x \\
y
\end{bmatrix}\label{eq:Tc}  \\
\textbf{C}^{\frac{T_c}{2}}\cdot
\begin{bmatrix}
x \\
y
\end{bmatrix}\bmod 2^{e}
&\neq\begin{bmatrix}
x \\
y
\end{bmatrix}
\label{ineq:Tc}
\end{empheq}
by referring to Property~\ref{prop:multiplecycle} and Table~\ref{table:num2}.
Referring to Eq.~(\ref{eq:es2l}) and Eq.~(\ref{eq:hscondition3}), one has
\begin{equation}
\left\{
\begin{split}
e_{g, n} & =e_{s,g}+2x_0+2l, \\
e_{h, n} & =e_s-\hat{h}_e+l,
\end{split}
\right.
\end{equation}
where $e_{g, n}=\max\{x\mid \frac{1}{2}G_n \equiv 1 \bmod 2^x \}$,
$e_{h, n}=\max\{x\mid H_n\equiv 0 \bmod 2^x \}$,
$n=2^{l}\cdot T_{e_s}$, $l$ is a non-negative integer.
So, one can get
\begin{multline}
\begin{bmatrix}
	 x & 2\cdot p\cdot y-p\cdot q\cdot x\\
	 y & q\cdot x+\frac{1}{2}p\cdot q\cdot y
\end{bmatrix}
\cdot
\begin{bmatrix}
	\frac{1}{2}G_{2n}-1\\
H_{2n}
\end{bmatrix}\bmod 2^{e+1}
=0
\label{condi:g2nh2n}
\end{multline}
from Eq.~(\ref{eq:gnhn2ee}).
Setting $n=T_c$,
\begin{equation}
\textbf{C}^{2T_c} \cdot
\begin{bmatrix}
	x\\
	y
\end{bmatrix}\bmod 2^{e+1}=
\begin{bmatrix}
	x \\
	y
\end{bmatrix}.
\label{eq:Gtc2Tc}
\end{equation}
Referring to Lemma~\ref{prop:inverse}, and $G_n$ is even,
\begin{IEEEeqnarray}{rCl}
	\IEEEeqnarraymulticol{3}{l}{\textbf{C}^{2n}\cdot
		\begin{bmatrix}
			x\\
			y
		\end{bmatrix}\bmod 2^{e+1}}\nonumber\\* \quad
	& = & (G_{n}\cdot \textbf{C}^{n}-\mathcal{I}_2)
	\begin{bmatrix}
		x\\
		y
	\end{bmatrix}\bmod 2^{e+1}\nonumber  \\
	& = &
	\left (G_{n} \textbf{C}^{n}\begin{bmatrix}
		x\\
		y
	\end{bmatrix}-
	\begin{bmatrix}
		x \\
		y
	\end{bmatrix}\right)\bmod 2^{e+1} \label{eq:C2n}.
\end{IEEEeqnarray}
Therefore,
\begin{IEEEeqnarray}{rCl}
	\IEEEeqnarraymulticol{3}{l}{\textbf{C}^{2n}\cdot
		\begin{bmatrix}
			a\cdot 2^e\\
			b\cdot 2^e
		\end{bmatrix}\bmod 2^{e+1}}\nonumber\\* \quad
	&=&
	\left (G_{n} \textbf{C}^{n}\begin{bmatrix}
	a\cdot 2^e\\
b\cdot 2^e
\end{bmatrix}-
\begin{bmatrix}
	a\cdot 2^e \\
	b\cdot 2^e
\end{bmatrix}\right)\bmod 2^{e+1} \nonumber\\
   &=&
   	\begin{bmatrix}
   	a\cdot 2^e\\
   	b\cdot 2^e
   \end{bmatrix}\bmod 2^{e+1}, \label{eq:C2nab}
\end{IEEEeqnarray}
where $a, b\in \{0, 1\}$. Combing Eq.~(\ref{eq:Gtc2Tc}) and Eq.~(\ref{eq:C2nab}) with $n=T_c$,
one can get
	\[\textbf{C}^{2T_c}\cdot \begin{bmatrix}
			x+a\cdot 2^e\\
			y+b\cdot 2^e
		\end{bmatrix}\bmod 2^{e+1}
	=
	\begin{bmatrix}
		x+a\cdot 2^e\\
		y+b\cdot 2^e
	\end{bmatrix}.\]
	
Setting $n=\frac{T_c}{2}$ in the left-hand side of Eq.~(\ref{condi:g2nh2n}), one has
\[\textbf{C}^{T_c}\cdot \begin{bmatrix}
x\\
y
\end{bmatrix}\bmod 2^{e+1}
\neq
\begin{bmatrix}
x\\
y
\end{bmatrix}
\]
from Eq.~(\ref{ineq:Tc}) as $\frac{T_c}{2}\ge T_{e_s}$. Combing the above inequalities with Eq.~(\ref{eq:C2nab}), one has
\[
\textbf{C}^{T_c}\cdot \begin{bmatrix}
x+a\cdot 2^e\\
y+b\cdot 2^e
\end{bmatrix}\bmod 2^{e+1}
\neq
\begin{bmatrix}
x+a\cdot 2^e\\
y+b\cdot 2^e
\end{bmatrix}.
\]
So, $2T_c$ is the least period of $(x+a\cdot 2^e, y+b\cdot 2^e)$ in $F_{e+1}$ for any $a, b\in\{0, 1\}$. When $T_c>T_{e_s}$, from Property~\ref{Prop:cycleExpan}, one has
\[
4\cdot N_{T_c,e}\cdot T_c=N_{2T_c,e+1}\cdot 2\cdot T_c,
\]
namely
$2N_{T_c,e}=N_{2T_c,e+1}$.
\end{proof}

\begin{lemma}
As for any point $(x, y) $ in a cycle of length $n$ of Cat map~\eqref{eq:ArnoldInteger} over $(\mathbb{Z}_{2^{e}}, +, \cdot)$,
\begin{equation}
(G_n-2)\cdot
	\begin{bmatrix}
		x \\
		y
	\end{bmatrix}
   \bmod 2^{e}=0,
\label{eq:gn2e}
\end{equation}
where $n>1$.
\label{le:gn2}
\end{lemma}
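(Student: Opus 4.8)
The plan is to reduce the claim to the quadratic relation that every power of $\textbf{C}$ satisfies, namely $\textbf{C}^{2n}=G_{n}\,\textbf{C}^{n}-\mathcal{I}_2$; this is exactly the matrix identity established inside the proof of Proposition~\ref{prop:inverse}, and it follows from Theorem~\ref{theorem:Cat} together with $G_{2n}=G_n^{2}-2$ (the $m=1$ case of Eq.~(\ref{eq:Gsm+1})) and $H_{2n}=H_{n}G_{n}$ from Lemma~\ref{le:HG}.

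First I would translate the hypothesis: saying that $(x,y)$ lies in a cycle of length $n$ means in particular the single congruence $\textbf{C}^{n}[x,y]^{\intercal}\equiv[x,y]^{\intercal}\bmod 2^{e}$, and applying $\textbf{C}^{n}$ once more to it gives $\textbf{C}^{2n}[x,y]^{\intercal}\equiv[x,y]^{\intercal}\bmod 2^{e}$. Then I would evaluate the identity $\textbf{C}^{2n}=G_{n}\textbf{C}^{n}-\mathcal{I}_2$ on the vector $[x,y]^{\intercal}$ and reduce modulo $2^{e}$: the left-hand side is $\equiv[x,y]^{\intercal}$, while the right-hand side becomes $G_{n}[x,y]^{\intercal}-[x,y]^{\intercal}$ after substituting $\textbf{C}^{n}[x,y]^{\intercal}\equiv[x,y]^{\intercal}$. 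Rearranging gives $(G_{n}-2)[x,y]^{\intercal}\equiv 0\bmod 2^{e}$, which is Eq.~(\ref{eq:gn2e}).

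There is no genuine obstacle here: the argument is two lines once the algebraic identity is quoted. The only point worth a remark is the bookkeeping that $\textbf{C}^{2n}=G_n\textbf{C}^n-\mathcal{I}_2$ is to be read as an identity of \emph{integer} matrices valid for every $n\ge 1$ (the formally half-integral entries $\tfrac{1}{2}G_n\mp\tfrac{A-2}{2}H_n$ always assemble to integers, and $\det\textbf{C}=1$ makes every power of $\textbf{C}$ integral), so no parity hypothesis on $G_n$ or $H_n$ is actually needed. An essentially equivalent derivation that bypasses Proposition~\ref{prop:inverse}: since the cycle is also invariant under $\textbf{C}^{-1}$, one has $\textbf{C}^{-n}[x,y]^{\intercal}\equiv[x,y]^{\intercal}\bmod 2^{e}$, and adding this to $\textbf{C}^{n}[x,y]^{\intercal}\equiv[x,y]^{\intercal}$ while using $\textbf{C}^{n}+\textbf{C}^{-n}=G_{n}\mathcal{I}_2$ (obtained by adding Eq.~(\ref{eq:iterateCat}) and Eq.~(\ref{eq:InverseiterateCat})) gives $G_{n}[x,y]^{\intercal}\equiv 2[x,y]^{\intercal}\bmod 2^{e}$ directly.
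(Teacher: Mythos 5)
Your proof is correct, and at bottom it is the same computation as the paper's: both hinge on the doubling identity $\textbf{C}^{2n}=G_n\textbf{C}^{n}-\mathcal{I}_2$ (equivalently, $G_{2n}=G_n^2-2$ from Eq.~(\ref{eq:Gsm+1}) and $H_{2n}=G_nH_n$ from Lemma~\ref{le:HG}). The difference is one of packaging. The paper works in the reparametrized coordinates of Eq.~(\ref{eq:gnhn2ee}): it decomposes the vector $\left[\tfrac{1}{2}G_{2n}-1,\ H_{2n}\right]^\intercal$ as $G_n\left[\tfrac{1}{2}G_n-1,\ H_n\right]^\intercal+\left[G_n-2,\ 0\right]^\intercal$ and substitutes this into Eq.~(\ref{condi:g2nh2n}), so that the first summand is annihilated by the length-$n$ cycle condition and the leftover term yields $(G_n-2)[x,y]^\intercal\equiv 0$. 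You instead evaluate the matrix identity directly on $[x,y]^\intercal$ after noting that $\textbf{C}^{2n}[x,y]^\intercal\equiv[x,y]^\intercal\bmod 2^e$; this is cleaner and, importantly, more self-contained: Eq.~(\ref{condi:g2nh2n}) is derived inside Theorem~\ref{the:numbercycle} under extra hypotheses (a modulus of $2^{e+1}$ and $T_c>T_{e_s}$), whereas the lemma is asserted for every cycle of length $n>1$ over $\mathbb{Z}_{2^e}$, so the paper's appeal to that equation here is somewhat loose. Your side remarks are also sound: the identity $\textbf{C}^{2n}=G_n\textbf{C}^{n}-\mathcal{I}_2$ holds over the integers for all $n\ge 1$ with no parity assumption, and the variant via $\textbf{C}^{n}+\textbf{C}^{-n}=G_n\mathcal{I}_2$ (adding Eq.~(\ref{eq:iterateCat}) and Eq.~(\ref{eq:InverseiterateCat})) reaches the same conclusion in one step.
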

\begin{proof}
Substituting
\begin{equation*}
\begin{split}
\begin{bmatrix}
\frac{1}{2}G_{2n}-1\\
H_{2n}
\end{bmatrix}
& =
\begin{bmatrix}
\frac{1}{2}G_{n}^2-2\\
 G_n H_{n}
\end{bmatrix}\\
&=
G_n
\begin{bmatrix}
\frac{1}{2}G_{n}-1   \\
H_{n}
\end{bmatrix}+
\begin{bmatrix}
G_n-2 \\
0
\end{bmatrix}
\end{split}
\end{equation*}
into Eq.~(\ref{condi:g2nh2n}), one can obtain
Eq.~(\ref{eq:gn2e}).
\end{proof}

\begin{lemma}
When $e>e_0$, any point $(x, y) $ in a cycle of length $T_c$ of Cat map~\eqref{eq:ArnoldInteger} over $(\mathbb{Z}_{2^{e}}, +, \cdot)$ satisfies
\begin{equation}
\begin{bmatrix}
x \\
y
\end{bmatrix}
\bmod 2=0,
\label{eq:2x2y}
\end{equation}
where
	\begin{equation*}
	e_0=
	\begin{cases}
	\max(e_p, e_q)        & \mbox{if } T_c=1; \\
	e_{s, g}+1            & \mbox{if } l_c=0, T_1\neq 1;\\
	e_{s, g}+2\cdot l_c-1 & \mbox{if } 1\leq l_c \leq s+1,
	\end{cases}
	\end{equation*}
$2^{l_c}\cdot T_1=T_c$, and $2^s\cdot T_1=T_{e_s}$.
\label{le:e1}
\end{lemma}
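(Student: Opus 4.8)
The plan is to treat the three cases of $e_0$ separately, in each case taking an arbitrary point $(x,y)$ on a cycle of length $T_c$ of $F_e$ with $e>e_0$ and showing that $x$ and $y$ are both even. When $T_c=1$ the point is a fixed point, so $\textbf{C}\cdot[x,y]^\intercal\equiv[x,y]^\intercal\bmod 2^{e}$, which, exactly as in the proof of Property~\ref{prop:possibleLen}, is equivalent to $q\cdot x\equiv 0$ and $p\cdot y\equiv 0\bmod 2^{e}$; writing $p=2^{e_p}p'$ and $q=2^{e_q}q'$ with $p',q'$ odd, this forces $x\equiv 0\bmod 2^{e-e_q}$ and $y\equiv 0\bmod 2^{e-e_p}$, and since $e>e_0=\max(e_p,e_q)$ both exponents are at least one.

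For $T_c>1$ the starting point is Lemma~\ref{le:gn2}: every point of the cycle satisfies $(G_{T_c}-2)\cdot[x,y]^\intercal\equiv 0\bmod 2^{e}$, hence $[x,y]^\intercal\equiv 0\bmod 2^{\,e-v_2(G_{T_c}-2)}$, where $v_2$ denotes the $2$-adic valuation. By Proposition~\ref{prop:Gt1even} and~\eqref{eq:gt1} the integer $\frac{1}{2}G_{T_1}-1$ is even, so (using Lemma~\ref{lemma:parity} when $l_c\ge 1$) $\frac{1}{2}G_{T_c}$ is odd and $v_2(G_{T_c}-2)=1+e_{g,T_c}$, where $e_{g,T_c}:=v_2(\frac{1}{2}G_{T_c}-1)$. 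Iterating $G_{2n}=G_n^2-2$ out of the base index $2^{m_0}T_1$ and applying Lemmas~\ref{le:Gscondition} and~\ref{le:Gms=1} then gives $e_{g,T_c}=e_{s,g}+2(l_c-m_0)$ for $l_c\ge m_0$ (and $e_{g,T_1}=1$ when $l_c=0$, $m_0=1$), with $e_{s,g}$ and $m_0$ as defined in the proof of Theorem~\ref{the:threshold}; likewise $v_2(H_{T_c})=v_2(H_{T_1})+l_c$ by Lemmas~\ref{le:HG} and~\ref{le:Hscondition}.

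These valuations settle everything except one sub-case. If $l_c=0$ (so $T_c=T_1\in\{2,3\}$) then $v_2(G_{T_1}-2)=e_{s,g}+1=e_0$ when $m_0=0$, and $v_2(G_{T_1}-2)=2<e_{s,g}+1=e_0$ when $m_0=1$ (as $e_{s,g}\ge 3$ by Lemma~\ref{le:Gms=1}); in either sub-case $e-v_2(G_{T_1}-2)\ge 1$. If $1\le l_c\le s+1$ and $m_0=1$ then $v_2(G_{T_c}-2)=1+e_{s,g}+2(l_c-1)=e_{s,g}+2l_c-1=e_0$, again giving $e-v_2(G_{T_c}-2)\ge 1$. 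The remaining sub-case, $1\le l_c\le s+1$ with $m_0=0$, is where Lemma~\ref{le:gn2} by itself is too weak ($v_2(G_{T_c}-2)=e_{s,g}+2l_c+1>e_0$); there I would use the equivalent form~\eqref{eq:gnhn2ee} at $n=T_c$ (valid because $G_{T_c}$ and $H_{T_c}$ are even), whose first row reads $(\frac{1}{2}G_{T_c}-1)x+\frac{1}{2}H_{T_c}(2py-pqx)\equiv 0\bmod 2^{e}$. Assuming $x$ odd (the case of odd $y$ being symmetric via the second row), the first summand has valuation $e_{g,T_c}=e_{s,g}+2l_c$; combining the identity $G_{T_c}^2-4=pq(pq+4)H_{T_c}^2$ with $v_2(G_{T_c}+2)=2$ (forced because $\frac{1}{2}G_{T_c}$ is odd with $v_2(\frac{1}{2}G_{T_c}-1)\ge 2$), and reading off the parity of $2py-pqx$ from the parities of $p$ and $q$, one finds that the second summand has a strictly smaller, explicitly computable valuation $v^{\star}$; hence the left-hand side has valuation exactly $v^{\star}$, divisibility by $2^{e}$ forces $e\le v^{\star}$, and a short inequality check yields $v^{\star}\le e_0$, contradicting $e>e_0$. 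Therefore $x$, and symmetrically $y$, is even.

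The real work is concentrated in this last sub-case: one has to compute, in every parity pattern of $(p,q)$, the valuations $v_2(G_{T_c}-2)$, $v_2(\frac{1}{2}G_{T_c}-1)$, $v_2(H_{T_c})$, and, most delicately, $v_2(2py-pqx)$ and $v_2(2qx+pqy)$, the latter two depending on the valuation of the even coordinate, which is itself constrained by the second row of~\eqref{eq:gnhn2ee} and by $T_c$ being the \emph{least} period; and then check, pattern by pattern, that the bound on $e$ so obtained never exceeds the conservatively stated $e_0$. Everything else collapses quickly onto Lemma~\ref{le:gn2} and the fixed-point equations.
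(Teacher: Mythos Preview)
Your proof matches the paper's in the essentials: the case $T_c=1$ is handled via the fixed-point equations, and for $T_c>1$ the key tool is Lemma~\ref{le:gn2}, i.e.\ $(G_{T_c}-2)\,[x,y]^\intercal\equiv 0\bmod 2^{e}$, together with the valuation computations for $\tfrac12 G_{2^{l_c}T_1}-1$ coming from Lemmas~\ref{le:Gscondition} and~\ref{le:Gms=1}. For $l_c=0$ both treatments coincide; for $1\le l_c\le s+1$ with $m_0=1$ your computation $v_2(G_{T_c}-2)=e_{s,g}+2l_c-1=e_0$ is exactly what the paper obtains (up to what appears to be an index typo $2^{l_c+1}\!\to 2^{l_c}$ in the paper).

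Where you go further is the sub-case $m_0=0$, $1\le l_c\le s+1$. You are right that here $v_2(G_{T_c}-2)=e_{s,g}+2l_c+1>e_0$, so Lemma~\ref{le:gn2} alone cannot yield the stated threshold; the paper's proof, which in this range only invokes Lemma~\ref{le:gn2} after ``setting $m_0=1$'' in~\eqref{eq:escondition2}, does not separately treat $m_0=0$ and so does not establish the claimed $e_0$ in that sub-case. Your plan of falling back to the full two-row system~\eqref{eq:gnhn2ee} and comparing the $2$-adic valuations of the two summands is the natural repair; for instance for $(p,q)=(1,1)$, $l_c=1$ one finds $v_2(\tfrac12 G_6-1)=5$ versus $v_2(\tfrac12 H_6\,(2py-pqx))=3$ when $x$ is odd, forcing $e\le 3<e_0=4$. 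So your proposal is not a different route so much as a strictly more careful execution of the paper's own method; the extra work you flag in the last paragraph (splitting on the parity pattern of $(p,q)$ and bounding the valuation of the $H$-term against $e_0$) is genuine and is simply absent from the paper's argument.
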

\begin{proof}
When $T_c=1$ and $e\ge\max(e_p, e_q)+1$, condition~(\ref{eq:2x2y}) should exist to satisfy Eq.~(\ref{eq:gnhn2epq}).
Setting $m_0=0$ in Eq.~(\ref{eq:escondition2}), one has
\begin{equation*}
\left\{
\begin{split}
\frac{1}{2}G_{T_1} & \equiv 1 \bmod 2^{e_1},     \\
\frac{1}{2}G_{T_1} & \not\equiv  1 \bmod 2^{e_1+1},
\end{split}
\right.
\end{equation*}
where
\begin{equation*}
e_1=
\begin{cases}
e_{s, g}         & \mbox{if } \frac{1}{2}G_{T_1}\equiv 1 \bmod 2^2;\\
1 & \mbox{otherwise},
\end{cases}
\end{equation*}
Referring to Lemma~\ref{le:gn2}, if $T_1\neq 1$ and $e\ge e_{s, g}+2\ge e_1+2$,
any point of a cycle of length $T_1$ should satisfy condition~(\ref{eq:2x2y}) to
meet Eq.~(\ref{eq:gn2e}).
Setting $m_0=1$ in Eq.~(\ref{eq:escondition2}), one has
\begin{equation*}
\left\{
\begin{split}
\frac{1}{2}G_{2T_1} & \equiv 1 \bmod 2^{e_{s,g}},     \\
\frac{1}{2}G_{2T_1} & \not\equiv  1 \bmod 2^{e_{s, g}+1}.
\end{split}
\right.
\end{equation*}
Referring to Lemma~\ref{le:Gscondition}, one can further get
\begin{equation*}
\left\{
\begin{split}
\frac{1}{2}G_{2^{l_c+1}T_1} & \equiv 1 \bmod 2^{e_{s,g}+2l_c}   \\
\frac{1}{2}G_{2^{l_c+1}T_1} & \not\equiv  1 \bmod 2^{e_{s, g}+2l_c+1}
\end{split}
\right.
\end{equation*}
for $l_c=1\sim s+1$.
Referring to Lemma~\ref{le:gn2}, if $e\ge e_{s, g}+2\cdot l_c$,
any point of a cycle of length $2^{l_c}\cdot T_1$ should satisfy condition~(\ref{eq:2x2y}) to
meet Eq.~(\ref{eq:gn2e}).
\end{proof}

From Theorem~\ref{the:numbercycle}, one can see that the number of cycles of various lengths in $F_e$ can be easily deduced from that of $F_{e_s}$ when $e>e_s$. As for any cycle with length $T_c\leq T_{e_s}$, the threshold values of $e$ in condition~(\ref{cond:eL+1}) are given to satisfy Eq.~(\ref{prop:Ntce2}). As for the cycles with length $T_c>T_{e_s}$,
the threshold values can be directly calculated with Theorem~\ref{the:numbercycle}.
As shown in
Theorem~\ref{the:NTc}, as for every possible length of cycle, the number of the cycles
of the length become a fixed number when $e$ is sufficiently large.
The strong regular graph patterns demonstrated in Table~\ref{table:pqeTc} and Fig.~\ref{fig:perioddistributionab} are rigorously proved in Theorems~\ref{the:numbercycle}, \ref{the:NTc}. Now, we can see that
the exponent value of the distribution function of cycle lengths of $F_e$ is fixed two when $e$ is sufficiently large.

\begin{theorem}
When
\begin{equation}
e\ge
\begin{cases}
\max(e_p, e_q)        & \mbox{if } T_c=1; \\
e_{s, g}+1            & \mbox{if } l_c=0, T_1\neq 1;\\
e_{s, g}+2\cdot l_c-1 & \mbox{if } 1\leq l_c \leq s+1;\\
e_{s, g}+s+1+l_c      & \mbox{if } l_c \ge s+2,
\end{cases}
\label{cond:eL+1}
\end{equation}
one has
	\begin{equation}
	N_{T_c, e}=N_{T_c, e+l},
	\label{prop:Ntce2}
	\end{equation}
where
$2^{l_c}\cdot T_1=T_c$, $2^s\cdot T_1=T_{e_s}$, and
$l$ is any positive integer.
\label{the:NTc}
\end{theorem}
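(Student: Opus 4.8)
The plan is to follow the case split displayed in \eqref{cond:eL+1}: handle the ``small'' lengths (the first three cases, namely $T_c=1$, $l_c=0$ with $T_1\neq1$, and $1\le l_c\le s+1$) by an equivariance argument, and the ``large'' lengths $l_c\ge s+2$ by iterating Theorem~\ref{the:numbercycle}. Recall that by Property~\ref{prop:onecycle} each $F_e$ is a disjoint union of cycles, and by Property~\ref{prop:possibleLen} the only cycle lengths that can occur are $1$ and $2^kT_1$; so it is enough to identify, for each such length, the precision past which its multiplicity stops changing.

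First I would set up the ``doubling bijection.'' The map $\textbf{X}\mapsto 2\textbf{X}$ is $\textbf{C}$-equivariant and carries $\mathbb{Z}_{2^e}^2$ bijectively onto the set of points of $\mathbb{Z}_{2^{e+1}}^2$ both of whose coordinates are even; by Property~\ref{prop:isomorphism} it sends a cycle of least period $T_c$ in $F_e$ to a cycle of the \emph{same} least period $T_c$ in $F_{e+1}$, and it sends distinct cycles to distinct cycles (if $2\textbf{X}'\equiv\textbf{C}^j(2\textbf{X})\bmod 2^{e+1}$ then $\textbf{X}'\equiv\textbf{C}^j\textbf{X}\bmod 2^{e}$). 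Hence $N_{T_c,e}\le N_{T_c,e+1}$ for every $e$, and equality holds once every length-$T_c$ cycle of $F_{e+1}$ consists entirely of even-coordinate points, because then — using that $\textbf{C}$ is a bijection, Property~\ref{prop:bijective} — each such cycle is the $\textbf{X}\mapsto 2\textbf{X}$ image of a length-$T_c$ cycle of $F_e$.

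The next step is simply to quote Lemma~\ref{le:e1} at precision $e+1$: when $e+1>e_0$, with $e_0$ as in Lemma~\ref{le:e1} — which is exactly the threshold appearing in the first three cases of \eqref{cond:eL+1} — every point of every length-$T_c$ cycle of $F_{e+1}$ has both coordinates even. Combined with the previous paragraph this gives $N_{T_c,e}=N_{T_c,e+1}$ for all $e$ above that threshold, and iterating on $l$ yields \eqref{prop:Ntce2} for $T_c=1$, for $l_c=0$ with $T_1\neq1$, and for $1\le l_c\le s+1$. For the remaining lengths $T_c=2^{l_c}T_1$ with $l_c\ge s+2$, every one of $2^{l_c}T_1,2^{l_c-1}T_1,\dots,2^{s+1}T_1$ exceeds $T_{e_s}$, so Theorem~\ref{the:numbercycle} applies at each of these levels and telescopes to
\[
N_{2^{l_c}T_1,\,e}=2^{\,l_c-s-1}\,N_{2^{s+1}T_1,\;e-l_c+s+1}.
\]
The multiplicity $N_{2^{s+1}T_1,\,m}$ on the right was already shown (the case $l_c=s+1$) to be constant for $m$ above a fixed bound; substituting $m=e-l_c+s+1$ shows $N_{2^{l_c}T_1,\,e}$ is constant for $e$ at least $e_{s,g}+s+l_c$, which is no larger than the bound $e_{s,g}+s+1+l_c$ stated in \eqref{cond:eL+1}. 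Together with Property~\ref{prop:possibleLen} this exhausts all cycle lengths, and \eqref{prop:Ntce2} follows in every case.

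The crux is the step that quotes Lemma~\ref{le:e1}: the whole theorem reduces to the assertion that past the stated precision the points of a length-$T_c$ cycle are forced to have even coordinates. That in turn rests on Lemma~\ref{le:gn2} — the identity $(G_n-2)\,[x,y]^\intercal\equiv0\bmod 2^{e}$ on a cycle — together with the precise $2$-adic valuations of $\tfrac12G_{2^mT_1}-1$ given by Lemmas~\ref{le:Gscondition} and~\ref{le:Gms=1}; comparing those valuations with the value of $e_s$ coming from Theorem~\ref{the:threshold} is where the explicit constants in \eqref{cond:eL+1} are produced, and why they are stated conservatively. A smaller point that still needs care is that $\textbf{X}\mapsto 2\textbf{X}$ must preserve \emph{least} periods, not merely periods (this is exactly Property~\ref{prop:isomorphism}), and that the telescoping of Theorem~\ref{the:numbercycle} remains valid when some intermediate multiplicities are zero, since then both sides of $2N_{T_c,e}=N_{2T_c,e+1}$ vanish.
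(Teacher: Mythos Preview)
Your proposal is correct and follows essentially the same route as the paper: the inequality $N_{T_c,e}\le N_{T_c,e+1}$ via the doubling isomorphism of Property~\ref{prop:isomorphism}, the reverse inequality for the first three cases via Lemma~\ref{le:e1} (even coordinates force the cycle to descend), and the fourth case by telescoping Theorem~\ref{the:numbercycle} down to $l_c=s+1$. Your indexing (applying Lemma~\ref{le:e1} at precision $e+1$ rather than $e$) and your explicit telescoping identity are in fact cleaner than the paper's somewhat compressed write-up, and your remarks on preservation of \emph{least} periods and on the zero-multiplicity case fill in points the paper leaves implicit.
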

\begin{proof}
From Property~\ref{prop:isomorphism}, one can conclude that the number of cycles of length
	$T_c$ in $F_{e+1}$ is larger than or equal to that in $F_{e+1}$, i.e.
	$N_{T_c, e}\le N_{T_c, e+1}$ for any $e$.
	
Referring to Lemma~\ref{le:e1}, as for any point $(x, y)$ in a cycle of length $T_c=2^{l_c}\cdot T_1$,
\begin{equation*}
\begin{bmatrix}
\frac{x}{2} & 2\cdot p\cdot \frac{y}{2}-p\cdot q\cdot \frac{x}{2}\\
\frac{y}{2} & 2\cdot q\cdot \frac{x}{2}+p\cdot q\cdot \frac{y}{2}
\end{bmatrix}
\cdot
\begin{bmatrix}
\frac{1}{2}G_{T_c}-1\\
\frac{1}{2} H_{T_c}
\end{bmatrix}\bmod 2^{e-1}
=0
\end{equation*}
if $e$ satisfy condition~(\ref{cond:eL+1}), meaning that
$N_{T_c, e}\ge N_{T_c, e+1}$.
So $N_{T_c, e}=N_{T_c, e+1}$.
$N_{2T_{e_s}, e}=N_{2T_{e_s}, e+l}$ for any $l$.
From Theorem~\ref{the:numbercycle}, when $l_c\ge s+2$,
$e\ge e_{s, g}+2s+2+l_c-s-1=e_{s, g}+s+1+l_c$,
\begin{equation*}
N_{T_c, e}=2^{l_c-s-1}\cdot N_{T_{e_s}, e^\star}=N_{T_c, e+l}.
\end{equation*}
\end{proof}

\section{Application of the cycle structure of Cat map}
\label{sec:apply}

In this section, we briefly discuss application of the obtained results on the cycle structure of Cat map
in theoretical analysis and cryptographical application.


The infinite number of unstable periodic orbits (UPO's) of a chaotic system constitute its skeleton \cite{Davidchack1999PRE}. In a finite-precision domain, any periodic orbit is a cycle (See Property~\ref{prop:onecycle}). Its stability is dependent on change trend of its distance with the neighboring states in the phase space. To show the relative positions among different cycles, we depicted real image of the SMN shown in Fig.~\ref{fig:SMNcat}c) in Fig.~\ref{fig:upo}. Note that some states may be located in different cycles due to the finite-precision effect. The main result obtained in \cite{Catchen2013period2e}, i.e. Eq.~(\ref{eq:numberMaps}), describes the number of different possible Cat maps owning a specific period, which has no any relationship with the number of cycles and distance between a cycle and its neighboring states. So, it cannot help to identify unstable periodic orbits of the original chaotic Cat map at all.

\begin{figure}[!htb]
	\centering
	\begin{minipage}{1.5\twofigwidth}
		\centering
		\includegraphics[width=1.5\twofigwidth]{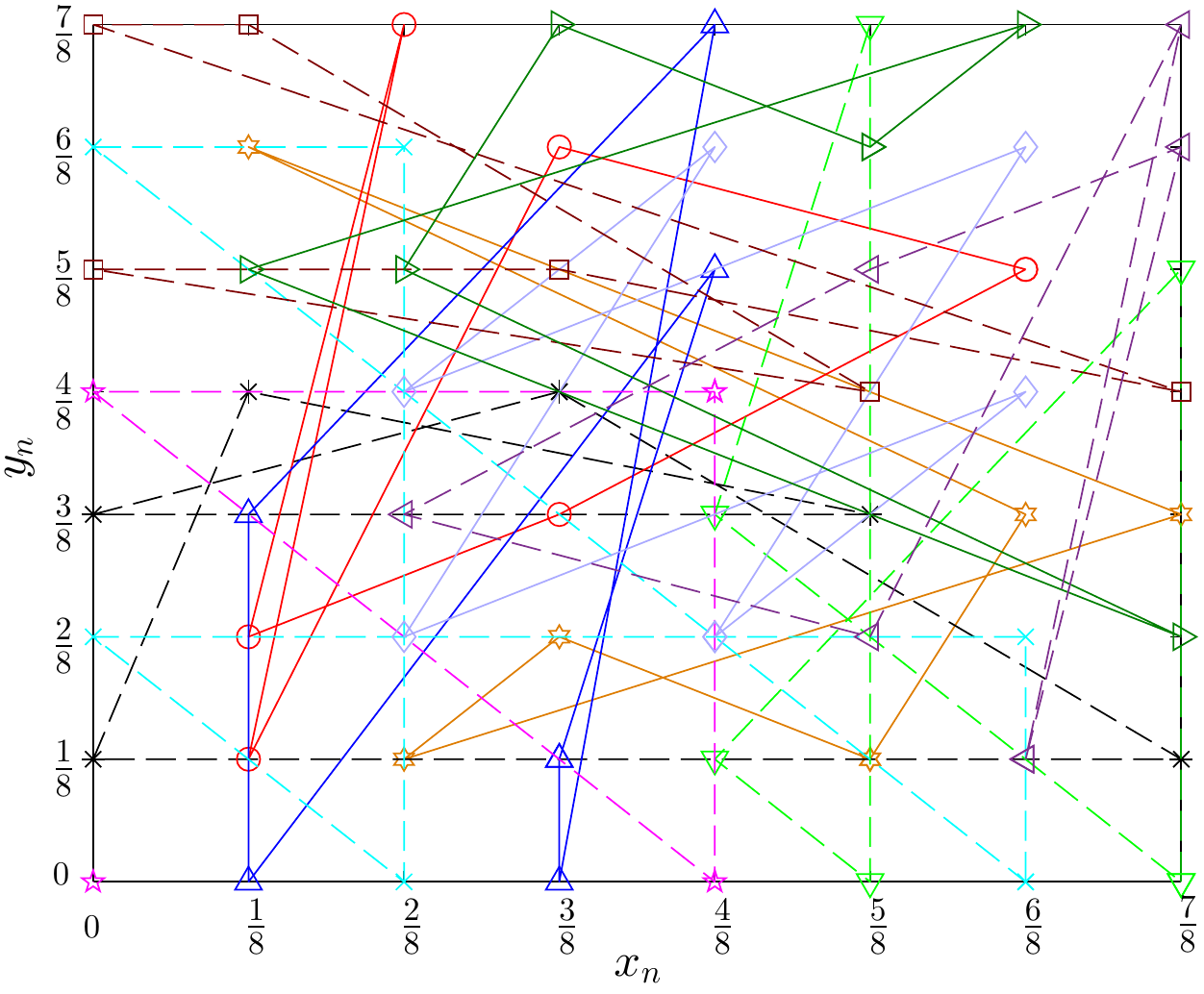}		
	\end{minipage}
\caption{Portrait of Cat map~(\ref{eq:ArnoldInteger}) with $N={2^3}$ where $(p, q)=(1, 3)$.}
\label{fig:upo}
\end{figure}

But, knowledge of the cycle structure of Cat map can be used in the following aspects:
\begin{itemize}
\item Disclosing some skeleton of Cat map in other domains. As shown in Fig.~\ref{fig:SMNcat}, when $p$ and $q$ are both odd, there is a cycle of period 3 in SMN of
Cat map implemented in Galois ring $\mathbb{Z}_{2^{\hat{e}}}$. This agree with the classic statement ``Period three implies Chaos" given in \cite{lity:3:AMM75}.

\item Discarding the initial conditions corresponding to very short period.
This problem is vitally important in real applications in \cite{Falcioni:PRNS:PRE2005,Curiac:path:DSJ2015}.

\item Avoiding the points resulting in collision. As shown in Fig.~\ref{fig:SMNcat}, different points in the same cycle may evolve into the same point, which may result in
collision for the hashing scheme proposed in \cite{Kanso:hash:ND2015}.

\item Severing as a prototype for analyzing dynamics degradation in chaotic maps implemented in digital computer discussed in \cite{Boghosian:Pathology:ATS19}.


\end{itemize}

\section{Conclusion}

This paper analyzed the structure of the 2-D generalized discrete Arnold's Cat map by its functional graph.
The explicit formulation of any iteration of the map was derived. Then, the precise cycle distribution of the generlized discrete Cat map in a fixed-point arithmetic domain was derived perfectly. The seriously regular patterns of the phase space of Cat map implemented in digital computer were reported to dramatically different from that in the infinite-precision torus. There exists non-negligible number of short cycles no matter what the period of the whole Cat map is. The analysis method can be extended to higher-dimensional Cat map and other iterative chaotic maps.

\bibliographystyle{IEEEtran_doi}
\bibliography{CAT}

\renewenvironment{IEEEbiography}[1] {\IEEEbiographynophoto{#1}}  {\endIEEEbiographynophoto}

\end{document}